\title{Nearest Neighbor Complexity and Boolean Circuits}
\author{Mason DiCicco\footnote{Worcester Polytechnic Institute. [mtdicicco@wpi.edu, dreichman@wpi.edu]},  \ Vladimir Podolskii\footnote{Tufts University. [vladimir.podolskii@tufts.edu]}, \ Daniel Reichman$^*$}
\begin{document}

\maketitle

\begin{abstract}
    
A nearest neighbor representation of a Boolean function $f$ is a set of vectors (anchors) labeled by $0$ or $1$ such that $f(\vec{x}) = 1$ if and only if the closest anchor to $\vec{x}$ is labeled by $1$. This model was introduced by \citet*{hajnal2022nearest}, who studied bounds on the minimum number of anchors required to represent Boolean functions under different choices of anchors (real vs. Boolean vectors) as well as the more expressive model of $k$-nearest neighbors.

We initiate the systematic study of the representational power of nearest and $k$-nearest neighbors through Boolean circuit complexity. To this end, we establish a close connection between Boolean functions with polynomial nearest neighbor complexity and those that can be efficiently represented by classes based on linear inequalities -- \emph{min-plus polynomial threshold functions} -- previously studied in relation to threshold circuits. This extends an observation of \cite{hajnal2022nearest}. As an outcome of this connection we obtain exponential lower bounds on the $k$-nearest neighbors complexity of explicit $n$-variate functions, assuming $k \leq n^{1-\epsilon}$. Previously, no superlinear lower bound was known for any $k>1$. 

Next, we further extend the connection between nearest neighbor representations and circuits to the $k$-nearest neighbors case. As a result, we show that proving superpolynomial lower bounds for the $k$-nearest neighbors complexity of an explicit function for arbitrary $k$ would require a breakthrough in circuit complexity. In addition, we prove an exponential separation between the nearest neighbor and $k$-nearest neighbors complexity (for unrestricted $k$) of an explicit function. These results address questions raised by \cite{hajnal2022nearest} of proving strong lower bounds for $k$-nearest neighbors and understanding the role of the parameter $k$. Finally, we devise new bounds on the nearest neighbor complexity for several explicit functions. 

\end{abstract}

\section{Introduction}
\label{sec:intro}

The capacity of a learning model to economically represent a target function is essential in designing machine learning algorithms. For example, the expressive power of machine learning models such as Boolean circuits and neural networks has received considerable attention, with a recent focus on properties that allow for efficient representations~\citep{cybenko1989approximation,hornik1989multilayer,daniely2016toward,hellerstein2007pac,eldan2016power,hajnal1993threshold,telgarsky2016benefits,martens2013representational}. Here, we study the nearest neighbor and $k$-nearest neighbors rules over the Boolean hypercube and seek to understand which Boolean functions can be represented efficiently by these rules.  

A \emph{nearest-neighbor representation} of a function $f$ is a set of vectors, called ``anchors," say $S=P \cup N$ such that $f(\vec{x})=1$ if and only if the nearest anchor to $\vec{x}$ (under the Euclidean distance metric) belongs to $P$. The set of anchors can be seen as a (disjoint) union of ``positive" and ``negative" examples. If $S \subseteq \{0,1\}^n$, we refer to the representation as \emph{Boolean}, and if $S \subseteq \RR^n$ we call it \emph{real}. This model was pioneered in \cite{hajnal2022nearest,kilic2023information}, who advocated the study of Boolean functions admitting efficient representations. In a similar vein, \cite{hajnal2022nearest} also consider the \emph{$k$-nearest neighbors} model, where the value of $f$ on input $x$ is computed as a majority vote of the $k$ nearest anchors to $x$.

The nearest neighbor rule is the subject of extensive research both with respect to statistical properties and efficient algorithms \citep{andoni2017nearest,andoni2009nearest,shakhnarovich2005nearest,dasgupta2020nearest,chaudhuri2014rates,cover1967nearest,clarkson1997nearest}. To our knowledge, much less is known about the \emph{representational efficiency} of this rule: Namely, which set of functions\footnote{We only deal with Boolean functions, but some results extend to the non-Boolean case as well, see Remark~\ref{rem:non-boolean}.} can be represented efficiently; with polynomially many anchors. 

Our approach to understanding nearest neighbor complexity is to relate it to the well-studied field of Boolean circuit complexity, following a long tradition in computational learning theory and computational complexity e.g.,~\citep{linial1993constant,jackson2002learnability,klivans2004learning,klivans2006toward,williams2018limits,vardi2021size}. Our results establish both (unconditional) lower bounds and upper bounds on the nearest neighbor complexity of explicit\footnote{By ``explicit," we mean a function that can be computed in polynomial time by a Turing machine.} functions. We also describe efficient $k$-nearest neighbors representations -- for appropriate values of $k$ -- for a large class of Boolean functions for which we do not know how to prove super-polynomial circuit lower bounds. Our proofs uncover new connections between nearest neighbors and computational models such as linear decision lists and depth-two circuits that may be of independent interest.  


\paragraph{Our results}
\label{sec:results}

To study the computational power of nearest neighbor representations, we introduce the \emph{closure} of the classes $\NN$ and $\HNN$ under substitution and duplication of variables: These are \emph{subfunctions} of Boolean functions with polynomial nearest neighbor complexity ($\NN$) as well as polynomial \emph{Boolean} nearest neighbor complexity ($\HNN$). Note that standard complexity classes based on circuit-like models of computation are closed under these operations. Hence, we can give a precise characterization (equality) of the closures of $\NN$ and $\HNN$ in terms of min-plus polynomial threshold functions of polynomial complexity ($\mpPTF$). This adds to previous results of \cite{hajnal2022nearest} showing the \emph{containment} of $\NN$ and $\HNN$ in the class of $\mpPTF$s. As a consequence, we prove (among other results) that $\HNN$ contains functions that cannot be computed by depth-two threshold circuits with polynomial size and weight. We also observe that the closure of $\HNN$ is closely connected to the class of functions with $\NN$ representations of logarithmic bit-complexity.

We study the $k$-nearest neighbors complexity of Boolean functions for $k>1$. First, we extend the aforementioned characterization -- the closure of $\NN$ in terms of $\mpPTF$ -- to the closure of $\kNN$. We use this characterization to prove that $\kNN$ for \emph{constant $k$} is closely related to $\NN$ and that there exists an explicit function that requires exponential $\kNN$ complexity when $k \leq n^{1-\epsilon}$ (for an $n$-variate function). 

Next, we generalize the characterization of $\kNN$ to \emph{arbitrary $k$} by introducing a new class, $\kSTAT$ -- functions realizable by an inequality of the $k$-statistics of two sets of linear forms -- which generalizes $\mpPTF$. Consequently, we establish that proving lower bounds for $\kNN$ for \emph{arbitrary $k$} would require a breakthrough in circuit complexity: See Theorems \ref{th:SYMMAJinkSTAT} and \ref{thm:eldl-in-kstat}. 

Finally, we present new bounds for nearest neighbor complexity of specific Boolean functions such as disjointness, CNFs, and majority. For example, we show that CNFs with polynomially many clauses have $\NN$ representations with polynomially-many anchors which also exhibit \emph{constant bit-complexity}. In contrast, there exist CNFs of polynomial size with exponential $\HNN$ complexity. We also establish a new lower bound of $n/2 + 2$ on the $\HNN$ complexity of the majority function with an even number of inputs ($n$). This matches the upper bound proved in \cite{hajnal2022nearest}.

\paragraph{Related work}
\label{sec:relatedwork}

Nearest neighbor complexity (under Euclidean distance) was formalized by \cite{hajnal2022nearest}. They observe that functions with polynomial nearest neighbor complexity can be simulated by min-plus polynomial threshold functions, but it is an open question of whether or not the inclusion $\NN \subseteq \mpPTF$ is proper. Relations to the class $\mpPTF$ are of interest because $\mpPTF$ has connections to circuit complexity. For instance, \cite{hansen2015polynomial} establish that systems of $\mpPTF$'s compute exactly the class of $\AND \circ \OR \circ \THR$ circuits.

\cite{hajnal2022nearest} prove that the functions $\THR^{\lfloor n/3 \rfloor}$\footnote{$\THR^t(\vec{x}) = \ind[x_1+\cdots+x_n \geq t]$} and $\XOR$ both require an exponential number of Boolean anchors but only $2$ and $n+1$ real anchors, respectively. In fact, the same argument proves that $\THR^{t}$ requires at least $\binom{n}{t}/\binom{2t}{t}$ anchors for any $t$, which gives exponential lower bound for $t/n$ bounded away from $1/2$, but is vacuous for $\THR^{n/2}$. It was subsequently shown in \cite{kilic2023information} that any symmetric Boolean function $f$ has an $\NN$ representation with $I(f)$ anchors, where $I(f)$ denotes the number of \emph{intervals} of $f$, and this bound is optimal when all intervals have the same length. This extends the result of \cite{hajnal2022nearest} that every symmetric function has nearest neighbor complexity at most $n+1$. 

The expressive power of $k$-nearest neighbors rule was also studied by \cite{hajnal2022nearest}. In particular, they prove that $\kNN$ can simulate \emph{linear decision trees}, which yields a linear (in $n$) lower bound for the number of anchors in $\kNN$ representations of $\IP$. They also state the open problem of proving stronger lower bounds for the $k$-nearest neighbors complexity of an explicit function.

Nearest-neighbor complexity with respect to non-Euclidean distance measures has been studied before by \cite{globig1996case, satoh1998analysis}. The former studies the problem of optimizing the distance metric itself to enable polynomial-size $\NN$ representations for a given class of Boolean functions. The latter defines ``nearest" with respect to a partial ordering based on the set of differing coordinates between anchors and, in this case, proves that CNFs and DNFs have polynomial-size representations. \cite{salzberg1995best} studies the minimum number of anchors required to represent geometric concepts in $\RR^n$. As real functions are considered, the techniques and results in~\cite{salzberg1995best} are markedly different than our work which studies Boolean functions.

It was shown in \cite{kilic2023information} that (the aforementioned) $\NN$ representations for symmetric functions have \emph{logarithmic bit-complexity}, and that this is tight for some functions. It is left as an open problem to characterize $\NN$ representations of \emph{threshold functions} in terms of bit-complexity. Other works have studied the role of bit-complexity in \emph{approximate nearest neighbor search}; where we wish to find an anchor whose distance is minimal to a query point, up to a factor of $(1+\epsilon)$. For example, ~\cite{indyk2018approximate} provide tight bounds (in terms of bit-complexity) on the size of data structures performing approximate nearest neighbor search. This setting is quite different from our focus on exact classification of Boolean vectors. 

The bit-complexity of the weights in polynomial-size threshold circuits has been studied extensively (see, e.g., surveys~\citep{razborov1992small,Saks_1993}). For example, it was proved by \citep{goldmann1992majority,goldmann1998simulating} that arbitrarily large weights can be reduced to have logarithmic bit-complexity by slightly increasing the depth of the circuit (along with a polynomial blow-up in size). 

Constructions of Boolean circuits computing nearest neighbor classification are known. \cite{murphy1992nearest} constructs the ``canonical'' $\OR \circ \AND \circ \THR$ circuit computing any function with an $m$-anchor $\NN$ representation in size $O(m^2)$. (See Appendix \ref{sec:circuitUB}). A very similar depth-three construction for $\kNN$, also with size $O(m^2)$, was found by \cite{chen1995implementing}. Note that the weights of the above circuits are bounded by a polynomial in $n$. 

There is extensive research on the learning complexity -- in particular, statistical parameters such as sample complexity -- of \emph{real} nearest neighbor representations (e.g., \cite{devroye2013probabilistic}). There is evidence that increasing $k$ in the $k$-nearest neighbors rule can decrease its estimation error \cite{devroye1981asymptotic}. In addition, the algorithmic complexity of nearest neighbors was considered by \cite{wilfong1991nearest}, who establishes NP-hardness of finding a set of anchors with minimal cardinality within some ground set (i.e., the \emph{consistent subset} problem). 

\paragraph{Organization}

Section \ref{sec:prelim} outlines basic definitions required in subsequent sections. Section \ref{sec:mpPTFvsNN} establishes the equivalence between $\closHNN,\closNN$ and min-plus polynomial threshold functions, then discusses some of the consequences. Section \ref{sec:kNNvskSTAT} generalizes $\mpPTF$ to a new class, $\kSTAT$, and proves that a similar equivalence holds with $\closkNN$. Here, we also derive several connections to circuit classes such as $\SYM \circ \MAJ$. Section \ref{sec:NNcomplexity} contains several new results (upper and lower bounds) for the nearest neighbor complexity of explicit Boolean functions. Many proofs are relegated to Appendix \ref{sec:proofs} due to space constraints. Appendix \ref{sec:circuitUB} contains some direct constructions of threshold circuits computing $\HNN$, one of which has depth two.

\section{Preliminaries}
\label{sec:prelim}
The use the following notation throughout the paper:

\medskip
\begin{tabular}{|l}
Vectors are written in bold (i.e. $\vec{x} = (x_1,\cdots,x_n)$). \\
\parbox{\textwidth}{The $k$'th statistic of $\vec{x}$, denoted $\vec{x}_{(k)}$, is the $k$'th smallest element of $\vec{x}$. 

--- In particular, $\vec{x}_{(k)} = x_{\sigma(k)} \iff x_{\sigma(1)} \leq \cdots \leq x_{\sigma(n)} \text{ for some } \sigma \in S_n$}  \\
$\Delta(\vec{x},\vec{y}) := ||\vec{x}-\vec{y}||^2_2$ denotes the distance between $\vec{x},\vec{y}$. \\
$\langle \vec{x},\vec{y} \rangle$ denotes the real inner product (dot product), $x_1y_1 + \cdots + x_ny_n$. \\
$\poly(n)$ refers to an arbitrary polynomial in the variable $n$. \\
$\ind[P]$ denotes the Boolean function whose value is 1 if and only if $P$ holds. \\
\end{tabular}

\medskip
Note that the (squared) Euclidean distance between two Boolean vectors is equal to their Hamming distance, $\Delta(\vec{x},\vec{y}) = \sum_{i \leq n}  \ind[x_i \neq y_i]$, so the Hamming weight of a Boolean vector $\vec{p}$ is denoted $\Delta(\vec{p}) := \Delta(\vec{p} , \vec{0}) = ||\vec{p}||_2^2$. 

\subsection{Boolean functions}
\label{sec:boolfunc}
\begin{definition}
A \emph{threshold gate} is a Boolean function $f : \{0,1\}^n \to \{0,1\}$ defined by a weight vector $\vec{w} \in \RR^n$ and a threshold $\theta \in \RR$ such that
\begin{equation}
\label{eq:THR}
f(\vec{x}) = 1 \iff \langle \vec{w}, \vec{x}\rangle \geq \theta.
\end{equation}

A \emph{threshold circuit} is a sequence $(f_1,\cdots,f_s)$ of $s \geq n$ gates such that the first $n$ gates are equal to the input variables (i.e., $f_i = x_i$ for $i \leq n$) and subsequent gates are threshold gates whose inputs are some subset of the previous gates. The output of the circuit is equal to the output of the final gate(s). The size of the circuit is equal to $s-n$. 
\end{definition}

A threshold circuit can be viewed as a directed acyclic graph. Nodes with fan-in 0 correspond to inputs, and other nodes correspond to threshold gates applied to the values of the preceding nodes. Nodes with fan-out 0 correspond to output nodes. The \emph{depth} of the circuit is the length of the longest path from an input node to an output node.

\begin{remark}
It is well known that we may assume that the weights (and the threshold) are integers without loss of generality: Since the domain of a threshold gate is finite, we may approximate each weight by a rational number and multiply by a common denominator. See \cite{jukna2012boolean} for a comprehensive introduction to circuit complexity.
\end{remark}

\begin{definition}
\label{def:HNN}
A \emph{Nearest Neighbor ($\NN$)} representation of a Boolean function $f : \{0,1\}^n \to \{0,1\}$ is defined by two disjoint sets of positive and negative \emph{anchors} $P,N \subseteq \RR^n$ such that
\begin{itemize}
\item $f(\vec{x})=1$ if there exists a $\vec{p} \in P$ with $\Delta(\vec{x},\vec{p}) < \Delta(\vec{x},\vec{q})$ for all $\vec{q} \in N$.
\item $f(\vec{x})=0$ if there exists a $\vec{q} \in N$ with $\Delta(\vec{x},\vec{q}) < \Delta(\vec{x},\vec{p})$ for all $\vec{p} \in P$.
\end{itemize} 

A \emph{Hamming Nearest Neighbor ($\HNN$)} representation is defined identically for Boolean anchors.

\end{definition}

\begin{definition}
\label{def:kNN}
A \emph{$k$-Nearest Neighbors ($\kNN$)} representation of a function $f : \{0,1\}^n \to \{0,1\}$ is defined by two disjoint sets of positive and negative \emph{anchors} $P,N \subseteq \RR^n$ and an integer $k$ such that
\begin{center}
$f(\vec{x})=1$ $\iff$ there exists an $A \subseteq P \cup N$ with the following properties:
\begin{enumerate}
    \item $|A| = k$
    \item $|A \cap P| \geq |A \cap N|$
    \item $\Delta(\vec{x},\vec{a}) < \Delta(\vec{x},\vec{b})$ for all $\vec{a} \in A$, $\vec{b} \not \in A$.
\end{enumerate}
\end{center} 
A $\kHNN$ representation is defined identically for Boolean anchors. 
\end{definition}

\begin{definition}[\cite{hansen2015polynomial}]
A \emph{min-plus polynomial threshold function ($\mpPTF$)} is a Boolean function $f : \{0,1\}^n \to \{0,1\}$ defined by two sets of linear forms with integer coefficients\footnote{As for threshold gates, there is no  loss of generality in the assumption that weights of $\mpPTF$s are integers.
} $\{L_{1},\cdots, L_{\ell_1}\} \cup \{R_{1},\cdots, R_{\ell_2}\}$ satisfying
\begin{equation}
\label{eq:mpPTF}
f(\vec{x})=1 \iff \min_{i \leq \ell_1} L_{i}(\vec{x}) \leq \min_{j \leq \ell_2} R_{j}(\vec{x})
\end{equation}

The number of \emph{terms} in an $\mpPTF$ is equal to $\ell_1 + \ell_2$, and the \emph{maximum weight} is equal to the largest absolute value of the coefficients of any form.
\end{definition}

\begin{definition}[\cite{rivest1987learning}]
A \emph{linear decision list ($\LDL$)} representation of a Boolean function $f$ is a sequence of instructions ``if $f_i(\vec{x}) = 1$, then output $c_i$'' for $1 \leq i \leq m$, followed by ``output 0.'' Here, $f_1,\cdots,f_m$ are threshold gates and $c_1,\cdots,c_m \in \{0,1\}$. \emph{Exact} linear decision lists ($\ELDL$) are defined similarly using \emph{exact} threshold functions -- threshold gates where the inequality in \eqref{eq:THR} is replaced with equality. The \emph{length} of an $\LDL$ or $\ELDL$ is the number of gates, $m$, and its \emph{maximum weight} is equal to the largest coefficient of any $f_i$.

\end{definition}

\begin{definition}
\label{def:funcs}
We consider the following well-known Boolean functions.

\begin{tabular}{|l}
    The \emph{majority} function, $\MAJ(x_1,\cdots,x_n) = \ind[x_1 + \cdots + x_n \geq n/2]$ \\
    The \emph{disjointness} function, $\DISJ(\vec{x},\vec{y}) = \ind[\langle \vec{x}, \vec{y} \rangle = 0]$ \\
    The \emph{inner product mod 2} function, $\IP(\vec{x},\vec{y}) = \langle \vec{x}, \vec{y} \rangle \mod 2$ \\
    The \emph{odd-max-bit} function, $\OMB(x_1,\cdots,x_n) = \max\{i:x_i=1\}  \mod 2$ \\
\end{tabular}

\end{definition}

\subsection{Function classes}
\label{sec:classes}

First, we define classes of Boolean circuits whose inputs may be variables, their negations, or the constants 0 and 1. $\AND$, $\OR$, $\THR$, and $\SYM$ are the classes of polynomial-size\footnote{``polynomial'' in this context is always with respect to the input size, $n$.} depth-one circuits composed of $\AND$, $\OR$, threshold gates, and symmetric functions (i.e., Boolean functions which depend only on the Hamming weight of the input) respectively. $\MAJ \subset \THR$ is the set of threshold gates with polynomial weights\footnote{We abuse the notation denoting by $\MAJ$ both specific function and a class of function. The meaning of our notation will be also clear from the context.}. $\AC^0$ is the class of constant-depth circuits consisting of a polynomial number of $\AND$, $\OR$, and $\NOT$ gates. A CNF is an $\AND$ of clauses, where each clause is an $\OR$ of (possibly negated) variables.

For two circuit classes $C_1$, $C_2$, the class of circuits consisting of a circuit from $C_1$ whose inputs are (the outputs of) a polynomial number of circuits from $C_2$ is denoted by $C_1 \circ C_2$. (e.g., $\THR \circ \THR$ refers to depth two threshold circuits of polynomial size.)

\begin{definition}
$\NN$ is the class of Boolean functions that have nearest neighbor representations with polynomially-many anchors. $\HNN$ is the same class where anchors are Boolean. $\kNN$ and $\kHNN$ are defined in the same manner for a positive integer $k$.
\end{definition}

\begin{definition}
$\mpPTF(\infty)$ is the class of min-plus polynomial threshold functions with a polynomial number (in terms of the number of inputs) of terms and unbounded maximum weight. $\mpPTF(\poly(n))$ is the same class with polynomially-bounded maximum weight.
\end{definition}

\begin{definition}
$\LDL$ is the class of Boolean functions representable by linear decision lists with polynomial length. $\LDLlw$ is the same class with polynomially-bounded maximum weight. $\ELDL$ and $\ELDLlw$ are defined similarly for exact linear decision lists.
\end{definition}

\section{\texorpdfstring{Min-plus PTFs vs. nearest neighbors}{}}
\label{sec:mpPTFvsNN}

In this section, we introduce the \emph{closure} operation and derive an equivalence between (the closure of) $\NN$, $\HNN$ and $\mpPTF$.

\begin{definition}
Define a substitution of variables as a function $v : \{0,1\}^n \rightsquigarrow \{0,1\}^{\widetilde{n}}$ where $\rightsquigarrow$ \emph{duplicates} variables or \emph{adds constant} variables (e.g., $x_1x_2 \rightsquigarrow x_1x_1x_2x_2x_20$). Then, a Boolean function $f : \{0,1\}^n \to \{0,1\}$ is a \emph{subfunction} of $g : \{0,1\}^{\widetilde{n}} \to \{0,1\}$ if $\widetilde{n} = \poly(n)$ and there exists a substitution of variables $v$  such that $f(\vec{x}) = g(v(\vec{x}))$ for all $\vec{x} \in \{0,1\}$.
\end{definition}

Subfunctions may equivalently be obtained from $g : \{0,1\}^{\widetilde{n}} \to \{0,1\}$ by \emph{identifying variables} (e.g., $x_1 = x_2$) and \emph{assigning variables} to constants (e.g., $x_1=0$).

\begin{definition}
For any function class $C$, let $\overline{C}$ denote the \emph{closure} of $C$: The set of subfunctions derived from the elements of $C$\footnote{Note that classes based on circuits are already closed under this operation.}.

\medskip
In particular, we say that a Boolean function $f$ has an ``$\closNN$ representation" if it is a subfunction of some $g \in \NN$.
\end{definition}

\begin{theorem}
\label{th:mpPTFequalsNN}
\[
\closNN = \mpPTF(\infty), \ \ \ 
\closHNN = \mpPTF(\poly(n))
\]
\end{theorem}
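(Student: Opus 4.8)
The plan is to prove both equalities by establishing containments in each direction, exploiting the fact that the right-hand classes ($\mpPTF$) are already closed under substitution and duplication of variables (as noted for circuit-like classes), so it suffices to show $\NN \subseteq \mpPTF(\infty)$ and $\HNN \subseteq \mpPTF(\poly(n))$ for one direction, and $\mpPTF(\infty) \subseteq \closNN$, $\mpPTF(\poly(n)) \subseteq \closHNN$ for the other. The ``easy'' direction — that an $\NN$ representation yields an $\mpPTF$ — is essentially the observation of \cite{hajnal2022nearest}: given anchors $P = \{\vec{p}_1,\dots,\vec{p}_{\ell_1}\}$ and $N = \{\vec{q}_1,\dots,\vec{q}_{\ell_2}\}$, we have $f(\vec{x}) = 1$ iff $\min_i \Delta(\vec{x},\vec{p}_i) < \min_j \Delta(\vec{x},\vec{q}_j)$. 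Expanding $\Delta(\vec{x},\vec{p}_i) = \|\vec{x}\|^2 - 2\langle \vec{x},\vec{p}_i\rangle + \|\vec{p}_i\|^2$, the common $\|\vec{x}\|^2$ term cancels, so this becomes a comparison of two minima of affine forms $L_i(\vec{x}) = -2\langle\vec{x},\vec{p}_i\rangle + \|\vec{p}_i\|^2$ and $R_j(\vec{x}) = -2\langle\vec{x},\vec{q}_j\rangle + \|\vec{q}_j\|^2$. One must reconcile the strict inequality ``$<$'' in the $\NN$ definition with the ``$\leq$'' in the $\mpPTF$ definition; over the Boolean cube this can be handled by a slight perturbation of the threshold / coefficients (scale to integers, then shift), which does not change the computed function. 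When anchors are Boolean, $\|\vec{p}_i\|^2$ and the inner products are integers bounded by $n$, so the weights are polynomially bounded, giving $\HNN \subseteq \mpPTF(\poly(n))$; when anchors are real, no weight bound is available, giving $\NN \subseteq \mpPTF(\infty)$, and since both $\mpPTF$ classes contain the closures, $\closNN \subseteq \mpPTF(\infty)$ and $\closHNN \subseteq \mpPTF(\poly(n))$.

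The more substantial direction is the reverse inclusion: given an $\mpPTF$ $f$ with forms $\{L_i\} \cup \{R_j\}$, exhibit a (possibly larger, variable-substituted) function with an $\NN$ representation whose subfunction is $f$. The natural attempt is to reverse the algebra: we want anchors $\vec{p}_i$, $\vec{q}_j$ such that $\Delta(\vec{x},\vec{p}_i)$ and $\Delta(\vec{x},\vec{q}_j)$ recover $L_i(\vec{x})$ and $R_j(\vec{x})$ up to a common additive term. The obstruction is that a genuine squared-distance form $-2\langle\vec{x},\vec{p}\rangle + \|\vec{p}\|^2$ has its linear coefficients and its constant term coupled (the constant is determined by the coefficient vector), and moreover a Boolean anchor forces coefficients in $\{0,-2\}$ and a very restricted constant. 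This is exactly where duplication and constant-padding of variables — the closure operation — becomes essential: by duplicating a variable $x_i$ a total of $w$ times we can simulate a coefficient of magnitude $w$ using only $\{0,1\}$-coordinate anchors, and by appending a controlled block of constant coordinates (or, for real anchors, by placing the anchor at an appropriate real point) we can tune the constant term $\|\vec{p}_i\|^2$ independently to whatever value $L_i$ demands. So the key steps are: (i) normalize the forms to have nonnegative integer coefficients by the substitution $x_i \mapsto 1 - x_i$ where needed and by adding a large common constant; (ii) for each form, build an anchor over a padded coordinate set whose squared distance to the (padded) input equals that form plus a fixed global constant $C$ common to all anchors; (iii) verify that the resulting $\NN$ (resp. $\HNN$) representation, restricted via the substitution that enforces the duplication/constant pattern, computes exactly $f$, handling the strict-versus-weak inequality by an extra unit of slack in the constants. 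For the bounded-weight case one checks the padding is polynomial since the $\mpPTF$ weights and term count are polynomial.

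I expect step (ii) — the simultaneous, independent control of the linear part and the constant term of each squared-distance form using only the allowed closure operations, and doing so with a *single* global constant $C$ shared across *all* anchors (positive and negative) so that the $\min$-comparison survives — to be the main technical obstacle, particularly in the Boolean/bounded-weight case where the anchor coordinates are confined to $\{0,1\}$ and every extra coordinate simultaneously affects both a potential linear coefficient and the constant. The bookkeeping to keep the total number of coordinates polynomial (so that the substitution is legal and the $\HNN$ bound on weights is preserved) is the place where care is needed, but it should go through because $\mpPTF(\poly(n))$ by definition has polynomially many terms of polynomially bounded weight, so the total padding is $\poly(n)$. The strict/weak inequality mismatch and the handling of ``ties'' in the $k=1$ nearest-neighbor rule are routine once the slack constant is introduced.
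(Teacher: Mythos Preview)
Your proposal is correct and follows essentially the same route as the paper: the forward inclusion is the linear-form-from-squared-distance observation, and the reverse inclusion is exactly the block-duplication-and-padding construction you outline, with an extra constant block to tune the affine constants independently of the linear part (the paper carries this out with block sizes $t_k=\max_{i,j}a_{ijk}$ and a global shift $\Theta$, arriving at $\widetilde{n}=O(nW)$). One small caveat: the substitution $x_i \mapsto 1-x_i$ is not among the allowed closure operations (only duplication and constants are), so to handle negative coefficients you should either absorb the negation into the anchor itself (setting the relevant anchor coordinates to $1$ rather than $0$, which keeps them Boolean and yields the $(1-x_i)$ contribution directly) or, as the paper does, add $x_k$ to \emph{every} linear form until all coefficients are nonnegative --- both fixes are routine and preserve the polynomial bound.
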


Theorem \ref{th:mpPTFequalsNN} and some consequences are proved in Appendix \ref{sec:proofs}. Namely, we observe that any $n$-variate function in $\closNN$ is a sub-function of an $(n+1)$-variate $\NN$ representation, and that $\mpPTF(\poly(n))$ captures precisely the power of $\closNN$ representations with \emph{bit-complexity} $O(\log n)$. Then, using the results of \cite{hajnal2022nearest} and \cite{hansen2015polynomial}, we immediately establish the following two corollaries.

\begin{corollary}
\label{cor:HNNneqHNNbar}
$\HNN \subsetneq \closHNN$
\end{corollary}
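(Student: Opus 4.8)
The plan is to combine Theorem~\ref{th:mpPTFequalsNN} with the exponential Boolean-anchor lower bound of \cite{hajnal2022nearest} for a threshold function. The containment $\HNN \subseteq \closHNN$ is immediate, since every Boolean function is a subfunction of itself (via the identity substitution); so it suffices to exhibit an explicit function that lies in $\closHNN$ but not in $\HNN$.

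For the witness I would take $f = \THR^{\lfloor n/3 \rfloor}$, that is, $f(\vec{x}) = \ind[x_1 + \cdots + x_n \geq \lfloor n/3 \rfloor]$. Being a threshold gate with all weights equal to $1$, $f$ is trivially a min-plus polynomial threshold function with two terms and constant maximum weight: take the single left form to be the constant $\lfloor n/3 \rfloor$ (equivalently, $\lfloor n/3\rfloor$ times a constant-$1$ variable) and the single right form to be $x_1 + \cdots + x_n$, so that $f(\vec{x}) = 1 \iff \lfloor n/3 \rfloor \leq \sum_{i} x_i$. Hence $f \in \mpPTF(\poly(n))$, and Theorem~\ref{th:mpPTFequalsNN} then gives $f \in \closHNN$.

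On the other hand, $f \notin \HNN$: this is precisely the lower bound of \cite{hajnal2022nearest} quoted earlier. Every $\HNN$ representation of $\THR^{t}$ uses at least $\binom{n}{t}/\binom{2t}{t}$ anchors, and for $t = \lfloor n/3 \rfloor$ this is $2^{\Omega(n)}$: a crude estimate via $\binom{n}{n/3} = 2^{(H(1/3) - o(1))n}$ and $\binom{2n/3}{n/3} \leq 2^{2n/3}$, together with $H(1/3) > 2/3$ (where $H$ denotes the binary entropy function), already suffices. Thus no family of polynomially many Boolean anchors represents $f$, so $f \notin \HNN$. Combined with $\HNN \subseteq \closHNN$ and the previous paragraph, this proves $\HNN \subsetneq \closHNN$.

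Beyond invoking the already-established Theorem~\ref{th:mpPTFequalsNN}, there is no real obstacle: the only care needed is the binomial estimate, and since we require only super-polynomial (indeed exponential) growth a very loose bound is enough. The one point worth noting is that $\THR^{\lfloor n/3\rfloor}$ is a particularly convenient witness because it lands in $\mpPTF(\poly(n))$ essentially by inspection, whereas other functions known to have exponential $\HNN$ complexity (such as $\XOR$) do not obviously admit polynomial-weight min-plus representations, so using them would require extra work.
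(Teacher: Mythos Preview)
Your proof is correct and follows the same overall strategy as the paper: exhibit a function that lies in $\mpPTF(\poly(n))$ (hence in $\closHNN$ by Theorem~\ref{th:mpPTFequalsNN}) but has exponential $\HNN$ complexity by the lower bounds of \cite{hajnal2022nearest}. The only difference is the choice of witness. You use $\THR^{\lfloor n/3\rfloor}$, for which the $\mpPTF(\poly(n))$ membership is indeed immediate; the paper instead uses $\XOR$, citing the construction of \cite{hansen2015polynomial} that $\XOR(\vec{x})=1$ iff $\min_{i\ \text{odd}} L_i(\vec{x}) \leq \min_{i\ \text{even}} L_i(\vec{x})$ with $L_i(\vec{x}) = i^2 - 2i\sum_j x_j$, whose weights are $O(n)$. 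So your closing remark is slightly off: $\XOR$ \emph{does} admit a polynomial-weight min-plus representation, and the ``extra work'' is just quoting that known identity. That said, your witness is a touch cleaner, since the $\mpPTF$ representation is a single threshold inequality by inspection and no external construction needs to be invoked.
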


\begin{corollary}
\label{cor:signrankLB}
$\closNN$ representations of $\IP$, $f_n(\vec{x},\vec{y}) := \bigwedge_{i=1}^n \bigvee_{j=1}^{n^2} (x_{i,j} \wedge y_{i,j})$ require $2^{\Omega(n)}$ anchors.
\end{corollary}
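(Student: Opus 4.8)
\textbf{Proof proposal for Corollary~\ref{cor:signrankLB}.}
The plan is to combine Theorem~\ref{th:mpPTFequalsNN} with known sign-rank (or communication-complexity) lower bounds. By Theorem~\ref{th:mpPTFequalsNN}, $\closNN = \mpPTF(\infty)$, so it suffices to show that any $\mpPTF$ computing $\IP$ (respectively $f_n$) must have $2^{\Omega(n)}$ terms; the number of anchors in an $\NN$ representation is, up to the polynomial blow-up allowed in the closure, at least the number of terms of the corresponding $\mpPTF$. Concretely, I would first recall from \cite{hansen2015polynomial} that an $\mpPTF$ with $t$ terms is computed by an $\AND \circ \OR \circ \THR$ circuit whose top $\AND$ and middle $\OR$ each have fan-in at most $t$ (and polynomially many $\THR$ gates at the bottom). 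So a subexponential-term $\mpPTF$ for $\IP$ or $f_n$ would give a subexponential-size $\AND \circ \OR \circ \THR$ circuit for that function.

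Next I would invoke the sign-rank barrier. The function $\IP$ has sign-rank $2^{\Omega(n)}$ (Forster's theorem), and a single $\THR$ gate applied to an arbitrary real linear form over $\{0,1\}^n \times \{0,1\}^n$ has sign-rank $2$ as a sign matrix; more to the point, a small $\AND \circ \OR \circ \THR$ circuit yields a matrix of small sign-rank via the standard argument that $\OR$ of few low-sign-rank matrices, and then $\AND$ of few such, keeps the sign-rank polynomially bounded in the fan-ins and the constituent sign-ranks (using $\mathrm{sign\text{-}rank}(M \vee N) \le O(\mathrm{sr}(M)\cdot \mathrm{sr}(N))$ and likewise for $\AND$, after shifting thresholds). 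Hence a $\poly(n)$-term $\mpPTF$ for $\IP$ would give sign-rank $\poly(n)$, contradicting Forster's $2^{\Omega(n)}$ lower bound; quantitatively, $t$ terms force sign-rank $2^{O(\mathrm{polylog})}$ only if $t$ is polynomial, so in fact $t = 2^{\Omega(n)}$. The same reasoning applies verbatim to $f_n = \bigwedge_{i=1}^n \bigvee_{j=1}^{n^2}(x_{i,j}\wedge y_{i,j})$, which is exactly the function with $2^{\Omega(n)}$ sign-rank identified in \cite{hajnal2022nearest} (indeed this $f_n$ is a standard ``AND of ORs of ANDs'' pattern whose sign-rank is known to be exponential by a pattern-matrix / Razborov--Sherstov type argument). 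Since the excerpt tells us to use ``the results of \cite{hajnal2022nearest}'', I would simply cite their lower bound on the sign-rank (equivalently, on the $\mpPTF$ complexity) of these two functions and transfer it through the equality $\closNN = \mpPTF(\infty)$.

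Finally I would assemble the pieces: if $f \in \{\IP, f_n\}$ had an $\NN$ representation with $m$ anchors, then $f \in \NN \subseteq \closNN = \mpPTF(\infty)$ via an $\mpPTF$ with $O(m^2)$ terms (or $O(m)$ depending on the construction in the proof of Theorem~\ref{th:mpPTFequalsNN}), and a subfunction of $f$ inherits an $\mpPTF$ with no more terms; combined with the $2^{\Omega(n)}$ lower bound on the number of $\mpPTF$ terms for these functions, we get $m = 2^{\Omega(n)}$. The statement about $\closNN$ representations (rather than plain $\NN$) is handled automatically because closure only introduces a polynomial number of variables and does not increase the $\mpPTF$ term count, so the exponential bound survives.

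\textbf{Main obstacle.} The delicate point is the exact quantitative bookkeeping: translating ``$m$ anchors'' to ``number of $\mpPTF$ terms'' with the right polynomial (this is where the proof of Theorem~\ref{th:mpPTFequalsNN} and the $\AND\circ\OR\circ\THR$ simulation of \cite{hansen2015polynomial} must be cited carefully), and ensuring the sign-rank lower bound for $f_n$ is applied to the correct matrix partition (here $(x_{i,j})$ vs.\ $(y_{i,j})$). I expect the sign-rank lower bounds themselves to be black-boxed from \cite{hajnal2022nearest}; the real work is just confirming that none of the reductions — $\NN \to \mpPTF$, closure, $\mpPTF \to$ sign-rank — costs more than a polynomial factor, which would destroy an exponential bound only if that factor were itself exponential, and it is not.
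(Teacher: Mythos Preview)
Your overall route—pass through Theorem~\ref{th:mpPTFequalsNN} to reduce to $\mpPTF$, then invoke sign-rank lower bounds for $\IP$ and $f_n$—is exactly what the paper does. The paper simply cites \cite{hajnal2022nearest} for ``$\NN$ complexity $\ge$ sign-rank'', cites \cite{hansen2015polynomial} for ``number of $\mpPTF$ terms $\ge$ sign-rank'' (which extends the bound to $\closNN$), and then cites \cite{forster2002linear} and \cite{razborov2010sign} for the sign-rank of $\IP$ and $f_n$ respectively. Also, the conversion in Lemma~\ref{lem:NNtompPTF} is lossless: an $m$-anchor $\closNN$ representation gives an $\mpPTF$ with exactly $m$ terms, so there is no $O(m^2)$ bookkeeping to worry about.

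Where your proposal goes wrong is in the \emph{justification} of the step ``few $\mpPTF$ terms $\Rightarrow$ small sign-rank''. You route this through $\AND\circ\OR\circ\THR$ and then appeal to a ``standard argument'' that $\mathrm{sign\text{-}rank}(M\vee N)\le O(\mathrm{sr}(M)\cdot\mathrm{sr}(N))$, and likewise for $\AND$. No such composition law is known, and in fact the very function $f_n$ in the statement refutes it: $f_n$ is an $\AND$ of $\OR$s of $\AND_2$ gates, each $\AND_2$ has sign-rank at most $2$, yet $f_n$ has sign-rank $2^{\Omega(n)}$. So a generic $\AND\circ\OR\circ\THR$ circuit of polynomial size does \emph{not} have polynomial sign-rank, and your detour through that class cannot work.

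The correct argument (the one \cite{hansen2015polynomial} use and the paper cites) exploits the \emph{specific} structure of an $\mpPTF$, not a generic depth-three circuit. If $f(\vec{x},\vec{y})=1\iff\min_i L_i\le\min_j R_j$ with linear $L_i,R_j$, then for sufficiently large $\lambda$ one has
\[
f(\vec{x},\vec{y})=\operatorname{sign}\Bigl(\sum_i e^{-\lambda L_i(\vec{x},\vec{y})}-\sum_j e^{-\lambda R_j(\vec{x},\vec{y})}\Bigr),
\]
and since each $L_i$ (and $R_j$) is linear, $e^{-\lambda L_i(\vec{x},\vec{y})}$ factors as a product of a function of $\vec{x}$ and a function of $\vec{y}$, i.e.\ is a rank-$1$ matrix. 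Hence the sign-rank is at most $\ell_1+\ell_2$, the number of terms. This is where the linearity of the forms in an $\mpPTF$ is essential; it is precisely what a general $\AND\circ\OR\circ\THR$ circuit lacks (its bottom thresholds are not pairwise differences of a small fixed pool of linear forms). Replace your composition paragraph with this, or simply black-box the inequality from \cite{hansen2015polynomial}, and your proof is complete and matches the paper's.
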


Theorem \ref{th:mpPTFequalsNN} also yields lower bounds for the circuit complexity of functions belonging to $\HNN$. (A direct construction in Appendix \ref{sec:circuitUB} shows that $\HNN \subseteq \THR \circ \MAJ$.)

\begin{theorem}
\label{th:HNNnotMAJMAJ}
\[
\HNN \not \subseteq \MAJ \circ \MAJ 
\]
More precisely, there is a Boolean function with an $\HNN$ representation with $n+1$ anchors which cannot be computed by a depth-two majority circuit with $\poly(n)$ gates.
\end{theorem}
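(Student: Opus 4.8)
The plan is to exhibit an explicit $n$-variate Boolean function that (a) has an $\HNN$ representation with $n+1$ anchors and (b) cannot be computed by any $\MAJ \circ \MAJ$ circuit of polynomial size. By Theorem~\ref{th:mpPTFequalsNN}, a function in $\closHNN$ lies in $\mpPTF(\poly(n))$, so it suffices to find a function that is \emph{not} in $\closHNN$-style circuit classes; but here we want the stronger statement that the function is literally in $\HNN$ (not just the closure) with few anchors, while lying outside $\MAJ \circ \MAJ$. The natural candidate is a symmetric function: every symmetric Boolean function has an $\HNN$ representation with at most $n+1$ anchors (this is the result of \cite{hajnal2022nearest} recalled in the related work, realized by placing anchors at the all-zeros vector, a weight-$1$ vector, a weight-$2$ vector, and so on, labeled according to $f$). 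So the whole problem reduces to: \emph{find a symmetric Boolean function not computable by polynomial-size depth-two majority circuits}.

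First I would recall the known separation: there exist symmetric functions — indeed, functions as simple as $\mathrm{MOD}_2$ (parity) on $\sqrt{n}$ of the variables, or more robustly the function $\mathrm{InnerMod}$ constructions — that are not in $\MAJ \circ \MAJ$. The cleanest route is to invoke a lower bound for $\MAJ \circ \MAJ$ against a symmetric function. One classical candidate is the function that is $1$ iff the Hamming weight lies in a carefully chosen set of residues/intervals; lower bounds of this type for depth-two threshold-of-majority circuits follow from the sign-rank or discrepancy-based arguments, or from the counting/communication lower bounds for $\MAJ \circ \MAJ$ (e.g., the fact that $\MAJ \circ \MAJ \subseteq \THR \circ \MAJ$ and the known separations between $\SYM$ and $\THR \circ \MAJ$). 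Concretely, I would use that a symmetric function whose ``interval pattern'' on $\{0,1,\dots,n\}$ oscillates $\omega(\mathrm{polylog}\, n)$ times (for instance $f(\vec x) = \lfloor (x_1+\cdots+x_n)/t \rfloor \bmod 2$ for $t = \Theta(\sqrt n)$, a ``blocky parity'') cannot be written as a polynomial-size $\MAJ \circ \MAJ$ circuit, since such circuits compute functions of bounded ``alternation complexity'' in an appropriate sense — this can be extracted from the $\THR \circ \MAJ$ lower bound machinery.

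The key steps, in order: (1) observe that any symmetric $f$ on $n$ variables admits an $\HNN$ representation with $n+1$ Boolean anchors (anchors of each Hamming weight $0,1,\dots,n$, each placed so that its nearest-neighbor cell is exactly the corresponding Hamming sphere, labeled by $f$); this gives property~(a). (2) Choose $f$ to be the blocky-parity symmetric function above. (3) Prove $f \notin \MAJ \circ \MAJ$: a polynomial-size $\MAJ \circ \MAJ$ circuit restricted to the symmetric slices collapses to a polynomial-weight univariate expression in $s = x_1 + \cdots + x_n$, namely a polynomial-size ``sum of signs of linear functions of $s$'', i.e. a linear-weight $\mathrm{LDL}$-like object in one variable, which can change value only polynomially many times as $s$ ranges over $\{0,\dots,n\}$; since blocky-parity changes value $\Theta(\sqrt n)$ times this is compatible, so instead I would pick the number of blocks to be super-polynomial — take $t$ constant, so $f = \mathrm{MOD}_2$ of the sum restricted to $n^{\epsilon}$ coordinates composed suitably, or more simply take $f$ with $\omega(\mathrm{poly}\log n)$ alternations and invoke the sharper weight bound. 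Actually the tight statement needs polynomial \emph{size and weight}, so the right move is to bound the number of sign changes of a polynomial-size, polynomial-weight $\MAJ\circ\MAJ$ circuit on the diagonal by $\poly(n)$, then take parity itself (which has $n$ alternations — not enough) replaced by a function with $n^{\omega(1)}$... which is impossible on $n+1$ slices. Hence the correct conclusion is that one must appeal to a genuine $\MAJ\circ\MAJ$ lower bound that is \emph{not} merely a counting-of-alternations argument.

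The main obstacle, therefore, is precisely step~(3): showing the chosen symmetric function escapes $\MAJ \circ \MAJ$ despite the function being symmetric and hence ``one-dimensional.'' Since symmetric functions are determined by their values on $n+1$ slices, and $\MAJ\circ\MAJ$ on those slices still has access to exponentially-varying behavior of the bottom majorities as a function of which variables are set, the naive alternation count does not suffice, and one needs the established separation $\SYM \not\subseteq \MAJ \circ \MAJ$ (equivalently, an explicit symmetric function, such as a suitable $\mathrm{MOD}_m$ or a padded inner-product-on-the-diagonal, known to require super-polynomial depth-two majority circuits). I would cite such a result — e.g., the lower bounds of the Goldmann--Håstad--Razborov / Forster-type line of work giving symmetric (or easily-made-symmetric) functions outside $\MAJ \circ \MAJ$ — and then combine it with step~(1) to conclude. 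If no off-the-shelf symmetric separation is convenient, the fallback is to take the function $f_n$ of Corollary~\ref{cor:signrankLB} (or $\IP$), which is \emph{not} symmetric but which one can still realize with $n+1$ anchors via the $\mpPTF(\poly(n))$ characterization and a direct anchor construction, and which is known to have exponential sign-rank and hence to lie outside $\MAJ \circ \MAJ$ — at the cost of losing the "$n+1$ anchors" sharpness unless the direct construction supplies it. The cleanest writeup uses a symmetric $f$: property~(a) is then immediate and the entire weight of the argument rests on a single cited circuit lower bound.
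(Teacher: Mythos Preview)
Your approach has a fatal gap: \emph{every} symmetric Boolean function lies in $\MAJ \circ \MAJ$ with $O(n)$ gates and constant weights, so no symmetric $f$ can witness the separation. Indeed, with bottom gates $T_i(\vec x)=\ind[x_1+\cdots+x_n\ge i]$ for $i=1,\dots,n$, the vector $(T_1,\dots,T_n)$ is the unary encoding $1^s0^{n-s}$ of the weight $s$, and for any target pattern $g\colon\{0,\dots,n\}\to\{0,1\}$ one can pick integer weights $w_i$ with $|w_i|\le 2$ so that the partial sums $W(s)=\sum_{i\le s}w_i$ satisfy $W(s)\ge 0\iff g(s)=1$. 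Thus the top gate $\ind[\sum_i w_i T_i\ge 0]$ computes $f$. The ``established separation $\SYM\not\subseteq\MAJ\circ\MAJ$'' you invoke does not exist, and your own alternation-counting discussion was already pointing at this: a symmetric function has at most $n$ alternations, which a $\MAJ\circ\MAJ$ circuit handles easily. Your fallback of using $\IP$ or the function $f_n$ of Corollary~\ref{cor:signrankLB} also fails, since those functions require $2^{\Omega(n)}$ anchors even in $\closNN$ and so certainly do not admit $(n+1)$-anchor $\HNN$ representations.

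The paper's proof avoids this by working through the closure. It takes $\OMB\circ\AND_2$, which is known (Buhrman et al., Hajnal et al.) to lie outside $\MAJ\circ\MAJ$, and exhibits an $(n+1)$-term $\mpPTF(\poly(n))$ for it: with $L_k(\vec x,\vec y)=(k+1)(1-x_k-y_k)$, the minimum over odd~$k$ versus even~$k$ picks out the largest index with $x_k=y_k=1$. By Theorem~\ref{th:mpPTFequalsNN} this puts $\OMB\circ\AND_2$ in $\closHNN$; hence there is some $g\in\HNN$ with $n+1$ anchors (in a polynomially larger dimension) having $\OMB\circ\AND_2$ as a subfunction. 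If $g$ were in $\MAJ\circ\MAJ$ then so would $\OMB\circ\AND_2$ be (circuit classes are closed under substitutions), a contradiction. The function witnessing the theorem is thus $g$, not $\OMB\circ\AND_2$ itself; the ``$n+1$ anchors'' count comes from the $\mpPTF$ term count via Lemma~\ref{lem:mpPTFtoNN}.
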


\begin{proof}
First, we claim that $\OMB \circ \AND_2 \in \closHNN$. Indeed, $f$ is computed by an $\mpPTF$ with $n+1$ terms:
\[
\min\{L_1(\vec{x},\vec{y}),L_3(\vec{x},\vec{y}),\cdots\} \leq \min\{-1,L_2(\vec{x},\vec{y}),L_4(\vec{x},\vec{y}),\cdots\} 
\]
where $L_k(\vec{x},\vec{y}) = (k+1) \cdot (1-x_i-y_i)$. Note that if $x_i=y_i=1$, then $L_i(\vec{x},\vec{y}) = -(i+1)$ and otherwise $L_i(\vec{x},\vec{y}) \geq 0$. Hence, the minimum is obtained at the maximum index $j$ where $x_j=y_j=1$. The claim follows from Theorem \ref{th:mpPTFequalsNN}.

Second, it is known that $\OMB \circ \AND_2 \not \in \MAJ  \circ \MAJ$ by \citep{buhrman2007computation, hajnal1993threshold}. Thus, if $\HNN$ was in $\MAJ  \circ \MAJ$, then we could use the $\closHNN$ representation described above to get a $\MAJ \circ \MAJ$ circuit computing $\OMB \circ \AND_2$, which is a contradiction.
\end{proof}

Finally, we observe a connection between $\mpPTF$s and linear decision lists. This provides additional proof techniques for $\closHNN$ and helps to relate a question of separation of $\closHNN$ and $\closNN$ to the similar question for linear decision lists. The following lemma is proved in Appendix \ref{sec:proofs}.

\begin{lemma} \label{lem:PTF-to-LDL}
\[
\mpPTF(\poly(n)) \subseteq \LDLlw.
\]
More precisely, any $\mpPTF$ with $m$ terms and maximum weight $W$ is equivalent to a linear decision list with length and maximum weight $O(m^2W)$. 
\end{lemma}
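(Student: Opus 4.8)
The plan is to show that the condition $\min_{i \le \ell_1} L_i(\vec{x}) \le \min_{j \le \ell_2} R_j(\vec{x})$ can be decided by a short linear decision list whose queries are threshold gates of weight comparable to the weights appearing in the $L_i$ and $R_j$. The key observation is that an $\mpPTF$ accepts $\vec{x}$ precisely when there is some $i$ with $L_i(\vec{x}) \le R_j(\vec{x})$ for all $j$; equivalently, it rejects when for every $i$ there is a $j$ with $R_j(\vec{x}) < L_i(\vec{x})$. To turn this into a decision list, I would set up, for every ordered pair $(i,j)$, the affine form $D_{ij}(\vec{x}) := R_j(\vec{x}) - L_i(\vec{x})$, whose sign is testable by a threshold gate of weight $O(W)$ on $n$ variables (the constant term contributes at most an additive $O(nW)$ to the threshold $\theta$, which is still $O(m^2 W)$-bounded after absorbing into the comparison in the standard way; more carefully, each $D_{ij}$ has integer coefficients bounded by $2W$ and integer constant bounded by $2nW$, so the associated threshold gate has weight $O(nW)$, and there are $\ell_1 \ell_2 \le m^2$ of them).

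First I would observe that evaluating the $\mpPTF$ reduces to computing $\ell_1 \cdot \ell_2$ sign bits $s_{ij} = \ind[D_{ij}(\vec{x}) \ge 0]$ and then evaluating a fixed monotone formula: accept iff $\bigvee_{i} \bigwedge_{j} s_{ij}$. So it suffices to simulate this depth-two monotone formula on $m^2$ threshold inputs by a linear decision list. The standard way to do this is: process the clauses $C_i := \bigwedge_j s_{ij}$ one at a time, but an $\mathrm{AND}$ of threshold queries is not itself a single threshold query, so instead I would flatten directly. For each pair $(i,j)$ in a fixed order, add the instruction ``if $D_{ij}(\vec{x}) < 0$ (i.e. the negated threshold gate fires) then \dots'' — the point being that the negation $\neg s_{ij}$ is the event that witnesses ``$L_i$ is not the minimizing term.'' The cleanest route is the well-known simulation of a DNF/CNF-of-thresholds by an LDL: an $\mathrm{OR}$ of $\mathrm{AND}$s, where each $\mathrm{AND}$ has $\ell_2$ literals, becomes a decision list of length $\ell_1 \cdot \ell_2$ by listing, for $i = 1, \dots, \ell_1$, the queries ``if $\neg s_{i,1}$ output (go to block $i{+}1$)'', ``if $\neg s_{i,2}$ \dots'', and after the $\ell_2$ queries of block $\ell_1$ that have not triggered, output $1$; reaching the end without early output means some clause was fully satisfied. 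One makes this precise by noting that within block $i$, the first query that triggers sends control to block $i+1$, and if no query in block $i$ triggers then clause $C_i$ holds and we want to output $1$ immediately — so the instruction after block $i$'s queries is ``output $1$'', and the queries of block $i$ are ``if $\neg s_{ij}$ then output $0$ \emph{only if} $i = \ell_1$, else fall through past the output-$1$ instruction.'' Implementing ``fall through past'' requires the $c_i \in \{0,1\}$ labels and careful interleaving, but since $\mathrm{LDL}$s as defined halt at the first satisfied query, the right encoding is: the list is $(\neg s_{1,1} \to 0), \dots$ — no; rather I would use the dual form, querying $s_{ij}$ directly and outputting $0$ when a clause is confirmed false, which is symmetric. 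The correct and simplest statement to verify is that an $\mathrm{OR}$ of $\ell_1$ $\mathrm{AND}$s of threshold gates is computed by an $\mathrm{LDL}$ of length $\ell_1 \ell_2$: this is a folklore fact (each $\mathrm{AND}$-clause contributes its $\ell_2$ thresholds in sequence, with a sentinel).

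The main obstacle is getting the bookkeeping of the decision-list semantics exactly right: an $\mathrm{LDL}$ outputs $c_i$ at the \emph{first} $i$ with $f_i(\vec{x}) = 1$ and $0$ if none fire, so one cannot literally test ``$\bigwedge_j s_{ij}$'' as one query, and one must arrange the queries and the $c_i$ labels so that ``some clause fully satisfied'' is detected. I expect the clean fix is to query the \emph{negations} $\neg s_{ij}$ (which are again threshold gates, via flipping the inequality and adjusting the constant by $1$, costing only $O(1)$ more weight) grouped by clause, interleaved with ``if true, output $0$'' sentinels placed so that clause $i$ being violated (some $\neg s_{ij} = 1$) does \emph{not} immediately output $0$ unless all earlier clauses were also violated — which is not expressible in a flat list without auxiliary structure. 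Because of this subtlety, the actual argument I would write uses the contrapositive: the \emph{complement} of an $\mathrm{OR}$-of-$\mathrm{AND}$s is an $\mathrm{AND}$-of-$\mathrm{OR}$s, i.e. a CNF over the threshold gates, and a CNF of width $\ell_1$ with $\ell_2$ clauses is directly an $\mathrm{LDL}$: query clause $1$ (``is every literal false?'' — not a single threshold either). So in fact the right primitive is: a \emph{decision list over threshold queries} simulating a read-once-style evaluation. I would cite or reprove that $\mathrm{THR} \circ \mathrm{THR}$-like depth-two structures with an $\mathrm{AND}$ or $\mathrm{OR}$ top gate collapse to $\mathrm{LDL}$ with length equal to the product of fan-ins, then apply it to the $\ell_1 \times \ell_2$ array of $D_{ij}$'s, yielding length $O(\ell_1 \ell_2) = O(m^2)$ and weight $O(m^2 W)$ after combining each $D_{ij}$'s weight $O(nW)$ with $n \le m^2$ in the relevant regime (or simply stating weight $O(nmW)$, absorbed into $O(m^2 W)$ when $n = O(m)$, and otherwise noting the bound holds with the natural parameter). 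The arithmetic on the weight bound is routine; the combinatorial simulation of the monotone depth-two formula by the list is the step to get exactly right.
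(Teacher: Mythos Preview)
Your reduction to the $\bigvee_i \bigwedge_j [L_i(\vec x)\le R_j(\vec x)]$ form is correct, but the step you flag as ``folklore''---that an $\OR_{\ell_1}\circ\AND_{\ell_2}$ of threshold gates collapses to an $\LDL$ of length $\ell_1\ell_2$---is the gap. An $\LDL$ halts and outputs at the \emph{first} satisfied query; there is no mechanism for ``this clause failed, fall through to the next block,'' and your several attempts to encode that behavior (interleaving sentinels, passing to the dual CNF form) do not land on a construction. In general, $\AND\circ\THR$ is a polytope-membership test, and an $\OR$ of such tests is a union of polytopes; the nested-halfspace structure of decision lists does not simulate this with length equal to the product of fan-ins. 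You also never use the special structure of the comparisons $[L_i\le R_j]$ beyond writing them down, so nothing in your argument distinguishes this instance from an arbitrary $\OR\circ\AND\circ\THR$.

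A further red flag: your proposed length bound $O(\ell_1\ell_2)=O(m^2)$ does not depend on $W$ at all. If your argument worked it would show $\mpPTF(\infty)\subseteq\LDL$, i.e.\ the polynomial-weight hypothesis would be irrelevant to the \emph{length} of the list (only to the weight of each query). The paper's proof, by contrast, uses the small-weight hypothesis essentially for the length: since every linear form has integer coefficients bounded by $W$, each form takes only polynomially many distinct integer values on $\{0,1\}^n$. After a mild perturbation (multiply all forms by $m$ and shift each by a unique residue so that all values are distinct), one simply lists \emph{every possible value} of every form in increasing order; for each value $a$ attainable by form $F$, the query is ``is $F(\vec x)\le a$?'' with output label $1$ if $F$ is a left form and $0$ if $F$ is a right form. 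The first query to fire identifies exactly the form achieving the global minimum, and its side gives the answer. This is why the length, not just the weight, scales with $W$.
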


\begin{remark}
This lemma enables another technique to prove lower bounds for $\closHNN$ apart from sign-rank. More specifically, it is known that any function without large monochromatic rectangles must have a large linear decision list by \cite{ChattopadhyayMM20linear-decision-lists}.
\end{remark}

\begin{lemma} \label{lem:LDL-to-PDF}
    $\LDL \subseteq \mpPTF(\infty)$.
\end{lemma}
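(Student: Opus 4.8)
The plan is to simulate a linear decision list directly by a single min-plus inequality, where each instruction of the list contributes one term to the left side and one term to the right side, and a carefully chosen geometric (or super-increasing) scaling of the thresholds ensures that the ``earliest firing rule'' dominates the min on both sides. Let the $\LDL$ be given by threshold gates $f_1,\dots,f_m$ with $f_i(\vec x) = \ind[\langle \vec w_i, \vec x\rangle \geq \theta_i]$ and outputs $c_1,\dots,c_m \in \{0,1\}$, with a final default output $0$. Since the domain is finite, by the remark on integer weights we may take all $\vec w_i, \theta_i$ integral, and moreover we may assume a uniform integer gap: there is an integer $M$ (e.g. $M = 1 + \max_i(\|\vec w_i\|_1 + |\theta_i|)$) such that for every $i$ and every $\vec x \in \{0,1\}^n$, the quantity $\langle \vec w_i,\vec x\rangle - \theta_i$ lies in $[-M, M]$, and it is $\geq 0$ exactly when $f_i(\vec x)=1$. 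The key idea is to encode ``rule $i$ is the first to fire'' via magnitudes that decrease geometrically in $i$.

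First I would define, for each rule $i$, a linear form whose value is a large positive number scaled by $(2M+1)^{\,m-i}$ precisely when $f_i(\vec x)=1$, and is ``not large'' otherwise; concretely something like $g_i(\vec x) = (2M+1)^{\,m-i}\big(\langle \vec w_i,\vec x\rangle - \theta_i + c\big)$ for a suitable constant shift $c$, arranged so that $g_i > 0$ exactly when $f_i$ fires and the positive values are strictly ordered by index (smaller index $\Rightarrow$ strictly larger value) while all non-firing contributions are negative and bounded in absolute value below the smallest firing value. Then, routing rule $i$ to the left collection if $c_i = 1$ and to the right collection if $c_i = 0$ (and placing a fixed constant term on each side to handle the all-non-firing default case, which must evaluate to $0$), the comparison $\min(\text{left}) \le \min(\text{right})$ reduces, via the geometric scaling, to comparing only against the first rule that fires: if that rule has output $1$ its term makes the left min dominate, and if its output is $0$ its term makes the right min dominate. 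The arithmetic to be checked is that a single $+1$ at scale $(2M+1)^{m-i}$ beats any accumulation of $\pm M$-sized terms at scales $(2M+1)^{m-j}$ for $j>i$ — this is the standard super-increasing estimate $\sum_{j>i} M(2M+1)^{m-j} < (2M+1)^{m-i}$, which holds since $M \cdot \frac{(2M+1)}{2M} < 2M+1$.

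Because there are $m = \poly(n)$ rules, the resulting $\mpPTF$ has $O(m)$ terms, which is polynomial; the weights are exponential in $m$ (hence in $n$), but that is exactly why the target class is $\mpPTF(\infty)$ with unbounded weight rather than $\mpPTF(\poly(n))$. I expect the main obstacle to be bookkeeping the constant shifts and the default-output term so that (a) every firing form is strictly positive, (b) every non-firing form is strictly negative with absolute value safely below the dominating scale, and (c) when no rule fires at all the inequality evaluates to false (output $0$) — this last point is where the fixed constant terms on each side must be tuned, e.g. a large negative constant $-(2M+1)^{m+1}$ on the left and a slightly less negative one on the right, so that in the no-fire case the left min is strictly smaller than... wait, we need output $0$, so actually the right min must be the smaller, meaning the constant term on the right should be the more negative one. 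Getting the direction of this final tie-breaker right, consistently with the convention in \eqref{eq:mpPTF} that $f(\vec x)=1 \iff \min_i L_i \le \min_j R_j$, is the one place where a sign error is easy to make, so I would verify it on the trivial $\LDL$ with no rules before writing the general construction.
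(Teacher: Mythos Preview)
Your high-level strategy---scale the $i$th rule's linear form by roughly $(2M+1)^{m-i}$ so that the earliest firing rule controls the comparison---is exactly the right idea, and is essentially what the paper invokes by citing \cite[Lemma~22]{hansen2015polynomial} (which proves $\OMB\circ\THR\subseteq\mpPTF(\infty)$, after which one only needs that $\OMB$ is complete for decision lists). But your concrete construction has the sign inverted, and this is not merely a tie-breaker issue. You arrange $g_i>0$ when rule $i$ fires and $g_i<0$ otherwise, with earlier firing rules giving \emph{larger} positive values. The $\mpPTF$ compares \emph{minima}, so on each side the minimum is attained by a negative non-firing term, not by the large positive firing term; the first firing rule does not control the comparison at all. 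Your claim that non-firing contributions are ``bounded in absolute value below the smallest firing value'' is also false for the $g_i$ you wrote: a non-firing $g_1$ has magnitude of order $(2M+1)^{m-1}$, dwarfing a firing $g_m$ of order $1$. A two-rule example already breaks it: take $c_1=0$, $c_2=1$, and an input on which $f_1$ fires and $f_2$ does not. You route $g_2<0$ to the left and $g_1>0$ to the right, so the left minimum $g_2$ is strictly below the right minimum $g_1$, and the $\mpPTF$ outputs $1$, whereas the $\LDL$ outputs $c_1=0$. Your proposed remedy of a very negative default constant on the right then makes that constant the right-hand minimum on \emph{every} input, collapsing the $\mpPTF$ to a constant function---this is the deeper issue your tie-breaker worry was sensing.

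The one-line fix is to negate: set $h_i(\vec x)=-(2M+2)^{m-i}\bigl(2\langle\vec w_i,\vec x\rangle-2\theta_i+1\bigr)$, so that $h_i\le -(2M+2)^{m-i}$ when $f_i$ fires and $h_i\ge (2M+2)^{m-i}$ otherwise. Now the global minimum over all the $h_j$ is attained precisely at the first firing index (your super-increasing estimate goes through verbatim here), so routing $h_i$ to the left when $c_i=1$ and to the right when $c_i=0$, together with a single constant $0$ on the right for the default output, yields a correct $\mpPTF$ with $m+O(1)$ terms. This is the construction underlying the cited lemma of \cite{hansen2015polynomial}; the paper itself gives no direct construction and simply appeals to that result.
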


\begin{proof}
    It was shown in~\cite[Lemma 22]{hansen2015polynomial} that $\OMB \circ \THR \subseteq \mpPTF(\infty)$. Our lemma follows since $\OMB$ is complete for the class of decision lists -- See \cite[Lemma 22]{hansen2015polynomial}.
\end{proof}

It is open whether $\LDLlw$ and $\LDL$ are equal by \cite{ChattopadhyayMM20linear-decision-lists}. Lemmas \ref{lem:PTF-to-LDL} and \ref{lem:LDL-to-PDF} immediately allow us to relate this problem to the problem of separating $\closHNN$ and $\closNN$.

\begin{corollary}
    If $\LDLlw \neq \LDL$, then $\closHNN \neq \closNN$.
\end{corollary}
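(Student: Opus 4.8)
The plan is to prove the contrapositive: assuming $\closHNN = \closNN$, I would derive $\LDLlw = \LDL$. The two lemmas immediately preceding the corollary give the chain needed. First I would take an arbitrary $f \in \LDL$ and aim to show $f \in \LDLlw$. By Lemma~\ref{lem:LDL-to-PDF}, $\LDL \subseteq \mpPTF(\infty)$, and by Theorem~\ref{th:mpPTFequalsNN} we have $\mpPTF(\infty) = \closNN$, so $f \in \closNN$. Under the assumption $\closNN = \closHNN$, this gives $f \in \closHNN = \mpPTF(\poly(n))$, again by Theorem~\ref{th:mpPTFequalsNN}. Finally, Lemma~\ref{lem:PTF-to-LDL} states $\mpPTF(\poly(n)) \subseteq \LDLlw$, so $f \in \LDLlw$. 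Chaining the inclusions yields $\LDL \subseteq \LDLlw$; the reverse inclusion $\LDLlw \subseteq \LDL$ is immediate from the definitions (a low-weight linear decision list is in particular a linear decision list). Hence $\LDLlw = \LDL$, contradicting the hypothesis, which completes the proof.

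The one point that deserves care is that Lemma~\ref{lem:LDL-to-PDF} is stated for the classes $\LDL$ and $\mpPTF(\infty)$ without an explicit weight promise, while Lemma~\ref{lem:PTF-to-LDL} is stated with a quantitative weight bound, so I would make sure the composition respects the polynomial-weight regime: a polynomial-weight $\mpPTF$ with $m$ terms produces, by Lemma~\ref{lem:PTF-to-LDL}, a linear decision list of length and weight $O(m^2 W)$, which remains polynomial when $m$ and $W$ are polynomial. Thus the ``low-weight'' qualifier is genuinely preserved through the last step, and no hidden blow-up occurs. Since this is the only subtlety and it is already handled by the quantitative statement of Lemma~\ref{lem:PTF-to-LDL}, I do not anticipate a real obstacle — the corollary is essentially a one-line composition of previously established inclusions together with the trivial observation $\LDLlw \subseteq \LDL$.
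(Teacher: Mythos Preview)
Your proposal is correct and follows essentially the same approach as the paper: both establish the chain $\closHNN = \mpPTF(\poly(n)) \subseteq \LDLlw \subseteq \LDL \subseteq \mpPTF(\infty) = \closNN$ and argue by contrapositive that equality of the endpoints forces $\LDLlw = \LDL$. Your extra care about the polynomial-weight bookkeeping in Lemma~\ref{lem:PTF-to-LDL} is fine but not strictly necessary here, since that lemma already asserts the class inclusion $\mpPTF(\poly(n)) \subseteq \LDLlw$.
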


\begin{proof}
    From Theorem \ref{th:mpPTFequalsNN} and Lemmas~\ref{lem:PTF-to-LDL},~\ref{lem:LDL-to-PDF}, we have the following sequence of inclusions.
    \[
    \closHNN = \mpPTF(\poly) \subseteq \LDLlw \subseteq \LDL \subseteq \mpPTF(\infty) = \closNN,
    \]
    If $\closHNN = \closNN$, then the whole sequence of inclusions collapses and, in particular, $\LDLlw = \LDL$.
\end{proof}

\section{\texorpdfstring{$\kNN$ vs. Circuits}{}}
\label{sec:kNNvskSTAT}

In this section, we give a circuit-style characterization of $\kNN$ and provide connections to known circuit classes. From these results, we obtain a separation between $\kNN$ and $\NN$. Additionally, our results imply complexity theoretic barriers for proving superpolynomial lower bounds for $\kNN$ representations of explicit functions.

\subsection{\texorpdfstring{Characterization for small $k$}{}} 
\label{sec:constantkNN}

Here, we use the connection to $\mpPTF$ representations to get our first results on $k$-nearest neighbors complexity. In particular, we establish connection of $k$-nearest neighbors representations for constant $k>0$ to $\closNN$ and prove a lower bound on $k$-nearest neighbors complexity for sublinear $k$. 

\begin{theorem}
\label{th:kNNmpPTF}
Any Boolean function with an $m$-anchor $\kNN$ representation is computed by an $\mpPTF$ with $\binom{m}{k}$ terms.
\end{theorem}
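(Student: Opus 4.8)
The plan is to convert a $\kNN$ representation directly into an $\mpPTF$ by reasoning about which sets of $k$ anchors can be the $k$ nearest ones to a given input $\vec{x}$, and recalling the standard ``linearization'' trick that distances $\Delta(\vec{x},\vec{a})$ differ from linear forms in $\vec{x}$ only by a term $\|\vec{x}\|^2$ that is common to all anchors. Concretely, write $P\cup N = \{\vec{a}_1,\dots,\vec{a}_m\}$ and note $\Delta(\vec{x},\vec{a}) = \|\vec{x}\|^2 - 2\langle \vec{x},\vec{a}\rangle + \|\vec{a}\|^2$, so for comparing distances we may replace each anchor $\vec{a}$ by the linear form $L_{\vec{a}}(\vec{x}) := -2\langle\vec{x},\vec{a}\rangle + \|\vec{a}\|^2$ (with integer coefficients after clearing denominators, by the usual argument). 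The $k$ nearest anchors to $\vec{x}$ are exactly the $k$ anchors $\vec{a}$ with the $k$ smallest values $L_{\vec{a}}(\vec{x})$.

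Now enumerate all $\binom{m}{k}$ size-$k$ subsets $A \subseteq P\cup N$. For each such $A$, the condition ``$A$ is a valid set realizing $f(\vec{x})=1$'' as in Definition~\ref{def:kNN} decomposes into: (i) $A$ consists of the $k$ strictly-closest anchors, i.e. $\max_{\vec{a}\in A} L_{\vec{a}}(\vec{x}) < \min_{\vec{b}\notin A} L_{\vec{b}}(\vec{x})$; and (ii) $|A\cap P| \ge |A\cap N|$, which is a condition on $A$ alone and does not depend on $\vec{x}$. So $f(\vec{x})=1$ iff there exists a ``good'' set $A$ (meaning one satisfying (ii)) which also satisfies (i). The key step is to express ``$f(\vec{x})=1$'' as a single min-plus inequality. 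Observe that $f(\vec{x})=1$ fails precisely when every good $A$ is separated by some anchor outside it that is at least as close as some anchor inside it; equivalently, letting $\mathcal{A}$ be the family of good sets, $f(\vec{x})=1 \iff \min_{A\in\mathcal{A}} \max_{\vec{a}\in A} L_{\vec{a}}(\vec{x}) < \min_{\vec{b}\notin A}\dots$ — but the dependence of the right side on $A$ is the obstacle to folding this into the $\mpPTF$ form $\min_i L_i \le \min_j R_j$.

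I expect the main obstacle is exactly this: an $\mpPTF$ has one fixed min-of-linear-forms on each side, whereas the naive translation gives a min-over-$A$ of a max-over-$\vec{a}\in A$, and a pairing between left-terms and right-terms that varies with $A$. The way around it, which I would pursue, is to exploit the crucial structural fact that if $A$ is the set of the $k$ nearest anchors to $\vec{x}$, then it is the \emph{unique} size-$k$ set satisfying (i), and moreover for that particular $A$ we have $\max_{\vec a\in A} L_{\vec a}(\vec x) \le L_{\vec b}(\vec x)$ for all $\vec b\notin A$ while for a \emph{non}-valid $A$ the separating inequality is violated. This lets one replace the per-$A$ right-hand side: define, for each good set $A$, a left term that equals $\max_{\vec a\in A}L_{\vec a}(\vec x)$ only when $A$ is the true nearest set (and is ``large'' otherwise), so that the overall $\min$ over good $A$ picks out the correct term exactly when $f(\vec x)=1$; symmetrically set up the right-hand side from the bad sets (those with $|A\cap P|<|A\cap N|$), giving $\binom{m}{k}$ total terms split between the two sides. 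I would carry this out by (1) fixing the linear forms $L_{\vec a}$, (2) for each subset $A$ defining its ``score'' $s_A(\vec x) = \max_{\vec a\in A} L_{\vec a}(\vec x)$ and arguing $A$ is the nearest set iff $s_A(\vec x) < s_{A'}(\vec x)$ for all $A'\neq A$ — so $f(\vec x)=1$ iff $\arg\min_A s_A(\vec x)$ is good — and (3) checking this $\arg\min$ condition is literally $\min_{A \text{ good}} s_A(\vec x) \le \min_{A \text{ bad}} s_A(\vec x)$ up to handling the strictness/tie-breaking (inputs achieving a tie can be excluded or handled by a perturbation of weights, since the domain is finite), and finally expanding each $s_A$ — a max of $k$ linear forms — is not itself linear, so the last thing to resolve is that a ``min-plus PTF'' as defined only allows $\min$ of linear forms, not $\min$ of $\max$es; this forces replacing each $s_A$-term by its $k$ constituent forms, but then one must re-examine whether the clean $\arg\min$ argument survives, and I anticipate needing a more careful direct comparison-by-comparison argument (pairing each ``inside'' anchor with each ``outside'' anchor) rather than the slick $\arg\min$ shortcut — that careful bookkeeping, ensuring the term count stays $\binom{m}{k}$, is where the real work lies.
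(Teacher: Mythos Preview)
Your setup is essentially the paper's: reduce distances to linear forms $L_{\vec a}(\vec x)$, partition the $\binom{m}{k}$ size-$k$ subsets $A$ into ``good'' ($|A\cap P|\ge |A\cap N|$) and ``bad'', and argue that $f(\vec x)=1$ iff the minimizing subset is good. Where you diverge is in the choice of score $s_A(\vec x)=\max_{\vec a\in A} L_{\vec a}(\vec x)$. You correctly spot that this $\max$ is not linear, and you end the proposal without a way out; the suggestion to ``expand each $s_A$ into its $k$ constituent forms'' does not work, since after expansion the min on each side is just $\min_{\vec a} L_{\vec a}(\vec x)$ and all information about which subset you came from is lost.

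The missing idea is a one-word fix: replace $\max$ by \emph{sum}. Set
\[
s_A(\vec x)\ :=\ \sum_{\vec a\in A} L_{\vec a}(\vec x),
\]
which is a single linear form in $\vec x$. Over all size-$k$ subsets $A$, the sum $s_A(\vec x)$ is minimized precisely when $A$ consists of the $k$ anchors with the $k$ smallest values $L_{\vec a}(\vec x)$, i.e., the $k$ nearest anchors (uniqueness is guaranteed by the $\kNN$ definition). Hence
\[
f(\vec x)=1 \iff \min_{A\ \text{good}} s_A(\vec x)\ \le\ \min_{A\ \text{bad}} s_A(\vec x),
\]
which is literally an $\mpPTF$ with $\binom{m}{k}$ linear terms, one per subset. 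This is exactly the paper's argument. Your $\max$-based score has the same $\arg\min$ property but fails linearity; the sum has both, and that is the whole proof.
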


\begin{proof}
We prove only the first statement as both arguments are identical. As noted in the proof of Theorem \ref{th:mpPTFequalsNN}, the distances from anchors to a query point $\vec{x}$ are linear forms $L_1(\vec{x}),\cdots,L_m(\vec{x})$. Assign each linear form a label $\ell_1,\cdots,\ell_m \in \{1,-1\}$ where a positive label indicates placement on the left-hand side of the $\mpPTF$ and vice versa. 

Then, consider the collection $A^+(\vec{x}) = \{ L_{i_1}(\vec{x}) + \cdots + L_{i_k}(\vec{x}) \ | \ \ell_{i_1} + \cdots + \ell_{i_k} \geq 0\}$ and the compliment $A^-(\vec{x}) = \{ L_{i_1}(\vec{x}) + \cdots + L_{i_k}(\vec{x}) \ | \ \ell_{i_1} + \cdots + \ell_{i_k} < 0\}$. The resulting $\mpPTF$ with $\binom{m}{k}$ terms, $\ind[\min A^+(\vec{x}) \leq \min A^-(\vec{x})]$, realizes the original $\kNN$ representation: The minimum is attained by groups of $k$-nearest neighbors and if any such group has a positive majority then the inequality holds.
\end{proof}

It follows that Boolean functions with $m$-anchor $\kNN$ representations can be represented in $\overline{\NN}$ with $\binom{m}{k}$ anchors. These results generalize to both weighted $\kNN$ and to non-Boolean inputs. See Appendix~\ref{sec:proofs} for a discussion.

As a consequence of Theorem \ref{th:kNNmpPTF}, sign-rank lower bounds (e.g., Corollary \ref{cor:signrankLB}) also apply to $\kNN$. In particular, we get an exponential lower bound for $\kNN$ with $k = O(n^{1- \epsilon})$ for constant $\epsilon>0$. This addresses an open question posed in~\cite{hajnal2022nearest} regarding $k$-nearest neighbors complexity.

\begin{corollary}
\label{cor:kNNsignrank}
Any $\closkNN$ representation of $\IP$ or $f_n(\vec{x},\vec{y}) := \bigwedge_{i=1}^n \bigvee_{j=1}^{n^2} (x_{i,j} \wedge y_{i,j})$ requires $2^{\Omega(n/k)}$ anchors\footnote{The constant in $\Omega$-notation does not depend on $k$ or $n$.}. 
\end{corollary}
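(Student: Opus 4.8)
The plan is to combine Theorem~\ref{th:kNNmpPTF} with the known sign-rank lower bounds used in Corollary~\ref{cor:signrankLB}. Suppose $f \in \{\IP, f_n\}$ (restricted to $n$-variable inputs) has a $\closkNN$ representation with $m$ anchors. By definition, this means some $g \in \kNN$ on $\widetilde n = \poly(n)$ variables has an $m$-anchor $\kNN$ representation, and $f$ is a subfunction of $g$ via a substitution of variables. First I would apply Theorem~\ref{th:kNNmpPTF} to $g$: it is computed by an $\mpPTF$ with $\binom{m}{k}$ terms. Since $\mpPTF$s are closed under substitution of variables (identifying variables and setting variables to constants preserves the linear-form structure and only potentially decreases weights), $f$ itself is computed by an $\mpPTF$ with at most $\binom{m}{k}$ terms.

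Next I would invoke the sign-rank (equivalently, large monochromatic rectangle) argument already cited for Corollary~\ref{cor:signrankLB}: both $\IP$ and $f_n$ have exponentially small sign-rank (for $f_n$ this follows from~\cite{hajnal2022nearest}, for $\IP$ from the classical $2^{\Omega(n)}$ sign-rank bound). The key quantitative point is that an $\mpPTF$ with $T$ terms has sign-rank at most $\poly(T)$ — indeed, writing the $\mpPTF$ as $\min_i L_i \le \min_j R_j$, one expresses the sign as a disjunction over $i,j$ of the exact-threshold conditions $L_i \le R_j$ (with suitable tie-handling), each of which is a single linear threshold in the combined variable/monomial space, giving a matrix decomposition of rank $O(T^2)$; hence $\mathrm{sign\text{-}rank}(f) = T^{O(1)}$. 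Combining with $\mathrm{sign\text{-}rank}(f) = 2^{\Omega(n)}$ forces $\binom{m}{k} = 2^{\Omega(n)}$.

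Finally, I would convert the binomial bound into a bound on $m$. Using $\binom{m}{k} \le m^k$, the inequality $m^k \ge 2^{\Omega(n)}$ gives $m \ge 2^{\Omega(n)/k} = 2^{\Omega(n/k)}$, and the hidden constant in the exponent is the (absolute) constant coming from the sign-rank lower bound, which depends on neither $k$ nor $n$ — matching the footnote's claim. The proof is essentially a bookkeeping chain, so the only real content is the observation (already implicit in the proof of Corollary~\ref{cor:signrankLB}) that polynomially-many-term $\mpPTF$s have polynomially bounded sign-rank; the main thing to be careful about is that $\widetilde n = \poly(n)$ is fine because we only need the sign-rank lower bound on the \emph{subfunction} $f$, which is on $n$ variables, not on the ambient function $g$. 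I would remark that the same proof applies verbatim to $\kHNN$, replacing $\mpPTF(\infty)$ with $\mpPTF(\poly(n))$ throughout.
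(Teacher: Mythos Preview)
Your proposal is correct and follows essentially the same route as the paper: apply Theorem~\ref{th:kNNmpPTF} to pass from an $m$-anchor $\closkNN$ representation to an $\mpPTF$ with $\binom{m}{k}$ terms, invoke the sign-rank lower bound already used for Corollary~\ref{cor:signrankLB}, and finish with $\binom{m}{k}\le m^k$. The paper phrases the middle step as ``$\closNN$ representation with $\binom{m}{k}$ anchors via Theorem~\ref{th:mpPTFequalsNN}, then Corollary~\ref{cor:signrankLB},'' which is the same argument packaged differently.

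One small correction to your parenthetical sketch: the condition $\min_i L_i \le \min_j R_j$ is \emph{not} a disjunction over pairs $(i,j)$ of $L_i \le R_j$; it is $\exists i\, \forall j\colon L_i \le R_j$ (your version would compute the strictly weaker $\min_i L_i \le \max_j R_j$). This does not affect the proof, since you correctly note that the needed bound---sign-rank at most the number of $\mpPTF$ terms---is already established in the proof of Corollary~\ref{cor:signrankLB} (citing \cite{hansen2015polynomial}); you just shouldn't rely on your own decomposition for it. Also, the sign-rank lower bound for $f_n$ is due to \cite{razborov2010sign}, not \cite{hajnal2022nearest}.
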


\begin{proof}
Assume that $\IP$ (or $f_n$) has a $\closkNN$ representation with $m$ anchors. By Theorems \ref{th:mpPTFequalsNN} and \ref{th:kNNmpPTF} , $\IP$ has an $\closNN$ representation with $\binom{m}{k} \leq m^k$ anchors. By Corollary~\ref{cor:signrankLB}, we have $m^k \geq 2^{\Omega(n)}$ and thus $m \geq 2^{\Omega(n/k)}$.
\end{proof}

\subsection{\texorpdfstring{Characterization for arbitrary $k$}{}} 

In this section, we generalize the ideas of Theorem \ref{th:mpPTFequalsNN} to the closure of $\kNN$, yielding further connections between nearest neighbors and circuit complexity. All proofs and further discussion can be found in Appendix \ref{sec:proofs}.

\begin{definition} 
\label{def:kSTAT}
Define by $\kSTAT$ the class of functions $f \colon \{0,1\}^n \to \{0,1\}$ representable by an inequality between $k$-statistics of two sets consisting of a polynomial number of linear forms: Given $\{L_{1}, \cdots, L_{\ell_1}\} \cup \{R_{1}, \cdots, R_{\ell_2}\}$ and integers $k_l$ and $k_r$, 
\begin{equation}
\label{eq:kSTAT}
f(\vec{x}) = 1 \iff (L_{1}(\vec{x}), \cdots, L_{\ell_1}(\vec{x}))_{(k_l)} < (R_{1}(\vec{x}), \cdots, R_{\ell_2}(\vec{x}))_{(k_r)}
\end{equation}
and $\ell_1+\ell_2$ is bounded by a polynomial in $n$.

As usual, we can assume that all coefficients in the linear forms are integers. Define the subclass $\kSTATlw$ where all coefficients are bounded by a polynomial in $n$\footnote{$\mpPTF$ can be viewed as a special case of $\kSTAT$ in which $k_l=k_r=1$.}.
\end{definition}

Note that we can reduce Definition~\ref{def:kSTAT} to the case of $k_l=k_r$ with only a linear increase in the size. This can be done by adding ``dummy'' linear forms that are always smaller than all others.

\begin{theorem}
\label{th:kNNequalskSTAT}
\[
\closkNN = \kSTAT, \ \closkHNN = \kSTATlw.
\]
\end{theorem}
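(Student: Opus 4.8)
The plan is to mirror the strategy behind Theorem \ref{th:mpPTFequalsNN}, generalizing both directions from $\mpPTF$ (the $k_l=k_r=1$ case) to $k$-statistics. I would prove the two equalities $\closkNN = \kSTAT$ and $\closkHNN = \kSTATlw$ in parallel, tracking the weight bounds throughout: the real-anchor case gives arbitrary integer coefficients, while the Boolean-anchor case keeps coefficients polynomially bounded since the relevant linear forms are the Hamming-distance expansions $\Delta(\vec{x},\vec{p}) = \Delta(\vec{p}) + \sum_i (1-2p_i)x_i$, whose coefficients lie in $\{-1,0,1\}$ (plus a constant term bounded by $n$).

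For the inclusion $\closkNN \subseteq \kSTAT$: given an $m$-anchor $\kNN$ representation of $g$, the squared distances to the anchors are linear forms $L_1(\vec{x}),\dots,L_m(\vec{x})$ (after discarding the $\|\vec{x}\|^2$ term, which is common to all). The condition ``$\vec{x}$ is classified $1$'' says that among the $k$ anchors of smallest distance, at least $\lceil k/2 \rceil$ are positive. I would express this as an inequality between order statistics: say $g(\vec{x})=1$ iff the $\lceil k/2\rceil$-th smallest value among the positive-anchor forms is strictly less than the $\lfloor k/2\rfloor + 1$-th smallest among the negative-anchor forms (intuitively, there are $\geq \lceil k/2\rceil$ positive anchors strictly closer than the $(\lfloor k/2\rfloor+1)$-th closest negative anchor, which forces a positive majority in the top $k$). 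Care is needed with the strict-versus-nonstrict inequalities and ties, and with the fact that $\kNN$'s definition has a ``gap'' (inputs where no valid set $A$ exists) — but since we are computing a Boolean function, $f$ is defined everywhere and on its domain one of the two strict conditions holds, so the single strict inequality in \eqref{eq:kSTAT} can be tuned to agree with $f$. Finally, a subfunction of such a $g$ is obtained by substituting/duplicating variables, which only substitutes into the linear forms and preserves the $\kSTAT$ form and the polynomial term count; hence $\closkNN \subseteq \kSTAT$, and the Boolean version lands in $\kSTATlw$ by the weight remark above.

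For the reverse inclusion $\kSTAT \subseteq \closkNN$: given a $\kSTAT$ representation with forms $\{L_i\}\cup\{R_j\}$ and parameters $k_l,k_r$, first normalize to $k_l=k_r=:k'$ using the stated dummy-form trick. The goal is to realize each linear form $L_i(\vec{x})$ (resp. $R_j(\vec{x})$) as the squared distance from $\vec{x}$ to some anchor, up to an additive constant common to all anchors — this is exactly the standard device: $\Delta(\vec{x},\vec{p}) - \|\vec{x}\|^2 = \Delta(\vec{p}) - 2\langle\vec{p},\vec{x}\rangle$ is affine in $\vec{x}$ with slope $-2\vec{p}$, so by choosing $\vec{p}$ (and shifting into extra ``constant'' coordinates, which is where the closure/substitution operation earns its keep) one matches any prescribed integer linear form; for the Boolean case one uses only $0/1$ coordinates and the $\{-1,0,1\}$-coefficient forms, which is why bounded weight corresponds to Boolean anchors. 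One then has to arrange the additive constants so that the $k'$-nearest-neighbor majority vote on the combined anchor set reproduces exactly the order-statistic inequality \eqref{eq:kSTAT}; this may require padding with far-away anchors of a fixed label to control which anchors can appear among the $k'$ closest, and choosing $k = 2k'-1$ (or similar) so that a positive majority among the $k$ closest is equivalent to ``$\geq k'$ of them positive.''

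The main obstacle I expect is this last bookkeeping step — faithfully translating between the majority-of-$k$-closest semantics of $\kHNN/\kNN$ and the clean single inequality between two order statistics in \eqref{eq:kSTAT}, in both directions, while (a) handling ties in distances, (b) handling the ``undefined'' region in the $\kNN$ definition without it leaking into the Boolean function, and (c) keeping the number of anchors/terms polynomial and, in the Boolean case, the coefficients polynomially bounded after the dummy-form normalization and the constant-coordinate padding. The linear-algebraic core (linear form $\leftrightarrow$ affine distance function) is routine and identical to Theorem \ref{th:mpPTFequalsNN}; the combinatorics of the voting rule is where the real work lies.
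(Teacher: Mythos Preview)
Your proposal is correct and follows essentially the paper's route: both directions turn on the equivalence between ``positive majority among the $k$ closest'' and a single order-statistic inequality, which the paper isolates as a standalone lemma (crediting \cite{patrick1970generalized}) and then applies in each direction together with the linear-form $\leftrightarrow$ distance translation already established for Theorem~\ref{th:mpPTFequalsNN}. Your anticipated bookkeeping concerns (ties, even $k$, the undefined region of Definition~\ref{def:kNN}) are real but minor and are resolved exactly as you suggest---perturb to make all values distinct and take $k=2k'-1$ after normalizing $k_l=k_r$.
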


Next, we provide another equivalent form of $\kSTAT$ that is sometimes more convenient.

\begin{theorem} \label{thm:altern-k-stat}
    The class $\kSTAT$ consists exactly of functions $f \colon \{0,1\}^n \to \{0,1\}$
    for which there exist linear forms $\{L_1, \cdots, L_p\}$ with $p = \poly(n)$, a positive integer $k$, and a labelling function $\lab \colon \{1,\cdots,p\} \to \{0,1\}$, such that for all $\vec{x}$, 
    \begin{equation}
    \label{eq:altkSTAT}
    f(\vec{x})=1 \iff (L_1(\vec{x}), \cdots, L_p(\vec{x}))_{(k)} = L_i(\vec{x}) \text{ for some $i$ with } \lab(i)=1.
    \end{equation}
    The class $\kSTATlw$ consists exactly of functions with the same representation with polynomial-size coefficients in the linear forms.
\end{theorem}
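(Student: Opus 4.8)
The plan is to show the two representations are equivalent by converting between them in both directions, keeping the number of linear forms polynomial and (for the $\kSTATlw$ statement) the coefficients polynomially bounded. Recall that by the remark after Definition~\ref{def:kSTAT} we may assume $k_l = k_r = k$ in the $\kSTAT$ representation at the cost of only a linear blow-up, so I will work with a single parameter $k$ throughout.

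\textbf{From \eqref{eq:altkSTAT} to \eqref{eq:kSTAT}.} Given forms $L_1,\dots,L_p$, the parameter $k$, and the labelling $\lab$, I would simply split the multiset of forms according to their labels: put $\{L_i : \lab(i)=1\}$ into the ``left'' set and $\{L_i : \lab(i)=0\}$ into the ``right'' set, and set $k_l = k_r = k$ — but this is not quite right since the $k$-th smallest of the full list need not be the $k$-th smallest of either sublist. The clean fix is to observe that \eqref{eq:altkSTAT} says: among the $k$ smallest values of the combined list (with multiplicity / tie-breaking handled carefully), at least one comes from a $1$-labelled form. Equivalently, writing $\alpha$ for the $k$-th smallest of $(L_1,\dots,L_p)$ restricted to the $1$-labelled indices and $\beta$ for the $k$-th smallest restricted to the $0$-labelled indices, one checks that the $k$-th smallest overall equals a $1$-labelled form iff $\alpha \le \beta$ — this is the same bookkeeping as in the $\mpPTF$/$\kNN$ correspondence and reduces to a $\kSTAT$ inequality (handling the strict-versus-non-strict inequality and ties by perturbing coefficients, exactly as in the standard reduction that integer coefficients allow). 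This direction should be routine.

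\textbf{From \eqref{eq:kSTAT} to \eqref{eq:altkSTAT}.} Starting from $\{L_i\}_{i\le\ell_1} \cup \{R_j\}_{j\le\ell_2}$ with common parameter $k$, I want to realize the event $(\vec L)_{(k)} < (\vec R)_{(k)}$ as ``the $k'$-th statistic of one combined list is a specially-labelled form.'' The idea is to form the combined list consisting of all the $L_i$ (labelled $1$) together with all the $R_j$ (labelled $0$), but duplicated/shifted so that the ordering of the combined list encodes which of $(\vec L)_{(k)}, (\vec R)_{(k)}$ is smaller. Concretely, I would use a counting argument: $(\vec L)_{(k)} < (\vec R)_{(k)}$ holds iff the number of $L_i$'s that are $< (\vec R)_{(k)}$ is at least $k$, i.e. iff strictly fewer than $\ell_2 - k + 1$ of the $R_j$'s are $\le$ the $k$-th smallest $L_i$... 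The cleanest route is: $(\vec L)_{(k)} < (\vec R)_{(k)}$ iff in the merged sorted list of all $\ell_1 + \ell_2$ forms, position $k$ (counting from the bottom, with a fixed tie-break putting $L$'s before $R$'s at equal value) is occupied by some $L_i$ — so taking the combined list $\{L_i\} \cup \{R_j\}$, $\lab = 1$ on the $L$'s and $0$ on the $R$'s, and the parameter $k$, gives exactly \eqref{eq:altkSTAT}. The only subtlety is the tie-breaking convention, which I would enforce by replacing $L_i(\vec x)$ with $N\cdot L_i(\vec x)$ and $R_j(\vec x)$ with $N\cdot R_j(\vec x) + 1$ for a suitable constant $N$ (e.g.\ $N = n\cdot(\text{max weight})+2$) so that $L$-values and $R$-values are never equal yet the relative order of same-type forms is preserved and strict inequalities become the non-strict ones on the combined list. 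This keeps the coefficient blow-up polynomial, which gives the $\kSTATlw$ version for free.

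\textbf{Main obstacle.} The conceptual content is small; the real work is getting the tie-breaking and the strict/non-strict inequality conventions exactly consistent so that ``$k$-th statistic of the merged list is a $1$-labelled form'' matches ``$(\vec L)_{(k)} < (\vec R)_{(k)}$'' on the nose, including degenerate cases where many forms coincide, and doing so with only a polynomial increase in weight so that the low-weight classes correspond. I expect this to be the step requiring care, but it is the same kind of perturbation argument already used implicitly in Theorems~\ref{th:kNNmpPTF} and~\ref{th:kNNequalskSTAT}, so it should go through without new ideas.
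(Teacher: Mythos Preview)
Both directions of your reduction rest on combinatorial claims that are false, not merely on tie-breaking technicalities.

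For \eqref{eq:kSTAT} $\to$ \eqref{eq:altkSTAT}, you assert that $(\vec L)_{(k)} < (\vec R)_{(k)}$ holds iff position $k$ of the merged sorted list of all $L_i$'s and $R_j$'s is occupied by an $L$. Take $L=\{1,3\}$, $R=\{2,5\}$, $k=2$: here $(\vec L)_{(2)}=3<5=(\vec R)_{(2)}$, yet the merged list $1,2,3,5$ has an $R$ in position $2$. The correct statement (essentially Lemma~\ref{lem:kSTAT vs. Majority}) is that $(\vec L)_{(k_l)} < (\vec R)_{(k_r)}$ holds iff at least $k_l$ of the $k_l+k_r-1$ smallest merged values are $L$'s --- a \emph{count}, not the identity of a single position. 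Converting that count into ``the $k'$-th element has label $1$'' is the real content of this direction. The paper achieves it by replacing each $L_i$ with $k_l+k_r$ slightly shifted copies and each $R_j$ with $k_l+k_r+1$ copies, labelling the first $k_l$ copies in each block by $0$ and the rest by $1$, and taking $k'=(k_l+k_r-1)(k_l+k_r+1)+1$; the offset between block sizes makes the index of the $k'$-th element within its block equal to the number of $L$'s among the smallest $k_l+k_r-1$ original forms.

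For \eqref{eq:altkSTAT} $\to$ \eqref{eq:kSTAT}, you first misread \eqref{eq:altkSTAT} as ``at least one of the $k$ smallest is label-$1$'' (it says the form \emph{at position $k$} has label $1$), and then claim this is equivalent to $\alpha\le\beta$ where $\alpha,\beta$ are the $k$-th statistics of the label-$1$ and label-$0$ sublists respectively. Take label-$1$ forms $\{1,3\}$, label-$0$ forms $\{2,4\}$, $k=2$: the second smallest overall is $2$ (label-$0$), yet $\alpha=3\le 4=\beta$. The paper's construction here is different: after making all values distinct and even, it puts \emph{both} $L_i$ and $L_i+1$ into the $\kSTAT$ representation, one on each side depending on $\lab(i)$. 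Because the values are spaced by at least $2$, the $k$-th statistic on each side is forced to be either $v_k$ or $v_k+1$ (where $v_k$ is the original $k$-th value), and the strict inequality between sides reads off $\lab$ at that index.

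So the perturbation you flag as the ``main obstacle'' is not where the difficulty lies; what is missing is a nontrivial gadget in each direction (duplication with carefully chosen multiplicities and shifts) that converts between ``label at a fixed rank'' and ``comparison of order statistics of two lists.''
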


Now we show that some well-known circuit classes, for which we do not have any known lower bounds, are computable by $\closkHNN$.

\begin{theorem}
\label{th:SYMMAJinkSTAT}
    $\SYM \circ \MAJ \subseteq \kSTATlw$.
\end{theorem}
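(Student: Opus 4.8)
The plan is to reduce a $\SYM \circ \MAJ$ circuit to the alternative form of $\kSTAT$ given in Theorem~\ref{thm:altern-k-stat}, i.e. to exhibit a polynomial-size family of linear forms with polynomial-size coefficients, an integer $k$, and a labelling, so that the labelled $k$-statistic computes the circuit. Suppose $f = g(h_1,\dots,h_s)$ where each $h_j \in \MAJ$ is a threshold gate with polynomial weights and $g \colon \{0,1\}^s \to \{0,1\}$ is symmetric, so $g$ depends only on $t := h_1(\vec{x}) + \cdots + h_s(\vec{x})$. The key idea is that we can recover the count $t$ from a $k$-statistic of a suitable pool of linear forms: for each gate $h_j$ with $h_j(\vec{x}) = \ind[\langle \vec{w}_j,\vec{x}\rangle \ge \theta_j]$, introduce the two linear forms $A_j(\vec{x}) := \langle \vec{w}_j,\vec{x}\rangle - \theta_j$ and its ``shifted complement'' so that exactly one of the pair is nonnegative and the other is negative, with the sign of $A_j$ encoding the bit $h_j(\vec{x})$. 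More concretely, after rescaling/padding so all $2s$ forms take values in a common bounded integer range $[-M, M]$ (with $M = \poly(n)$ since weights are polynomial), one arranges that the number of forms in the pool that are $\ge 0$ equals $s + t$ (one contribution of $1$ from each of the $s$ pairs, plus one extra from each satisfied gate). Then the rank of the form sitting at the $0/{-1}$ boundary — equivalently, the $k$-statistic for an appropriate fixed $k$ (roughly $k = sM$ after shifting everything to be positive) — determines $t$ exactly, and we set $\lab(i) = g(\text{the value of }t\text{ that index }i\text{ witnesses})$.

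The technical heart is engineering the pool of linear forms so that (i) the value of $t$ is read off from which index attains a fixed-rank statistic, and (ii) the labelling is well-defined, meaning that whenever two different query points $\vec{x},\vec{x}'$ have the same count $t$, the $k$-statistic is attained by forms carrying the same label. The cleanest route is probably the padding trick already used in the paper (the remark after Definition~\ref{def:kSTAT} and the ``dummy forms'' idea): add a block of constant linear forms $\{-M, -M+1, \dots, -1, 0, 1, \dots\}$ interleaved with the gate forms so that the sorted order has a rigidly predictable ``ruler'' structure, and the only thing that slides along the ruler is the block of $s+t$ nonnegative gate-forms. Choosing $k$ to point at the top of the constant ruler means the $k$-statistic is attained either by a constant form (if $t$ is small) or by one of the $A_j$'s — and in either case the \emph{position} in the sorted sequence is a strictly increasing function of $t$. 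One then defines $\lab$ on the constant forms and on the $A_j$'s according to $g$, carefully handling the boundary cases where $t$ changes the identity of the form at rank $k$; since $g$ is symmetric, all that matters is $t$, so the label is consistent. Finally, bounding the number of forms by $O(s + M) = \poly(n)$ and the coefficients by $M = \poly(n)$ gives membership in $\kSTATlw$, and Theorem~\ref{thm:altern-k-stat} converts this to the $\kSTAT$-style definition.

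The main obstacle I anticipate is item (ii) above: making the labelling genuinely well-defined at the rank-$k$ boundary. The subtlety is that $\vec{x}_{(k)}$ is a \emph{value}, and the representation in~\eqref{eq:altkSTAT} asks whether that value is attained by \emph{some} labelled-$1$ form; if the $k$-statistic value can be simultaneously attained by a label-$0$ form and a label-$1$ form for some $\vec{x}$, the representation may misbehave. This is exactly why the constant ``ruler'' of consecutive integers $-M,\dots,0$ is useful — it guarantees that distinct forms in the pool take distinct values except for deliberate ties we control, so one can always break ties in favor of the correct count $t$. A second, smaller obstacle is matching parity/offset constants precisely so that ``$s+t$ forms are $\ge 0$'' holds on the nose rather than up to an additive slack; this is routine bookkeeping (shift every form by a fixed integer, double the forms if needed to avoid a strict-vs-nonstrict mismatch between the $\MAJ$ gate's ``$\ge$'' and $\kSTAT$'s ``$<$''), but it is the kind of detail that must be gotten exactly right. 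Once the pool is set up, the verification that the labelled $k$-statistic equals $g(t) = f(\vec{x})$ is immediate.
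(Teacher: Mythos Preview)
Your high-level strategy matches the paper's: reduce to the alternative form of $\kSTAT$ from Theorem~\ref{thm:altern-k-stat}, insert a ``ruler'' of constant linear forms, and read off the count $t = h_1(\vec{x}) + \cdots + h_s(\vec{x})$ from which ruler mark the $k$-statistic lands on. So the approach is essentially the same, but your execution is considerably more elaborate than necessary, and the issue you flag as ``the main obstacle'' is exactly the one the paper's construction sidesteps with one clean trick you are missing.

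The paper uses a \emph{single} form per gate and a ruler of length only $s+1$. After normalising thresholds to $0$, it takes the forms
\[
(s+2)L_1(\vec{x}),\ \ldots,\ (s+2)L_s(\vec{x}),\ 1,\ 2,\ \ldots,\ s+1
\]
and sets $k = s+1$. The point of the factor $s+2$ is that each scaled gate form is either $\le 0$ or $\ge s+2$, hence lies entirely \emph{outside} the interval $[1,s+1]$ occupied by the constants. Consequently the $k$-th statistic is \emph{always} one of the constants $1,\ldots,s+1$: it equals $i$ precisely when $s-(i-1)$ of the $L_j$ are nonpositive, i.e.\ when $i-1$ gates fire. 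One then labels the constant form $i$ by $g$ on weight $i-1$, and the labels on the scaled $L_j$'s are irrelevant because those forms never realise the $k$-statistic. There is no need for two forms per gate, no ruler of length $O(M)$, and no worry about the $k$-statistic being attained by a variable-dependent form.

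By contrast, your plan lets the gate forms interleave with the ruler, which is exactly what forces you to confront the labelling-consistency problem: if an $A_j$ can sit at rank $k$, then for two inputs with different $t$ the same $A_j$ might occupy rank $k$, and no fixed label on $A_j$ is correct. You identify this hazard but do not resolve it; the resolution is precisely the scaling trick above, which pushes all gate forms out of the ruler's range so the question never arises. With that change your two-forms-per-gate and $O(M)$-length ruler become unnecessary, and the whole construction collapses to $2s+1$ forms with coefficients bounded by $(s+2)\cdot\max_j \|\vec{w}_j\|_\infty = \poly(n)$.
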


Using the same strategy, we can embed a large complexity class into $\kNN$ directly:

\begin{theorem}
\label{th:SYMANDinkNN}
    $\SYM \circ \AND \subseteq \kNN$.
\end{theorem}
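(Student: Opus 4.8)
The plan is to show that any $\SYM \circ \AND$ circuit can be simulated by a $k$-nearest neighbors representation with polynomially many Boolean anchors (indeed $\NN$ follows a fortiori, but we will aim for Boolean anchors since the gate weights are $1$). The guiding idea, which parallels the proof strategy behind Theorem~\ref{th:SYMMAJinkSTAT} and the alternative form in Theorem~\ref{thm:altern-k-stat}, is that the $k$-statistic of a family of linear forms lets us ``read off'' exactly how many of the bottom $\AND$ gates evaluate to $1$, and the symmetric top gate is precisely a function of that count.

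First I would set up the bottom layer. Let $C = g(A_1,\dots,A_t)$ where $g \colon \{0,\dots,t\} \to \{0,1\}$ is symmetric (a function of the number of true $A_j$'s) and each $A_j = \bigwedge_{i \in S_j} \ell_{j,i}$ is an $\AND$ of literals (variables or negated variables), with $t = \poly(n)$. For each $\AND$ gate $A_j$ I would construct a linear form $M_j(\vec{x})$ that measures the Hamming distance from $\vec{x}$ to the unique Boolean point $\vec{p}_j$ on which $A_j$ is ``locally satisfied'' — namely $\vec{p}_j$ has the coordinate pattern forced by the literals of $A_j$ and, say, $0$ elsewhere. Then $M_j(\vec{x}) = \Delta(\vec{x}, \vec{p}_j)$ is a linear form in $\vec{x}$ taking its minimum value exactly when all literals of $A_j$ are satisfied; more precisely, $A_j(\vec{x}) = 1$ iff none of the "forced" coordinates is violated. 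The key quantitative point is to choose the anchors and a padding/scaling trick so that, among the $t$ forms $\{M_j(\vec{x})\}_j$ together with some fixed "reference" forms, the forms corresponding to satisfied $A_j$'s are strictly smaller than those corresponding to unsatisfied ones, uniformly over all $\vec{x} \in \{0,1\}^n$. This is where one typically replaces each variable $x_i$ by a block of duplicated copies (so we are really landing in $\closkHNN = \kSTATlw$ via Theorem~\ref{th:kNNequalskSTAT}, or directly building anchors) so that a single violated literal costs a large Hamming penalty, separating the two groups by a margin.

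Next, having arranged that exactly $s := |\{j : A_j(\vec{x})=1\}|$ of the forms lie in the "small" group and are all strictly below the "large" group, I would pick the value of $k$ and a labelling to extract $g(s)$. Concretely, using the alternative characterization of Theorem~\ref{thm:altern-k-stat}: the $k$-statistic of the full list of forms, as $k$ ranges, slides through the small group and then the large group; whether the $k$-th smallest form is one we have labelled $1$ can be made to depend precisely on whether $s \ge k$ or $s < k$ — but since $g$ is an arbitrary symmetric function, a single $k$ is not enough. The standard fix (again mirroring the $\SYM\circ\MAJ$ argument) is to take many disjoint copies of this gadget, one "threshold level" $k$ at a time, OR/combine them appropriately, or more directly to set things up so the relevant $k$-statistic equality in \eqref{eq:altkSTAT} encodes membership of $s$ in the accepting set of $g$; because $g$ has at most $t+1 = \poly(n)$ relevant values, only polynomially many forms and anchors are needed. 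I would then invoke Theorem~\ref{th:kNNequalskSTAT} (or Theorem~\ref{thm:altern-k-stat}) to convert the resulting $\kSTATlw$ representation into an actual $\kNN$ (in fact $\kHNN$) representation with $\poly(n)$ anchors, which since $\kHNN \subseteq \kNN$ gives the claim.

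The main obstacle I expect is the uniform separation in the second paragraph: ensuring that, simultaneously for every input $\vec{x}$, the group of forms coming from satisfied $\AND$ gates is a clean prefix of the sorted order strictly below all the others, so that the $k$-statistic genuinely counts $s$. A single unsatisfied literal must be made to dominate the distance contributions of all satisfied gates combined, which forces a careful choice of the duplication factor (polynomial in $t$ and the fan-in) and of the "reference" forms that pin down where the cut between the two groups falls; getting the boundary cases ($s = 0$ or $s = t$, and the behavior of $g$ there) right is the fiddly part. Once that separation is in hand, reducing to $\kSTATlw$ and then to $\kNN$ via the already-established Theorems~\ref{th:kNNequalskSTAT} and~\ref{thm:altern-k-stat} is routine.
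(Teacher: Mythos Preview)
Your approach has a genuine gap: it establishes only $\SYM \circ \AND \subseteq \closkNN$, not $\SYM \circ \AND \subseteq \kNN$. Theorem~\ref{th:kNNequalskSTAT} states $\closkNN = \kSTAT$ and $\closkHNN = \kSTATlw$; it does \emph{not} say $\kNN = \kSTAT$. The closure permits variable duplication and constants, so a $\kSTATlw$ representation of an $n$-variate $f$ yields anchors in dimension $\widetilde n = \poly(n)$ for some function $g$ of which $f$ is merely a subfunction --- not a $\kNN$ representation of $f$ itself in dimension $n$. Whether $\NN = \closNN$ (equivalently $\NN = \mpPTF(\infty)$) is explicitly flagged as open in the paper, so you cannot collapse the closure for free. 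In fact, the inclusion $\SYM \circ \AND \subseteq \kSTATlw = \closkHNN$ is already immediate from Theorem~\ref{th:SYMMAJinkSTAT} (since $\AND \subseteq \MAJ$); your plan essentially re-derives that weaker statement rather than Theorem~\ref{th:SYMANDinkNN}. The sentence ``invoke Theorem~\ref{th:kNNequalskSTAT} to convert the resulting $\kSTATlw$ representation into an actual $\kNN$ (in fact $\kHNN$) representation'' is precisely where the argument breaks.

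The paper's proof sidesteps the closure by constructing anchors \emph{directly in $\RR^n$} with real (non-Boolean) coordinates. For each $\AND$ gate it places a pair of anchors with opposite labels, using coordinates such as $\tfrac12$, $\tfrac32 - \epsilon$, $-\tfrac12 + \epsilon$, so that both anchors of a pair lie at distance below $\tfrac{n}{4} - \tfrac12$ from $\vec{x}$ when the gate fires and above $\tfrac{n}{4} + \tfrac12$ otherwise. A second family of ``ruler'' anchors is placed at carefully spaced intermediate distances, labelled according to the symmetric top gate $g$. With $k = 2s+1$, the opposite-label pairs cancel in the majority vote and the single remaining ruler anchor among the $k$ nearest determines the output as a function of how many $\AND$ gates are satisfied. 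The use of fractional coordinates (rather than variable duplication) is exactly what keeps the anchors in the original dimension and hence lands in $\kNN$ rather than $\closkNN$; your stated aim of using Boolean anchors is thus also misdirected for this theorem.
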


\begin{remark}
Note that $\SYM \circ \AND \subseteq \SYM \circ \MAJ$ and $\SYM \circ \AND$ is known to simulate the whole class of $\ACC^0$ within quasi-polynomial size~\cite{BeigelT94ACC}. Related classes are of interest in the context of obtaining lower bounds through circuit satisfiability algorithms~\cite[Conjecture 1]{VyasW23SYM}. 
\end{remark}

As a result of Theorem~\ref{th:SYMANDinkNN}, if we prove for some explicit function $f$ that $f \notin \kNN$, it will follow that $f \notin \SYM \circ \AND$, and this would be a major breakthrough in circuit complexity. Also note that $\IP \in \SYM \circ \AND$ and thus, by Theorem~\ref{th:SYMANDinkNN}, $\IP \in \kNN$. Together with Corollary~\ref{cor:signrankLB}, this gives a separation between $\NN$ and $\kNN$. This also shows that in Corollary~\ref{cor:kNNsignrank} we cannot get rid of $k$ in the lower bound.

\begin{theorem} \label{thm:eldl-in-kstat}
    $\ELDL \subseteq \kSTAT$, $\ELDLlw \subseteq \kSTATlw$.
\end{theorem}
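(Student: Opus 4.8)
\textbf{Proof proposal for Theorem \ref{thm:eldl-in-kstat}.}

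The plan is to show that an exact linear decision list can be simulated by a single inequality between $k$-statistics of two collections of linear forms, using the alternative characterization of $\kSTAT$ from Theorem \ref{thm:altern-k-stat}. Recall that an $\ELDL$ of length $m$ is a sequence of exact threshold tests: ``if $\langle \vec{w}_i, \vec{x}\rangle = \theta_i$ then output $c_i$'' for $i = 1, \dots, m$, followed by ``output $0$.'' The output on input $\vec{x}$ is $c_{i^*}$ where $i^*$ is the \emph{smallest} index with $\langle \vec{w}_{i}, \vec{x}\rangle = \theta_i$ (and $0$ if no such index exists). The key idea is to build, for each list entry $i$, a pair of linear forms that ``fire'' (take a distinguished small value) exactly when $\langle \vec{w}_i,\vec{x}\rangle = \theta_i$, and to arrange the indices so that the $k$-statistic picks out the first firing entry.

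The main steps, in order. First, for each $i$ I would replace the exact test $\langle \vec{w}_i,\vec{x}\rangle = \theta_i$ by the conjunction of two ordinary inequalities $\langle \vec{w}_i,\vec{x}\rangle \le \theta_i$ and $\langle \vec{w}_i,\vec{x}\rangle \ge \theta_i$, i.e.\ by the two forms $g_i(\vec{x}) := \langle \vec{w}_i,\vec{x}\rangle - \theta_i$ and $h_i(\vec{x}) := \theta_i - \langle \vec{w}_i,\vec{x}\rangle$; note $g_i = h_i = 0$ iff the exact test passes, and otherwise exactly one of them is $\le -1$ (since coefficients and inputs are integral). Second, I would scale and shift: multiply all forms through by a large integer $M$ (say $M > 2m$) and add an index-dependent offset so that the forms associated with entry $i$ sit in a ``window'' below those of entry $i+1$. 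Concretely, one wants entry $i$'s forms, when the test passes, to land at a value strictly smaller than entry $i'$'s forms for all $i' > i$ regardless of whether $i'$ passes, while when the test at $i$ fails its forms are pushed above everything in later windows. The standard trick (as in the $\OMB$-style constructions and in Lemma \ref{lem:PTF-to-LDL}) is to use geometric or linear-in-$i$ offsets large enough to dominate the bounded range of each $g_i, h_i$. Third, I would set the $k$-statistic parameter $k$ so that it selects precisely the two forms coming from the first firing index: if the first firing index is $i^*$, the $2(i^*-1)$ forms from the earlier windows all exceed the two forms at $i^*$ by construction (those earlier windows didn't fire, so their forms were pushed up), while the forms from later windows are also above; hence the $2(i^*-1)+1$-st and $2(i^*-1)+2$-nd smallest values are exactly $g_{i^*}(\vec{x})$ and $h_{i^*}(\vec{x})$. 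Actually it is cleaner to arrange, via the offsets, that a \emph{failing} earlier entry contributes forms that are \emph{larger} than a later \emph{passing} entry's forms, and a passing entry contributes two copies of the same small window value; then the smallest form overall comes from the first firing window, and I take $k=1$ on the relevant sub-collection — but to make the "no index fires $\Rightarrow$ output $0$'' case work I would add a sentinel pair of constant forms labeled $0$ sitting just below the would-be-value of a never-firing list. Finally, I define $\lab(i) = c_j$ when form $i$ belongs to window $j$, set the sentinel forms' label to $0$, and invoke Theorem \ref{thm:altern-k-stat}: $f(\vec{x}) = 1$ iff the $k$-statistic equals a form with label $1$, which by the construction happens iff the first firing list entry has $c_j = 1$. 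The low-weight version $\ELDLlw \subseteq \kSTATlw$ follows identically, observing that $M$ and the offsets are $\poly(m) = \poly(n)$ and the original weights are polynomial by hypothesis, so all coefficients remain polynomially bounded.

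The main obstacle I anticipate is the bookkeeping to simultaneously handle (a) the conversion of each exact test into two inequalities so that a passing test yields a clearly-identifiable ``hit'' value and a failing test yields a value at least $1$ away, (b) the priority ordering so that an \emph{earlier failing} entry never produces a form smaller than a \emph{later passing} entry's hit value, and (c) the default ``output $0$'' behavior. Getting the window sizes and offsets right — large enough to swamp the $\pm\|\vec w_i\|_1$-scale fluctuation of each $g_i,h_i$ but small enough to keep weights polynomial — is exactly the kind of calculation done in Lemma \ref{lem:PTF-to-LDL}, so I expect it to go through with the same $O(m^2 W)$-type bounds; the only genuinely new wrinkle over the $\mpPTF$ case is that here I have the full flexibility of a $k$-statistic (rather than just a $\min$), which if anything makes the sentinel/default case easier to encode. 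One should double-check the edge case where multiple later entries fire and the tie in the $k$-statistic is broken correctly, but since equal values are fine for the ``$=L_i$ for some $i$ with $\lab(i)=1$'' formulation of Theorem \ref{thm:altern-k-stat}, ties among forms with a common label are harmless.
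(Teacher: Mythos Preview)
Your proposal correctly identifies the two main ingredients the paper also uses: replacing each exact test $L_i(\vec{x})=0$ by the pair $g_i=L_i$, $h_i=-L_i$, and invoking the labeled $k$-statistic characterization of Theorem~\ref{thm:altern-k-stat}. The gap is in the priority encoding.

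You propose to arrange offsets so that ``a failing earlier entry contributes forms that are larger than a later passing entry's forms'' and then take $k=1$. But after any scaling and shifting, the pair for entry $j$ becomes $(M g_j(\vec{x})+o_j,\; M h_j(\vec{x})+o_j)$, whose sum is $2o_j$; when the test at $j$ fails, one member is at most $o_j - M$ and the other at least $o_j + M$. To force \emph{both} forms of a failing earlier entry $j$ above the hit value $o_{i^*}$ of the first firing entry $i^*>j$, you must have $o_j$ large relative to $o_{i^*}$, i.e.\ offsets strictly decreasing in $i$. But then every later entry $j'>i^*$ has $o_{j'}<o_{i^*}$, and whether $j'$ passes (both forms at $o_{j'}$) or fails (one form at $o_{j'}-M|L_{j'}(\vec{x})|$), it contributes a form strictly below $o_{i^*}$. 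So the minimum never lands at $i^*$, and no window scheme rescues $k=1$. Your earlier variant, positing that the $2(i^*-1)$ forms from earlier windows ``all exceed'' the firing window, fails for the same reason: a failing pair cannot be pushed entirely to one side of the firing value---it always straddles it.

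The paper's construction exploits precisely this straddling instead of fighting it. It uses the $2s$ forms $(s{+}1)L_i + (i{-}1)$ and $-(s{+}1)L_i - (i{-}1)$ with a \emph{median-like} $k=s$. If $i$ is the first firing index, the form $(s{+}1)L_i+(i{-}1)$ equals $i{-}1$; for every other $j$, because $(s{+}1)|L_j(\vec{x})|$ dominates the offset gap $|j-i|$, exactly one member of the $j$-th pair lies above $i{-}1$ and one below. Thus $s{-}1$ forms are strictly greater than $i{-}1$, pinning the $k$-th statistic to the labeled form of entry $i$. The idea you are missing is to choose $k$ near the median so that the one-above/one-below behaviour of non-firing pairs cancels, rather than trying to force all their forms to one side, which is impossible.
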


\begin{remark}
The class $\ELDL$ is known to be contained in $\THR \circ \THR$ and proving super-polynomial lower bounds for $\ELDL$ is an open problem (See \cite{DahiyaVMS24linear-decision-lists}).  
\end{remark}

\section{New bounds for the nearest neighbor complexity of Boolean functions}
\label{sec:NNcomplexity}

In this section, we derive several bounds on the nearest neighbor complexity of Boolean functions.

\subsection{Nearest neighbor complexity of CNFs}

We first show that any CNF admits an efficient $\NN$ representation.

\begin{theorem}
\label{th:AC2NN}
Any CNF or DNF with $m$ clauses has an $\NN$ representation with $m + 1$ anchors and constant bit-complexity.
\end{theorem}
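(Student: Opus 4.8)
The plan is to construct an explicit $\NN$ representation with one negative anchor and $m$ positive anchors, one for each clause. I will focus on the CNF case; the DNF case follows by taking complements (swapping the roles of positive and negative anchors). Write the CNF as $f(\vec{x}) = \bigwedge_{j=1}^m C_j$ where clause $C_j$ is an $\OR$ over a set $S_j$ of literals. Recall that for Boolean vectors, $\Delta(\vec{x},\vec{p})$ equals the Hamming distance, which expands as a linear form: $\Delta(\vec{x},\vec{p}) = \Delta(\vec{x}) + \Delta(\vec{p}) - 2\langle \vec{x}, \vec{p}\rangle$. Since $\Delta(\vec{x})$ is common to all anchors, comparisons between distances reduce to comparisons between the linear forms $\Delta(\vec{p}) - 2\langle \vec{x},\vec{p}\rangle$; this is the same linearization used in the proof of Theorem \ref{th:mpPTFequalsNN}.

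The idea is: the single negative anchor $\vec{q}$ should be "moderately close" to every input, and the positive anchor $\vec{p}_j$ associated to clause $C_j$ should beat $\vec{q}$ exactly when $C_j$ is \emph{violated} by $\vec{x}$ --- wait, we want $f(\vec{x})=1$ iff \emph{every} clause is satisfied, i.e. iff \emph{no} positive anchor should win. So it is cleaner to associate $\vec{p}_j$ to the negation of clause $C_j$: design $\vec{p}_j$ so that $\vec{p}_j$ is strictly closer to $\vec{x}$ than $\vec{q}$ is precisely when $C_j$ is falsified by $\vec{x}$. Then if some clause is falsified, the corresponding positive anchor wins and outputs $1$ --- that is backwards. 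Let me instead flip the labels: make the $m$ anchors \emph{negative}, one per clause, each winning exactly when its clause is falsified, and make the single remaining anchor \emph{positive}, serving as the default. Then $f(\vec{x}) = 1$ iff no clause is falsified iff no negative anchor beats the positive default, which is exactly the CNF. Concretely, for clause $C_j$ let $\vec{q}_j$ be the $0/1$ vector that "predicts" the falsifying assignment to the variables appearing in $C_j$ (coordinate $= 1$ if the literal is negative, $=0$ if positive) and some fixed value, say $0$, outside $S_j$; and let the positive anchor $\vec{p}$ be (say) the all-$1/2$... no, anchors here are Boolean in the natural construction, but the theorem only claims an $\NN$ representation, so real coordinates are allowed and in fact convenient. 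I will take $\vec{p} = (c,\dots,c)$ for a suitable constant $c$.

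The core computation: with $\vec{q}_j$ as above and a query $\vec{x}$, the quantity $\Delta(\vec{q}_j) - 2\langle \vec{x},\vec{q}_j\rangle$, restricted to the $|S_j|$ coordinates touched by $C_j$, counts (up to an affine shift independent of $\vec{x}$) the number of literals of $C_j$ that $\vec{x}$ satisfies; it is minimized exactly when $\vec{x}$ falsifies all literals of $C_j$, i.e. falsifies $C_j$, and jumps by at least a fixed gap (e.g. $2$) as soon as one literal is satisfied. Choosing $c$ appropriately (a small constant, independent of $n$ and $m$), the linear form for the positive default $\vec{p}$ can be tuned to sit strictly between "$\vec{x}$ falsifies $C_j$" and "$\vec{x}$ satisfies $\geq 1$ literal of $C_j$" for every $j$ simultaneously --- this works because all the gaps are bounded below by an absolute constant, so a single threshold separates the two cases uniformly. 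Then $\vec{q}_j$ beats $\vec{p}$ iff $C_j$ is falsified, giving output $0$ iff some clause fails, as desired. Finally, ties: one must ensure the inequalities are strict (Definition \ref{def:HNN} requires strict closeness for the winner); this is arranged by perturbing $c$ off the half-integer values so no distance comparison is ever an equality, and by the integrality of the literal-count gaps. The bit-complexity claim is immediate: the coordinates of all anchors lie in $\{0, 1, c\}$ for a single fixed rational constant $c$ with $O(1)$ bits, independent of $n$ and $m$.

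The main obstacle I anticipate is verifying that a \emph{single} positive-anchor threshold $c$ works for \emph{all} clauses at once, including clauses of very different widths $|S_j|$, and simultaneously making every distance comparison strict. The width issue is handled because the decision for clause $j$ only depends on the coordinates in $S_j$ (the anchors $\vec{q}_j$ and $\vec{p}$ agree, or differ in a controlled way, outside $S_j$), so the relevant comparison localizes to $S_j$ and the "gap $\geq 2$" bound is width-independent; one just has to bookkeep the contribution of coordinates outside $S_j$ correctly when $\vec{q}_j$ is padded with $0$'s there versus $\vec{p}=(c,\dots,c)$. Strictness is then a matter of picking $c$ generically, e.g. $c = 1/3$, and checking a short case analysis that no comparison degenerates. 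Neither step involves more than elementary arithmetic once the construction is fixed, so I expect the proof to be short.
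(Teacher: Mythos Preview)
Your high-level plan---one ``default'' anchor plus $m$ per-clause anchors, the $j$th beating the default exactly when $C_j$ is falsified---matches the paper's. But the specific construction is broken. With $\vec{q}_j$ a $0/1$ vector padded by $0$ outside $S_j$ and $\vec{p}=(c,\dots,c)$, the comparison does \emph{not} localize to $S_j$: for the one-clause CNF $f=x_1\lor x_2$ on $n\ge 3$ variables you get $\vec{q}_1=\vec{0}$, and the inputs $\vec{e}_1$ and $\vec{e}_3$ have identical distances to both $\vec{q}_1$ and $\vec{p}$ (same Hamming weight, and $\vec{p}$ is symmetric) while $f(\vec{e}_1)=1\neq 0=f(\vec{e}_3)$. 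No choice of $c$, nor any ``generic'' perturbation, repairs this; your anchor set literally cannot distinguish these two inputs. Even if you fix the padding by setting $\vec{q}_j$ to $c$ outside $S_j$ so that outside coordinates really do cancel, Boolean values on $S_j$ still fail for wide clauses: when every literal of $C_j$ is falsified one computes $\Delta(\vec{x},\vec{q}_j)-\Delta(\vec{x},\vec{p})=-c^2|S_j^+|-(1-c)^2|S_j^-|$, and satisfying a single literal raises this by at most $2$. For any fixed $c\in(0,1)$ and $|S_j|$ large enough the difference stays negative, so $\vec{q}_j$ remains closer even though $C_j$ is satisfied. Your ``gap $\ge 2$'' is a statement about $\Delta(\vec{x},\vec{q}_j)$ alone; what matters is the \emph{difference} of distances, whose baseline scales with $-|S_j|$, not $O(1)$.

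The paper resolves both issues simultaneously: the default is $(\tfrac12,\dots,\tfrac12)$, each per-clause anchor is also $\tfrac12$ outside its clause (so the comparison truly localizes), and on clause coordinates it uses $\tfrac32$ (or $-\tfrac12$ for negated literals) so that a \emph{satisfied} literal contributes $0$---not a negative amount---to the difference. One distinguished clause coordinate is set to $1$ (or $0$), giving a width-independent baseline of $-\tfrac14$; any falsified literal then adds at least $1$.
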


\begin{proof}
It suffices to prove the statement for DNFs as any CNF can be converted to a DNF by negation.

\medskip
Let $N = \{\vec{q} := (\frac12,\cdots,\frac12)\}$ and note that $d(\vec{x},\vec{q}) = n/4$ for every input $\vec{x} \in \{0,1\}^n$ (where $d$ is the squared Euclidean distance). For each clause, say $C(\vec{x}) = (x_1 \wedge \cdots \wedge x_k)$, introduce a positive anchor 
\[
\vec{p_C} = \bigg(\underbrace{1,\frac32,\cdots,\frac32}_{k},\underbrace{\frac12,\cdots,\frac12}_{n-k}\bigg)
\]
If any variable is negated, replace the corresponding $\frac{3}{2}$ (or $1$) with $-\frac{1}{2}$ (or $0$). 

\medskip
If $C(\vec{x})=1$, then $d(\vec{x}, \vec{p_C}) = (n-1)/4  < d(\vec{x},\vec{q})$. Otherwise, some literal in $C$ is equal to zero, hence $d(\vec{x}, \vec{p_C}) \geq 1 + (n-1)/4  > d(\vec{x}, \vec{q})$. Therefore, the entire DNF, say $C_1 \vee \cdots \vee C_m$, is satisfied if and only if some $\vec{p_{C_i}}$ is a nearest neighbor of $\vec{x}$. 

\end{proof}

The compact representation above does not generalize to deeper $\AC^0$ circuits of depth larger than $2$. For instance, Corollary~\ref{cor:signrankLB} exhibits a function computable by a depth-three De Morgan circuit of polynomial size which does not belong to $\closNN$. For the well studied disjointness function (that admits a compact CNF representation) we can get an efficient $\HNN$ representation:

\begin{theorem}
\label{th:DISJinHNN}
\[
\DISJ \in \HNN
\]
The disjointness function (in $2n$ dimensions) has an $\HNN$ representation with $3n$ anchors.
\end{theorem}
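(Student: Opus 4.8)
The plan is to construct an explicit $\HNN$ representation for $\DISJ(\vec{x},\vec{y}) = \ind[\langle \vec{x},\vec{y}\rangle = 0]$ on $2n$ inputs using Boolean anchors. Recall that for Boolean vectors the squared Euclidean distance equals the Hamming distance, so the key quantity is how many coordinates an anchor disagrees with $(\vec{x},\vec{y})$ on. The idea is to place one negative anchor for each coordinate $i \in [n]$ that "tests" whether $x_i = y_i = 1$ (a witness to non-disjointness), and a small number of positive anchors that act as a uniform baseline distance that is beaten exactly when no such witness exists. Concretely, for each $i$ I would try a negative anchor $\vec{q}_i$ that is $1$ on coordinates $x_i$ and $y_i$ and $0$ elsewhere (or some shifted variant), so that $\Delta((\vec{x},\vec{y}), \vec{q}_i) = \Delta(\vec{x},\vec{y}) $ decreases precisely by $2$ when $x_i = y_i = 1$. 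The positive anchors should be arranged so that the nearest positive anchor is at distance equal to the "generic" value achieved by every $\vec{q}_i$ when coordinate $i$ is \emph{not} a witness, so that a positive anchor wins iff there is no witnessing coordinate, i.e. iff the sets are disjoint.

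The main steps, in order: (1) Fix the negative anchors $\vec{q}_1, \dots, \vec{q}_n$ as above and compute $\Delta((\vec{x},\vec{y}), \vec{q}_i)$ as a function of $\vec{x}, \vec{y}$; verify that this equals some fixed base value $B$ when $x_i y_i = 0$ and $B - 2$ (strictly smaller) when $x_i y_i = 1$. Hence $\min_i \Delta((\vec{x},\vec{y}),\vec{q}_i) = B$ if $\DISJ = 1$ and $\le B-2$ if $\DISJ = 0$. (2) Design the positive anchors $\vec{p}_1, \dots, \vec{p}_{2n}$ (the count $3n$ total suggests roughly $2n$ positive plus $n$ negative, or $n$ positive plus $2n$ negative — I would budget accordingly) so that $\min_j \Delta((\vec{x},\vec{y}), \vec{p}_j) = B - 1$ for \emph{every} input $(\vec{x},\vec{y}) \in \{0,1\}^{2n}$. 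A natural way to get a distance that is constant over the whole cube from Boolean anchors is to use a collection of anchors whose Hamming balls tile the cube appropriately, e.g. pairs of antipodal-type anchors $\vec{p}$ and $\vec{p} \oplus \vec{1}$ so that $\min(\Delta(\vec{z},\vec{p}), \Delta(\vec{z}, \vec{p}\oplus\vec{1})) \le n$ always — but I need the min to be exactly $B-1$, which constrains the choice of $B$ (likely $B = n+1$ with a one-coordinate padding trick, or adding dummy coordinates). (3) Conclude: if $\DISJ(\vec{x},\vec{y}) = 1$ then $\min_j \Delta(\cdot,\vec{p}_j) = B-1 < B = \min_i \Delta(\cdot,\vec{q}_i)$, so a positive anchor is strictly closest; if $\DISJ = 0$ then $\min_i \Delta(\cdot,\vec{q}_i) \le B-2 < B-1 = \min_j \Delta(\cdot,\vec{p}_j)$, so a negative anchor is strictly closest. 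This gives a valid $\HNN$ representation, and counting anchors gives $3n$ (possibly after absorbing $O(1)$ padding coordinates into $n$).

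The main obstacle I expect is step (2): forcing the minimum distance to the positive anchor set to be a \emph{single fixed constant} over the entire Boolean cube, using only Boolean anchors, while using only $O(n)$ of them. With real anchors one just picks the center $(\tfrac12,\dots,\tfrac12)$ as in Theorem~\ref{th:AC2NN}, but Boolean anchors cannot be equidistant from all vertices, so one must instead cover $\{0,1\}^{2n}$ by Hamming balls of a fixed radius whose \emph{minimum}-over-anchors distance function is flat. This likely requires a clever pairing (antipodal pairs, or anchors indexed by coordinates with a parity gadget) together with one or two extra padding coordinates to tune the parity of $B$ so that the strict inequalities $B-2 < B-1 < B$ separate the positive and negative cases cleanly; checking that the padding does not disturb the negative anchors' behavior is the delicate bookkeeping. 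A secondary concern is ensuring $P$ and $N$ remain disjoint and that no tie ever occurs (the $\HNN$ definition requires a \emph{strict} winner), which the $B-2 < B-1 < B$ gap handles as long as the construction is arranged so these are the only distance values in play near the minimum.
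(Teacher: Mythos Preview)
Your overall strategy is right, and the witness anchors you propose --- vectors with $1$'s in positions $i$ and $n+i$ and $0$ elsewhere --- are exactly what the paper uses. The gap is in the arithmetic of step~(1), which then misleads you in step~(2). If $\vec{z}=(\vec{x},\vec{y})$ has Hamming weight $w$, then $\Delta(\vec{z},\vec{q}_i) = w + 2 - 2(x_i+y_i)$, which equals $w-2$ when $x_i=y_i=1$, $w$ when exactly one of $x_i,y_i$ is $1$, and $w+2$ when both are $0$. So your ``base value'' $B$ is not a fixed constant; it moves with $w$. Concretely, $\min_i \Delta(\vec{z},\vec{q}_i) = w-2$ when $\DISJ(\vec{x},\vec{y})=0$, and $\min_i \Delta(\vec{z},\vec{q}_i) \ge w$ when $\DISJ(\vec{x},\vec{y})=1$ (it equals $w$ if $w\ge 1$ and equals $2$ if $w=0$).

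This dissolves the obstacle you flagged: the positive anchors need not achieve a \emph{constant} minimum distance over the cube, they only need to achieve distance $w-1$, i.e.\ something that tracks the weight of the input and sits strictly between $w-2$ and $w$. The $2n$ standard basis vectors $\vec{e_1},\dots,\vec{e_{2n}}$ do exactly this, since $\Delta(\vec{z},\vec{e_j}) = w + 1 - 2z_j$ and hence $\min_j \Delta(\vec{z},\vec{e_j}) = w-1$ whenever $w\ge 1$ (and $=1$ when $w=0$). In every case the positive minimum lies strictly between the two possible negative minima, so a positive anchor is strictly nearest iff $\DISJ=1$, with no ties and no padding required. This yields $n+2n = 3n$ Boolean anchors. (Incidentally, the paper's written proof appears to have the roles of $P$ and $N$ interchanged; your assignment, with the witness anchors labeled negative, is the correct one for $\DISJ = \ind[\langle\vec{x},\vec{y}\rangle=0]$.)
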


\begin{proof}
Consider anchors $P = \{(\vec{e_1},\vec{e_1}),\cdots,(\vec{e_n},\vec{e_n})\}$ and $N = \{\vec{e_1},\cdots,\vec{e_{2n}}\}$ where $\vec{e_i}$ denotes the $i$'th standard basis vector and $(\vec{e_i},\vec{e_i})$ their concatenation.

Let $\vec{x},\vec{y} \in \{0,1\}^n$ and suppose $x_i=y_i=1$ for some $i$. Then, for all $j$ it holds that $\Delta((\vec{x},\vec{y}), (\vec{e_i},\vec{e_i})) \leq \Delta((\vec{x},\vec{y}), \vec{e_j}) - 1$ with equality when $i=j$. Otherwise, $\Delta((\vec{x},\vec{y}), (\vec{e_i},\vec{e_i})) \geq \Delta((\vec{x},\vec{y}), \vec{e_j}) + 1$ for all $i,j$.
\end{proof}

\begin{remark}
It can be shown that the number of anchors in Theorem~\ref{th:DISJinHNN} is nearly tight; based on the $\Omega(n)$ lower bound for $\DISJ$ of \cite{razborov1990distributional}, a simple argument proves that $\NN$ representations of disjointness require $\Omega(n/ \log n)$ anchors. We omit the details.     
\end{remark}

Proceeding, we show that some CNFs with polynomially many clauses have exponential \emph{Boolean} nearest neighbor complexity.

\begin{definition}
The \emph{Hamming cube graph} is an undirected graph with vertices $V = \{0,1\}^n$ and edges $E = \{(\vec{u},\vec{v}) \in V : \Delta(\vec{u},\vec{v})=1\}$. The \emph{components} of a Boolean function $f$ are the connected components of the subgraph of the Hamming cube graph induced by the vertex set $f^{-1}(1)$.
\end{definition} 

\begin{lemma}
\label{lem:HNNcomponents}
If a Boolean function $f$ has $m$ components then any $\HNN$ representation of $f$ has at least $m$ anchors.
\end{lemma}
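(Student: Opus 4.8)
The plan is to show that each connected component of $f^{-1}(1)$ must ``contain'' at least one positive anchor, in the sense that some vertex of the component has a positive anchor as its (strict) nearest neighbor. Since distinct components are vertex-disjoint, this yields $m$ distinct positive anchors, hence at least $m$ anchors total. First I would fix an $\HNN$ representation $(P,N)$ of $f$ and a component $K \subseteq f^{-1}(1)$. Pick a vertex $\vec{v} \in K$ minimizing $\min_{\vec{p} \in P} \Delta(\vec{v},\vec{p})$ over all $\vec{v} \in K$; let $\vec{p}^\star \in P$ achieve this minimum. Because $f(\vec{v}) = 1$ and the anchors are Boolean, $\vec{p}^\star$ is strictly closer to $\vec{v}$ than any negative anchor, and moreover (this is the crux) I want to argue $\vec{p}^\star$ actually lies in $K$ — or at least that the positive anchors ``assigned'' to different components are forced to be different.

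The key step is the following claim: the positive anchor $\vec{p}^\star$ closest to some vertex of $K$ must itself belong to $f^{-1}(1)$ and be connected to $K$ in the Hamming cube, hence $\vec{p}^\star \in K$. To see $\vec{p}^\star \in f^{-1}(1)$: $\vec{p}^\star$ has Hamming distance $0$ to itself, so its nearest anchor is $\vec{p}^\star \in P$ (ties only with other anchors at distance $0$, impossible since $P,N$ are disjoint sets of distinct points), so $f(\vec{p}^\star)=1$. To see $\vec{p}^\star$ is connected to $K$: suppose $\vec{v} \in K$ is the minimizer above and $\Delta(\vec{v},\vec{p}^\star) = r$. Walk along a shortest Hamming path from $\vec{v}$ toward $\vec{p}^\star$, flipping one disagreeing coordinate at a time; each step strictly decreases the distance to $\vec{p}^\star$. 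I claim every vertex $\vec{u}$ on this path has $f(\vec{u})=1$: indeed $\Delta(\vec{u},\vec{p}^\star) < r = \min_{\vec{w}\in K}\min_{\vec{p}\in P}\Delta(\vec{w},\vec{p}) \le \min_{\vec{w}\in K}\min_{\vec{p}\in P}\Delta(\vec{w},\vec{p})$... here I need to be careful — the minimality of $r$ only controls vertices in $K$. The cleaner route: show by induction along the path that each vertex still has $\vec{p}^\star$ strictly closer than all of $N$, so each path vertex is in $f^{-1}(1)$, and since consecutive path vertices are Hamming-adjacent, the whole path lies in one component; it contains $\vec{v}\in K$, so it lies in $K$, and its endpoint is $\vec{p}^\star$, giving $\vec{p}^\star \in K$.

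For the induction: for any negative anchor $\vec{q} \in N$, moving from $\vec{u}$ one step closer to $\vec{p}^\star$ changes $\Delta(\cdot,\vec{p}^\star)$ by $-1$ and changes $\Delta(\cdot,\vec{q})$ by $\pm 1$; since we started with $\Delta(\vec{v},\vec{p}^\star) < \Delta(\vec{v},\vec{q})$ and the gap can shrink by at most $2$ per step while the path has length $r \le \Delta(\vec{v},\vec{q})-1$... the bookkeeping needs the stronger invariant that at the vertex at distance $r-t$ from... Actually the slick argument avoids step-by-step distance tracking entirely: since $f(\vec{p}^\star)=1$ and $f$ restricted to the Hamming cube has $\vec{p}^\star$ in some component $K'$, I just need $K' = K$. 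Take $\vec{v}^\star \in K$ with $\Delta(\vec{v}^\star,\vec{p}^\star)$ minimal among $\vec{v}\in K$, $\vec{p}\in P$; if $\Delta(\vec{v}^\star,\vec{p}^\star) \ge 1$, flip one disagreeing coordinate of $\vec{v}^\star$ toward $\vec{p}^\star$ to get $\vec{v}'$, which is Hamming-adjacent to $\vec{v}^\star$ and strictly closer to $\vec{p}^\star$ — but then either $\vec{v}' \notin K$ (so $f(\vec{v}')=0$, contradicting that $\vec{p}^\star$ is strictly closer to $\vec{v}'$ than any negative anchor, which I must verify) or $\vec{v}' \in K$ (contradicting minimality of $\Delta(\vec{v}^\star,\vec{p}^\star)$). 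So $\vec{v}^\star = \vec{p}^\star$ and $\vec{p}^\star \in K$. I expect the main obstacle to be exactly this verification that $\vec{v}'$ (the one-step-closer vertex) still has $\vec{p}^\star$ strictly closer than every negative anchor — this is where I'd use that a single coordinate flip toward $\vec{p}^\star$ decreases distance to $\vec{p}^\star$ by $1$ and can only decrease distance to a negative anchor by at most $1$ as well, combined with the minimality choice to rule out a new tie; making this inequality chain airtight (and handling ties, which are allowed to be broken either way in the $\HNN$ definition only when the function is undefined — here $f$ is total, so no ties between a closest $P$-anchor and a closest $N$-anchor are permitted at points of $f^{-1}(1)$) is the delicate part. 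Once the claim holds, distinct components yield distinct positive anchors, so $|P| \ge m$ and the lemma follows.
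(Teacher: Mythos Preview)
Your plan is correct and shares the paper's core idea: each component of $f^{-1}(1)$ must contain a positive anchor, so $|P|\ge m$. The step you flag as delicate is in fact immediate---Hamming distances are integers, so $\Delta(\vec{v}^\star,\vec{p}^\star) < \Delta(\vec{v}^\star,\vec{q})$ gives $\Delta(\vec{v}^\star,\vec{p}^\star)\le \Delta(\vec{v}^\star,\vec{q})-1$; a single flip toward $\vec{p}^\star$ lowers the left side by exactly $1$ and the right side by at most $1$, so strictness persists, $f(\vec{v}')=1$, and $\vec{v}'\in K$ contradicts your minimality choice (no tie-breaking subtlety arises). The paper reaches the same conclusion with a slightly cleaner device that avoids the minimization and the one-step argument: take \emph{any} $\vec{x}\in C$ and its nearest anchor $\vec{p}\in P$; if $\vec{p}\notin C$, a Hamming geodesic from $\vec{x}$ to $\vec{p}$ must cross $\delta(C)$ at some $\vec{y}\in f^{-1}(0)$ with $\Delta(\vec{x},\vec{p})=\Delta(\vec{x},\vec{y})+\Delta(\vec{y},\vec{p})$, and then the negative anchor $\vec{q}$ nearest $\vec{y}$ satisfies $\Delta(\vec{x},\vec{q})\le \Delta(\vec{x},\vec{y})+\Delta(\vec{y},\vec{q})<\Delta(\vec{x},\vec{y})+\Delta(\vec{y},\vec{p})=\Delta(\vec{x},\vec{p})$, contradicting the choice of $\vec{p}$. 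Your extremal-pair-plus-flip argument and the paper's boundary-plus-triangle-inequality argument are two packagings of the same contradiction.
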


\begin{proof}
Consider some component $C$ of $f$ and let $\delta(C)$ denote the vertex boundary of $C$: Vertices in $\{0,1\}^n \setminus C$ with a neighbor in $C$. Note that $\delta(C) \subseteq f^{-1}(0)$.

Suppose $f$ has $\HNN$ representation $P \cup N$ and let $\vec{p} \in P$ be the nearest anchor to some $\vec{x} \in C$. Assume for contradiction that $\vec{p} \not \in C$. Note that $\Delta(\vec{x},\vec{p})$ is equal to the length of the shortest path from $\vec{x}$ to $\vec{p}$ in the Hamming cube graph, which by assumption must contain some $\vec{y} \in \delta(C)$. (In particular, $\Delta(\vec{x}, \vec{p}) = \Delta(\vec{x},\vec{y}) + \Delta(\vec{y},\vec{p})$.) Thus, there must exist some negative anchor $\vec{q} \in N$ with $\Delta(\vec{y},\vec{q}) < \Delta(\vec{y}, \vec{p})$. By the triangle inequality,
\[
\Delta(\vec{x},\vec{q}) \leq \Delta(\vec{x},\vec{y}) + \Delta(\vec{y},\vec{q}) < \Delta(\vec{x},\vec{y}) + \Delta(\vec{y},\vec{p}) = \Delta(\vec{x}, \vec{p})
\]
which contradicts the minimality of $\vec{p}$. Thus, each component contains an anchor.
\end{proof}

Using the previous results, another separation between $\HNN$ and $\NN$ follows from the existence of a CNF (over $n$-variables) with $\poly(n)$ clauses and exponentially (in $n$) many components. (See Appendix \ref{sec:proofs}.)

\begin{theorem}
\label{th:CNFnotinHNN}
For any $k>0$, there exists a $k$-CNF over $n$ variables with $\poly(n)$ clauses for which any $\HNN$ representation has $2^{\Omega(n)}$ anchors. 
\end{theorem}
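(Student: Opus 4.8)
The plan is to combine Lemma~\ref{lem:HNNcomponents} with an explicit construction of a $k$-CNF over $n$ variables that has only $\poly(n)$ clauses yet whose satisfying set $f^{-1}(1)$, viewed as an induced subgraph of the Hamming cube, breaks into $2^{\Omega(n)}$ connected components. Given such a CNF, Lemma~\ref{lem:HNNcomponents} immediately forces any $\HNN$ representation to use at least $2^{\Omega(n)}$ anchors, which is exactly the claimed bound. So the entire content of the proof is constructing the CNF.

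\medskip
The natural candidate is a CNF that forces a ``coding-theoretic'' sparsity on the satisfying assignments. Concretely, I would look at the classic trick of expressing that a vector lies in (or near) a linear code, or more simply expressing a parity-type or block-structured constraint using short clauses. For instance, partition the $n$ coordinates into $n/3$ blocks of size $3$ and, in each block, use a constant number of $3$-clauses to force the block to take one of two ``far apart'' patterns (say $000$ or $111$). This is a $3$-CNF with $O(n)$ clauses, and its satisfying set is exactly $\{000,111\}^{n/3}$, which is an independent set in the Hamming cube (every two distinct satisfying assignments differ in at least $3$ coordinates), hence has $2^{n/3}$ isolated components. More generally, for any $k$, one can force each block of size $k$ into one of two antipodal patterns using $O(1)$ clauses of width $k$, giving a $k$-CNF with $O(n)$ clauses and $2^{n/k}$ components. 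The bound $2^{\Omega(n)}$ then follows with the constant depending on $k$.

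\medskip
The step to be careful about — and the one I expect to be the main obstacle — is making sure the satisfying set genuinely decomposes into exponentially many \emph{connected} components, not just that it is large. Largeness of $f^{-1}(1)$ alone is useless: the all-ones cube has $2^n$ satisfying points but one component. The block construction above sidesteps this because distinct satisfying assignments differ in $\geq k \geq 2$ coordinates, so no two are adjacent and every component is a single vertex; one just has to verify that there is no satisfying assignment at Hamming distance $1$ from another, which is immediate from the per-block constraint forcing patterns that pairwise differ in all $k$ positions of a block. A secondary point is to confirm the construction is a genuine $k$-CNF (clause width exactly $\le k$) and that the clause count is $\poly(n)$ — both hold since each block contributes a constant number of clauses each of width $\le k$.

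\medskip
If one prefers a construction where the satisfying set is not merely an independent set (to match other uses in the paper), an alternative is to take $f^{-1}(1)$ to be the set of assignments avoiding some sliding-window pattern, or the indicator of a suitable error-correcting code with polynomial-size CNF description; but the block construction is the cleanest and suffices for the stated theorem, so that is what I would write up. I would then state the component count as $2^{n/k} = 2^{\Omega(n)}$ and invoke Lemma~\ref{lem:HNNcomponents} to conclude.
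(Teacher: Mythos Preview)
Your proposal is correct and follows essentially the same approach as the paper: both partition the variables into $n/k$ blocks, impose a local $k$-CNF constraint on each block so that any two satisfying assignments differ in at least two coordinates (making every satisfying assignment its own component), and then invoke Lemma~\ref{lem:HNNcomponents}. The only difference is cosmetic --- the paper forces each block to have weight exactly $k/2$ (yielding $\binom{k}{k/2}^{n/k}=2^{(1-o_k(1))n}$ components) whereas you force each block to be constant (yielding $2^{n/k}$ components); both are $2^{\Omega(n)}$ for fixed $k$, so either suffices.
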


\subsection{A new lower bound for majority}
\label{sec:ties}
We now discuss the disparity between the $\HNN$ complexity of the majority function in~\cite[Theorem 4]{hajnal2022nearest}: In particular, when $n$ is even, the best upper bound is $\frac{n}{2}+2$ anchors, whereas $2$ anchors suffices when $n$ is odd. Note that if ties were allowed (won by positive anchors) in Definition \ref{def:HNN}, then $P = \{1^n\}$ and $N = \{0^n\}$ would suffice as an $\HNN$ representation for $\MAJ$ for all $n$.

\begin{theorem} \label{thm:maj-n-2-lower-bound}
For even $n$, any $\HNN$ representation of $\MAJ$ requires $\frac{n}{2}+2$ anchors.
\end{theorem}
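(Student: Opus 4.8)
The plan is to prove the lower bound $\frac{n}{2}+2$ for even $n$ by a combination of the component-counting argument from Lemma~\ref{lem:HNNcomponents} (suitably adapted to account for the ``tie'' inputs of $\MAJ$) and a direct geometric argument showing that a single anchor cannot simultaneously break many of the ties that an optimal representation must break. The key difficulty specific to even $n$ is that the level set $x_1+\cdots+x_n = n/2$ is classified as $0$ by $\MAJ$, yet every such input sits at Hamming distance $1$ from inputs of weight $n/2+1$ (which are $1$-inputs); this tie-layer is exactly why $2$ anchors fail when $n$ is even, and it is the phenomenon we must exploit quantitatively.

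First I would set up the obstruction: fix a positive anchor $\vec{p}$ and a negative anchor $\vec{q}$ in a hypothetical $\HNN$ representation, and analyze which weight-$(n/2)$ inputs $\vec{x}$ can have $\vec{p}$ as (one of) their nearest anchors. For $\vec{x}$ of weight $n/2$ we need $\Delta(\vec{x},\vec{q}) < \Delta(\vec{x},\vec{p})$ for \emph{some} negative $\vec{q}$ — but more usefully, for each positive anchor $\vec{p}$ I would bound the set of $\vec{x}$ of weight $n/2+1$ for which $\vec{p}$ is strictly closer than every negative anchor, and observe that such $\vec{x}$'s neighbors of weight $n/2$ then force the existence of nearby negative anchors. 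The second step is a counting/packing argument: I would show that each anchor (positive or negative) can ``serve'' only a bounded region of the middle layers, so that correctly classifying all $\binom{n}{n/2}$ inputs of weight $n/2$ as $0$ while classifying the $1^n$-reachable structure as $1$ requires the claimed number of anchors. Concretely, I expect the argument to go through a carefully chosen family of $n/2+1$ Boolean points (for instance inputs concentrated on aligned coordinate blocks, or a chain from low weight to high weight through the tie-layer) together with a pigeonhole argument: if fewer than $n/2+2$ anchors were present, two ``critical'' configurations would be forced to share an anchor, and then the triangle-inequality manipulation used in Lemma~\ref{lem:HNNcomponents} yields a point that is misclassified.

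The main obstacle I anticipate is matching the bound \emph{exactly} — an easy version of the argument likely gives $\Omega(n)$ or even something like $n/2 - O(1)$, but squeezing out the precise constant (the ``$+2$'', and the right slope $1/2$) requires a tight case analysis of how a single anchor interacts with the two middle layers of the cube, presumably distinguishing Boolean anchors by their Hamming weight relative to $n/2$ and handling the boundary contributions carefully. I would also need to handle real subtleties in the strict-inequality tie-breaking of Definition~\ref{def:HNN} (a weight-$(n/2)$ input is a $0$-input only if \emph{some} negative anchor is strictly closer than \emph{all} positive anchors), which is where the extra two anchors beyond $n/2$ should surface. The upper bound $\frac{n}{2}+2$ is already known from \cite{hajnal2022nearest}, so establishing the matching lower bound closes the gap and shows this construction is optimal.
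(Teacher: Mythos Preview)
Your proposal has a genuine gap. First, a factual slip: under the paper's convention $\MAJ(\vec{x})=\ind[x_1+\cdots+x_n\ge n/2]$, so the weight-$n/2$ layer consists of \emph{positive} inputs, not negative ones as you assume. More importantly, the machinery you invoke does not lead to the result. The component-counting lemma (Lemma~\ref{lem:HNNcomponents}) is useless here because both $\MAJ^{-1}(0)$ and $\MAJ^{-1}(1)$ are connected in the Hamming cube, so it yields only the trivial bound of $2$. Your ``packing'' sketch --- that each anchor serves a bounded region of the middle layers --- is never made concrete, and in fact a single Boolean anchor can be the nearest anchor to exponentially many weight-$n/2$ inputs, so a naive counting of $\binom{n}{n/2}$ inputs against per-anchor capacity will not isolate the bound $n/2+2$.

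What you are missing is a structural monotonicity statement about the nearest anchor, which is the actual engine of the paper's proof. The paper shows that for every $\vec{x}$ of weight $n/2$, its nearest (necessarily positive) anchor $\vec{p}$ satisfies $\vec{x}\le\vec{p}$ coordinate-wise \emph{and} $\vec{p}\neq\vec{1}$; the first part is a short triangle-inequality argument (flip a coordinate where $x_i=1$, $p_i=0$ and compare with the nearest negative anchor of the flipped point), and the second follows because the nearest negative anchor to that flipped point has distance at most $n/2-1$, forcing $\Delta(\vec{x},\vec{p})<n/2$. Once this is in hand, the lower bound is immediate and tight: if $|P\cup N|\le n/2+1$ then $|P\setminus\{\vec{1}\}|\le n/2$, and one can build a weight-$n/2$ input $\vec{x}$ that sets $x_i=1$ on a zero-coordinate of each such $\vec{p}$, so $\vec{x}\not\le\vec{p}$ for every $\vec{p}\in P\setminus\{\vec{1}\}$, contradicting the claim. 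There is no packing, no chain, and no asymptotic slack to fight for --- the ``$+2$'' falls out directly (one for the obligatory negative anchor, one for the excluded $\vec{1}$).
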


\begin{proof}
Suppose $P \cup N$ is an $\HNN$ representation of $\MAJ$ for even $n$. We claim that for each $\vec{x} \in \{0,1\}^n$ satisfying $\Delta(\vec{x})=n/2$, there is a positive anchor $\vec{p} \neq \vec{1}$ with $\vec{x} \leq \vec{p}$ in coordinate-wise order: 

It follows from \cite{hajnal2022nearest} that the nearest anchor $\vec{p}$ to $\vec{x}$ satisfies $\vec{x} \leq \vec{p}$. Indeed, for some $i$ it holds that $x_i = 1$, so suppose for contradiction that $p_i = 0$. Then, construct $\vec{y} = \vec{x} - \vec{e_i}$ and let $\vec{q} \in N$ be the nearest anchor to $\vec{y}$. This yields $\Delta(\vec{x}, \vec{p}) = \Delta(\vec{y}, \vec{p}) + 1 > \Delta(\vec{y}, \vec{q}) + 1$, contradicting the fact that 
\begin{equation}
\label{eq:pcloserthanq}
\Delta(\vec{x}, \vec{p}) < \Delta(\vec{x}, \vec{q}) \leq \Delta(\vec{y}, \vec{q}) + 1.
\end{equation}
 A similar argument shows that $\vec{q} \leq \vec{y}$. Hence, $\Delta(\vec{y},\vec{q}) \leq \frac{n}{2}-1$, and \eqref{eq:pcloserthanq} becomes $\Delta(\vec{x},\vec{p}) < \frac{n}{2}$ which implies that $\Delta(\vec{p}) \leq n-1$, proving the claim.

\medskip
For contradiction, assume that $|P \cup N| \leq \frac{n}{2} + 1$. Since there must be at least one negative anchor, we have $|P| \leq \frac{n}{2}$. Then, we can construct $\vec{x} \in \{0,1\}^n$ with $\Delta(\vec{x}) = \frac{n}{2}$ for which there is no positive anchor $\vec{p} \neq \vec{1}$ with $\vec{x} \leq \vec{p}$, leading to a contradiction: For each $\vec{p} \in P \setminus{\vec{1}}$, arbitrarily select some $i$ where $p_i=0$ and set $x_i = 1$, ensuring $\vec{x} \not \leq \vec{p}$. After this process, $\Delta(\vec{x}) \leq |P| \leq \frac{n}{2}$. Arbitrarily fixing more coordinates of $\vec{x}$ to $1$ so that $\Delta(\vec{x}) = \frac{n}{2}$ completes the construction.
\end{proof}

\section{Conclusion}
\label{sec:conclusion}

We have studied nearest neighbor representations of Boolean functions, proving new lower and upper bounds and devising connections to circuit complexity.
There are several questions we did not pursue. Can we find new connections between nearest neighbor complexity and other complexity measures of Boolean functions? For what classes of Boolean functions does polynomial nearest neighbor complexity translate to efficient learning algorithms? Similar ideas for establishing NP-hardness results of learning depth-two neural networks~\cite{blum1988training} can be shown to imply that properly learning Boolean functions representable by three real anchors is NP-hard. Devising efficient learnability results for nearest neighbor representations that circumvent worst-case barriers could be of interest. Finally, studying representations of Boolean functions using the well developed theory of \emph{approximate} nearest neighbor search~\citep{indyk1998approximate,kushilevitz1998efficient} could lead to new insights and more compact representations avoiding the curse of dimensionality.

\section{Acknowledgements} 

The second and third authors thank the Simons Institute for the Theory of Computing for their hospitality. Their collaboration on this project began during the Meta-Complexity workshop at the institute.

\vfill
\pagebreak

\bibliographystyle{plainnat}
\bibliography{Bib}

\vfill
\pagebreak

\appendix

\section{Omitted proofs}
\label{sec:proofs}

\subsection{Proof of Theorem \ref{th:mpPTFequalsNN}}

We break the proof of this theorem into two separate lemmas.

\begin{lemma}
\label{lem:NNtompPTF}
\[
\closNN \subseteq \mpPTF(\infty), \ \ \ 
\closHNN \subseteq \mpPTF(\poly(n))
\]
More precisely, any $\closNN$ representation with $m$ anchors is equivalent to an $\mpPTF$ with $m$ terms, and any $\closHNN$ representation with $m$ anchors in $\widetilde{n}$ dimensions is equivalent to an $\mpPTF$ with $m$ terms and maximum weight $\widetilde{n}$.

\end{lemma}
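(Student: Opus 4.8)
\textbf{Proof plan for Lemma~\ref{lem:NNtompPTF}.}
The plan is to show directly that the defining inequality of a nearest neighbor representation can be rewritten as a min-plus inequality, after a key linearization of the distance function. Given anchors $\vec{a}_1,\dots,\vec{a}_m \in \RR^n$, the squared Euclidean distance from a Boolean query $\vec{x} \in \{0,1\}^n$ to $\vec{a}_j$ is $\Delta(\vec{x},\vec{a}_j) = \|\vec{x}\|_2^2 - 2\langle \vec{x},\vec{a}_j\rangle + \|\vec{a}_j\|_2^2$. Since $\|\vec{x}\|_2^2 = \sum_i x_i^2 = \sum_i x_i$ is the same for every anchor (using $x_i^2 = x_i$ over $\{0,1\}$), it cancels from every pairwise comparison; hence each distance is, up to this common additive term, the affine form $L_j(\vec{x}) := -2\langle \vec{x},\vec{a}_j\rangle + \|\vec{a}_j\|_2^2$, which is a linear form (with a constant term) in $\vec{x}$. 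First I would make this observation precise and note that the ordering of anchors by distance to $\vec{x}$ is exactly the ordering of the $L_j(\vec{x})$.

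Next I would split the linear forms by anchor sign: let $\{L_i\}$ range over positive anchors and $\{R_j\}$ over negative anchors. The $\NN$ condition ``$f(\vec{x})=1$ iff some positive anchor is strictly closer to $\vec{x}$ than every negative anchor'' becomes ``$\min_i L_i(\vec{x}) < \min_j R_j(\vec{x})$,'' and the complementary condition ``$f(\vec{x})=0$ iff some negative anchor is strictly closer than every positive one'' becomes ``$\min_j R_j(\vec{x}) < \min_i L_i(\vec{x})$.'' Because the representation is assumed to classify every $\vec{x}$, exactly one of these strict inequalities holds for each $\vec{x}$, so the two are negations of one another and we may replace the strict ``$<$'' in the first by ``$\leq$'': $f(\vec{x}) = 1 \iff \min_i L_i(\vec{x}) \leq \min_j R_j(\vec{x})$, matching the form of equation~\eqref{eq:mpPTF}. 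The number of terms is the number of anchors, $m$. To pass from $\closNN$ to $\NN$ I would invoke closure under substitution: a substitution of variables applied to an $\mpPTF$ is again an $\mpPTF$ with the same number of terms (duplicating or fixing variables merely reindexes or specializes each linear form), so any subfunction of an $\NN$ function lands in $\mpPTF(\infty)$ with no increase in term count.

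For the Boolean case I would additionally track weights. To get genuinely integer coefficients as required by the definition of $\mpPTF$, I would clear the factor of $2$ and the constant terms: multiply each $L_j$ by $1$ after first scaling so that all coefficients are integers; concretely, with Boolean anchors $\vec{a}_j \in \{0,1\}^{\widetilde{n}}$ the form $-2\langle\vec{x},\vec{a}_j\rangle + \|\vec{a}_j\|_2^2$ already has integer coefficients, and the constant $\|\vec{a}_j\|_2^2 = \Delta(\vec{a}_j)$ is at most $\widetilde{n}$ while the linear coefficients are $0$ or $-2$. (One may further absorb the common factor to make the dominant weight $\widetilde{n}$; in any case the maximum weight is $O(\widetilde{n})$, and with a mild normalization exactly $\widetilde{n}$ as claimed.) Substitution again preserves the weight bound since identifying variables only adds coefficients of bounded magnitude a bounded number of times when $\widetilde{n} = \poly(n)$; I would remark that after an identification one can re-normalize so that the weight stays $\poly(\widetilde{n}) = \poly(n)$. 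The main obstacle is the bookkeeping around strictness and total-ness: one must be careful that the $\NN$ definition classifies \emph{all} inputs (it does, by Definition~\ref{def:HNN}), so that the ``$\leq$'' versus ``$<$'' swap is legitimate and the resulting $\mpPTF$ agrees with $f$ on every point rather than just on a partial domain. Everything else is the routine algebra of expanding squared norms and checking that variable substitutions commute with taking minima of linear forms.
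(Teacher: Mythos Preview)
Your proposal is correct and follows essentially the same route as the paper: linearize $\Delta(\vec{x},\vec{p})$ via $x_i^2=x_i$, observe that the nearest-neighbor predicate becomes a comparison of minima of affine forms, and note that variable substitution preserves the $\mpPTF$ structure (the paper even gives the same weight bookkeeping for the Boolean case, with coefficients $\pm1$ and constants $\leq \widetilde{n}$). The only cosmetic difference is that you cancel the common $\|\vec{x}\|_2^2$ term whereas the paper absorbs it into each linear form; your extra discussion of the ``$<$'' versus ``$\leq$'' subtlety is a welcome clarification of a point the paper leaves implicit.
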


\begin{proof}
    The distance from $\vec{x} \in \{0,1\}^n$ to an anchor $\vec{p} \in \RR^n$ is a linear form in variables $\vec{x}$:
\begin{align*}
\sum_i (x_i-p_i)^2 &= \sum_i \left[x_i^2 - 2 p_i x_i + p_i^2\right] \\
&= \sum_i \left[(1-2p_i)x_i + p_i^2\right] \\
&= \langle \vec{1}-2\vec{p}, \vec{x} \rangle + ||\vec{p}||_2^2.
\end{align*}
We can observe that $\NN$ representations essentially compute $\ind[\min_{\vec{p} \in P}\Delta(\vec{x},\vec{p}) \leq \min_{\vec{q} \in N}\Delta(\vec{x},\vec{q})]$, which is an $\mpPTF$. subfunctions merely multiply coefficients and add constants to each linear form -- For example, $d(x_1x_10, p_1p_2p_3) = 2 \cdot (1-2p_1)x_1 + (p_3^2 + 2p_1^2)$.

In the case of $\HNN$, we have for all anchors that $\vec{p} \in \{0,1\}^n$ and $\Delta(\vec{x}, \vec{p})$ is a linear form with $\pm 1$ coefficients and positive constants bounded (in absolute value) by $n$. As a result, the weights in $\mpPTF$ are bounded by $n$ as well.
\end{proof}

\begin{lemma}
\label{lem:mpPTFtoNN}
\[
\mpPTF(\infty) \subseteq \closNN, \ \ \ 
\mpPTF(\poly(n)) \subseteq \closHNN
\]
More precisely, any $\mpPTF$ with $m$ terms has an $\closNN$ representation with $m$ anchors in $n+1$ dimensions.
Any $\mpPTF$ with $m$ terms and maximum weight $W$ has an $\closHNN$ representation with $m$ anchors in $\widetilde{n} = O(nW)$ dimensions. 
\end{lemma}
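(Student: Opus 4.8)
\textbf{Proof proposal for Lemma~\ref{lem:mpPTFtoNN}.}

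The plan is to invert the computation from Lemma~\ref{lem:NNtompPTF}: given an $\mpPTF$ defined by linear forms $\{L_1,\dots,L_{\ell_1}\}$ and $\{R_1,\dots,R_{\ell_2}\}$ with $m = \ell_1+\ell_2$, I want to produce $m$ anchors (in a slightly larger dimension) so that the squared distance from the lifted query point to the $i$-th anchor equals the corresponding linear form up to a \emph{common} additive shift that is independent of $i$. Once all the distances agree with the $L_i$'s and $R_j$'s up to the same constant, the predicate $\min_i L_i(\vec x) \le \min_j R_j(\vec x)$ is exactly the nearest-neighbor predicate with positive anchors coming from the $L_i$'s and negative anchors from the $R_j$'s (with ties broken appropriately, which is harmless since the paper works in the closure and can perturb constants to make all relevant quantities integer and distinct, or absorb the tie convention by a tiny shift of one side). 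The extra coordinate(s) are where I will dump the ``slack'' needed to make a sum of squares hit a prescribed linear form.

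First I would handle the real case. Recall from Lemma~\ref{lem:NNtompPTF} that the distance from $\vec x\in\{0,1\}^n$ to an anchor $\vec a\in\RR^n$ is $\langle \vec 1 - 2\vec a, \vec x\rangle + \|\vec a\|_2^2$; so by choosing $\vec a$ freely I can realize any linear form whose coefficients all lie in $(-\infty, 1]$ (namely $a_i = (1-w_i)/2$), but the constant term $\|\vec a\|_2^2$ is then forced. To decouple the constant from the coefficients I add one fresh coordinate $x_{n+1}$, which on the lifted Boolean query is fixed to $0$ (this is allowed: the closure permits adding constant variables). Setting the $(n+1)$-st coordinate of the $i$-th anchor to some value $t_i$ contributes $t_i^2$ to its distance, so I can tune $t_i$ to set each anchor's constant term to whatever I need — in particular, after first translating all the $L_i, R_j$ so their coefficients are $\le 1$ via a global substitution (duplicate each original variable enough times and scale, which is exactly what ``subfunction'' buys us, cf. the computation $d(x_1x_10,p_1p_2p_3)=2(1-2p_1)x_1+\cdots$ in the proof of Lemma~\ref{lem:NNtompPTF}), I can also add a large common constant $M$ to every form so that every required constant term is a nonnegative real, hence expressible as $t_i^2$. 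This yields $m$ anchors in $n+1$ dimensions (after the duplication, in $\poly(n)+1$ dimensions, but the claim as stated gives $n+1$ because the scaling by a constant factor can be folded into the anchor coordinates themselves rather than into variable duplication — I would present it that way), proving $\mpPTF(\infty)\subseteq\closNN$.

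For the Boolean case the same skeleton works but the ``slack coordinate'' trick must be done with $\{0,1\}$-valued anchor coordinates instead of an arbitrary real $t_i$. Given an $\mpPTF$ with maximum weight $W$, I first rewrite each form so that every coefficient is in $\{0,1\}$ and the constant term is a nonnegative integer bounded by $O(nW)$: a variable with coefficient $w\in\{-W,\dots,W\}$ is replaced by $|w|$ duplicated copies (and if $w<0$ I use the identity $-x = (1-x) - 1$ on a fresh duplicate, pushing a $+1$ into the constant), which is precisely a ``substitution of variables'' and costs a factor $O(W)$ in dimension. After this normalization each linear form is $\sum_{i\in S} x_i + c$ with $c\in\{0,\dots,O(nW)\}$, and the Hamming distance from $\vec x$ (extended by constant-$0$ padding coordinates) to a Boolean anchor $\vec p$ is $\sum_i \ind[x_i\ne p_i]$; choosing $p_i = 0$ on coordinates in $S$ and $p_i=1$ on coordinates outside $S$ realizes $\sum_{i\in S} x_i$ plus (number of padding coordinates set to $1$ in $\vec p$), and I add $O(nW)$ padding coordinates (forced to $0$ on queries) so that the realizable additive constant ranges over $\{0,\dots,O(nW)\}$ — enough to match every $c$. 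So every form becomes an honest Hamming distance to a Boolean anchor in $\widetilde n = O(nW)$ dimensions, and the $\mpPTF$ predicate becomes the $\HNN$ predicate, giving $\mpPTF(\poly(n))\subseteq\closHNN$.

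The main obstacle I anticipate is the \textbf{tie-breaking / strictness mismatch}: the $\mpPTF$ predicate uses a non-strict $\le$ between the two minima, whereas Definition~\ref{def:HNN}/\ref{def:kNN} require a \emph{strict} winner. I would resolve this by arranging all distances to be integers and then shifting one side down by $\tfrac12$ (in the real case directly; in the Boolean case by adding one more padding coordinate set to $1$ on all negative anchors and to $0$ on all positive anchors, which subtracts exactly $1$ from every $R_j$-distance — or by the standard trick of duplicating and rescaling so that a half-integer gap appears), so that the non-strict inequality on integers becomes a strict inequality on the shifted values; the closure is robust to this because adding such a uniform constant to one side does not change the represented function. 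A secondary, purely bookkeeping point is verifying that the polynomial blow-ups ($O(W)$ for weights, $O(nW)$ for the padding) compose to keep $\widetilde n = O(nW)$ and that in the real case one can avoid variable duplication entirely, matching the sharp ``$n+1$ dimensions, $m$ anchors'' statement; both are routine once the construction above is in place.
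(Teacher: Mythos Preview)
Your real-anchor argument is correct and matches the paper's: set the first $n$ anchor coordinates to $(1-\vec w)/2$ to realize an arbitrary linear part, then use one extra coordinate (query fixed to $0$) with anchor value $t_i$ to adjust the constant via $t_i^2$. The tie-breaking fix you describe is also essentially what the paper does (multiply by $2$, shift one side by $1$).

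The Boolean case, however, has a genuine gap. After your $\{0,1\}$-normalization, different linear forms have different supports $S$ over the \emph{same} set of duplicated variables (the substitution $v$ is global). For a coordinate $i$ outside the support of a given form you need that form's anchor to contribute coefficient $0$ on $x_i$ --- but a single Boolean anchor coordinate contributes either $x_i$ (if $p_i=0$) or $1-x_i$ (if $p_i=1$) to the Hamming distance, never $0\cdot x_i$. Your claim that ``$p_i=1$ on coordinates outside $S$ realizes $\sum_{i\in S}x_i$ plus the number of padding coordinates set to $1$'' is therefore false: on the non-$S$ \emph{variable} coordinates you pick up an unwanted $\sum_{i\notin S}(1-x_i)$. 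The paper's construction is exactly the fix for this: make all coefficients even and positive, duplicate $x_k$ into a block of size $t_k=\max_{i,j}a_{ijk}$, and within that block set $(t_k+a_{ijk})/2$ anchor entries to $0$ and $(t_k-a_{ijk})/2$ to $1$. The block then contributes $a_{ijk}x_k+(t_k-a_{ijk})/2$, i.e.\ every per-coordinate coefficient is $\pm 1$ and they sum to the target $a_{ijk}$ (in particular, $a_{ijk}=0$ is realized by a perfectly balanced block). The residual constants are absorbed by a constant-$1$ padding block, just as you suggested. So your skeleton and dimension count are right, but the normalization must be to $\pm 1$ per-coordinate coefficients (the balanced-block trick), not to $\{0,1\}$.
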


\begin{proof}
We start with the $\mpPTF(\poly(n))$ case.
Let $\ind[\min_{i \leq \ell_1} L_{1i}(\vec{x}) \leq \min_{j \leq \ell_2} L_{2j}(\vec{x})]$ be an arbitrary $\mpPTF(\poly(n))$. First make some pre-processing steps. First, multiply each linear form by $2$ and add one to the right-hand side, so that ties are won by the left-hand side. Second, we would like to make all coefficients positive. For this, while there exists a negative term $-a_{ijk}x_k$ (or constant $-\theta_{ij}$), just add $x_k$ (or $1$) to every linear form until all negative terms are eliminated. No coefficient (or constant) will increase by more than $W$. Third, we make all coefficients even by multiplying all linear forms by two. Finally, we add the same constant $\Theta$ (to be decided later) to all linear forms. Then, every linear form is equal to $L_{ij}(\vec{x}) = a_{ij1}x_1 + \cdots + a_{ijn}x_n + \theta_{ij} + \Theta$, for positive, even constants $a_{ijk}, \theta_{ij} \leq 8W$. 

Define $n$ \emph{block sizes} $t_1,\cdots,t_n$ by $t_k := \max_{i,j} a_{ijk}$ (i.e., the maximum coefficient of $x_k$ in any linear form). Also define $C = \Theta + \max_{i,j} \theta_{ij}$ and let $\widetilde{n}:=t_1+\cdots+t_n+C$. Inputs $\vec{x} \in \{0,1\}^n$ are mapped ($\rightsquigarrow$) to query points $\widetilde{\vec{x}} \in \{0,1\}^{\widetilde{n}}$ and linear forms $L_{ij}$ are mapped to anchors $\widetilde{\vec{p}}_{ij} \in \{0,1\}^{\widetilde{n}}$ such that $\Delta(\widetilde{\vec{x}},\widetilde{\vec{p}}_{ij}) = L_{ij}(\vec{x})$. In particular,
\[
\widetilde{\vec{x}} := \underbrace{x_1 \cdots \cdots x_1}_{t_1 \text{ many}} \cdots  \underbrace{x_n \cdots \cdots x_n}_{t_n \text{ many}} \cdot \underbrace{1 \cdots 1}_{C \text{ many}}
\]
and
\[
\widetilde{\vec{p}}_{ij} = \underbrace{0 \cdots \cdots 0}_{(t_1+a_{ij1})/2}\underbrace{1 \cdots \cdots 1}_{(t_1-a_{ij1})/2} \cdots  \underbrace{0 \cdots \cdots 0}_{(t_n+a_{ijn})/2}\underbrace{1 \cdots \cdots 1}_{(t_n-a_{ijn})/2} \cdot  \underbrace{0 \cdots \cdots 0}_{z_{ij}}\underbrace{1 \cdots \cdots 1}_{C - z_{ij}}
\]
where $z_i$ will be chosen momentarily. (Let $P = \{\widetilde{\vec{p}}_{1j}\}_{j \leq \ell_1}$ and  $N = \{\widetilde{\vec{p}}_{2j}\}_{j \leq \ell_2}$.) The distance between $\widetilde{\vec{x}}$ and $\widetilde{\vec{p}}_{ij}$ is equal to
\begin{align*}
\Delta(\widetilde{\vec{x}}, \widetilde{\vec{p}}_{ij}) &= z_{ij} + \sum_{k} \left( \frac{t_k+a_{ijk}}{2} \right) x_k + \left( \frac{t_k-a_{ijk}}{2}\right)(1-x_k)  \\
&=  z_{ij} + \langle \vec{a_{ij}}, \vec{x} \rangle + \sum_{k} \left( \frac{t_k-a_{ijk}}{2}\right)
\end{align*}

Now let $z_{ij} = \Theta + \theta_{ij} - \sum_{k} \left( \frac{t_k-a_{ijk}}{2}\right)$ so that $\Delta(\widetilde{\vec{x}}, \widetilde{\vec{p}}_{ij}) =  \langle \vec{a_{ij}}, \vec{x} \rangle + \Theta + \theta_{ij}$. This is valid (i.e., $z_{ij}$ is a non-negative integer) if we choose a large enough value for $\Theta$: The minimal value of $\Theta$ such that $z_{ij} \geq 0$ for all $i,j$ is
\[
\Theta = \max_{i,j}  \left( \sum_{k} \left( \frac{t_k-a_{ijk}}{2}\right) - \theta_{ij}\right) \leq \sum_k \frac{t_k}{2} \leq 4nW.
\]

Thus, for $\Theta = 4nW$, we may always choose $0 \leq z_{ij} \leq \Theta + \theta_{ij} \leq C$. Observe that $\vec{x} \rightsquigarrow \widetilde{\vec{x}}$ by duplicating each $x_i$ at most $8W$ times and introducing at most $4nW + 8W$ constant variables. Thus, the original $\mpPTF$ is equivalent to a subfunction of an $\HNN$ representation with $m$ anchors at most $4nW + 8W$ dimensions.

For the $\mpPTF(\infty)$ case, the same method applies, only now we do not need to increase the dimension that much. All coefficients can be realized by choosing anchors $\vec{p}_{ij} = (1-\vec{a}_{ij})/2$ and all constants $\theta_{ij}$ can be corrected using one additional dimension. 
\end{proof}

From this we can also deduce the following:

\begin{theorem}
Any function with an $m$-anchor $\NN$ representation with bit-complexity $O(\log n)$ is equivalent to an $\mpPTF(\poly(n))$ with $m$ terms. Any function of $n$ inputs with $\mpPTF(\poly(n))$ representation with $m$ terms is equivalent to a subfunction of a function of $n+1$ inputs with an $m$-anchor $\NN$ representation with bit-complexity $O(\log n)$. 
\end{theorem}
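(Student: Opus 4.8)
The plan is to prove the two directions separately, piggybacking on the constructions already used for Theorem~\ref{th:mpPTFequalsNN}, but carefully tracking bit-complexity instead of dimension.

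\textbf{Direction 1: $m$-anchor $\NN$ with bit-complexity $O(\log n)$ $\Rightarrow$ $\mpPTF(\poly(n))$ with $m$ terms.} First I would recall from the proof of Lemma~\ref{lem:NNtompPTF} that each anchor $\vec{p}$ gives a linear form $\Delta(\vec{x},\vec{p}) = \langle \vec{1}-2\vec{p},\vec{x}\rangle + \|\vec{p}\|_2^2$, and the $\NN$ representation computes $\ind[\min_{\vec{p}\in P}\Delta(\vec{x},\vec{p}) \le \min_{\vec{q}\in N}\Delta(\vec{x},\vec{q})]$. If every coordinate $p_i$ has bit-complexity $O(\log n)$, then after clearing the common denominator (a single integer of bit-complexity $O(\log n)$, hence value $\poly(n)$), the coefficients $1-2p_i$ and the constant $\|\vec{p}\|_2^2$ become integers of magnitude $\poly(n)$. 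Multiplying all $m$ linear forms by the (single, global) common denominator preserves the inequality, so we obtain an $\mpPTF$ with $m$ terms and maximum weight $\poly(n)$, i.e.\ in $\mpPTF(\poly(n))$. The only subtlety is that the constant term $\|\vec{p}\|_2^2$ is a sum of $n$ squares, each of bit-complexity $O(\log n)$, hence of magnitude $\poly(n)$ — still fine — and that the substitution/duplication operations underlying the closure only multiply coefficients by integers of size $\poly(n)$, so $\closNN$ with $O(\log n)$ bit-complexity still lands in $\mpPTF(\poly(n))$.

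\textbf{Direction 2: $\mpPTF(\poly(n))$ with $m$ terms $\Rightarrow$ subfunction of $(n+1)$-variable $\NN$ with $m$ anchors and $O(\log n)$ bit-complexity.} Here I would reuse the $\mpPTF(\infty)$ construction at the end of Lemma~\ref{lem:mpPTFtoNN}: given linear forms $L_{ij}(\vec{x}) = \langle \vec{a}_{ij},\vec{x}\rangle + \theta_{ij}$, set the anchor for $L_{ij}$ to be $\vec{p}_{ij} = (\vec{1}-\vec{a}_{ij})/2$ in the first $n$ coordinates, so that $\langle \vec{1}-2\vec{p}_{ij},\vec{x}\rangle = \langle \vec{a}_{ij},\vec{x}\rangle$, and use one extra coordinate (duplicated appropriately, or rather: add a new variable fixed to a constant via the substitution $\rightsquigarrow$) to absorb the discrepancy between $\|\vec{p}_{ij}\|_2^2$ and the desired constant $\theta_{ij}$. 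Because the $\mpPTF$ has polynomially bounded weights, every entry of $\vec{p}_{ij}$ is a half-integer of magnitude $\poly(n)$ — bit-complexity $O(\log n)$ — and the constant corrections $\theta_{ij}$ are also $\poly(n)$, so a single extra coordinate set to a value of magnitude $\poly(n)$ (again $O(\log n)$ bits) suffices to equalize all constants; the usual ``add the same large $\Theta$ to every form, then realize it in the extra coordinate'' trick handles the case where some $\theta_{ij}$ corrections have different signs. Thus the resulting $(n+1)$-variable $\NN$ representation has $m$ anchors and bit-complexity $O(\log n)$, and the original function is the subfunction obtained by the substitution that duplicates/fixes the relevant variables.

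\textbf{Main obstacle.} The routine part is the arithmetic bookkeeping; the genuinely delicate point is matching up the \emph{closure} (substitution/duplication) bookkeeping with the bit-complexity bound in both directions simultaneously — in particular, checking that in Direction~1 the duplication of a variable $x_k$ into $t_k$ copies (which is what a subfunction of an $\NN$ representation can encode) does not inflate the $\mpPTF$ weights beyond $\poly(n)$, and conversely that in Direction~2 a single auxiliary coordinate genuinely suffices to absorb all $m$ constant-term corrections $\theta_{ij}$ at once without blowing up its bit-complexity. Both follow because polynomially bounded weights stay polynomially bounded under a constant number of the relevant operations, but this is the step where one must be careful rather than wave hands, so I would state it explicitly. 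Everything else is a direct transcription of the proofs of Lemmas~\ref{lem:NNtompPTF} and~\ref{lem:mpPTFtoNN} with ``dimension'' replaced by ``bit-complexity.''
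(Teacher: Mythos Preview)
Your proposal is correct and follows essentially the same approach as the paper: the paper's own proof is a one-sentence observation that in the $\NN$/$\mpPTF(\infty)$ cases of Lemmas~\ref{lem:NNtompPTF} and~\ref{lem:mpPTFtoNN} the anchor bit-complexity and the logarithm of the $\mpPTF$ weights are linearly related, and you have simply unpacked that observation with the appropriate arithmetic bookkeeping. One minor remark: the first direction in the statement concerns $\NN$ itself (not $\closNN$), so your discussion of how substitution/duplication affects weights is unnecessary there, though harmless.
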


\begin{proof}
Observe that in Lemmas~\ref{lem:NNtompPTF} and~\ref{lem:mpPTFtoNN} -- for $\NN$ and $\mpPTF(\infty)$ -- the bit-complexity of $\NN$ and the logarithms of weights of $\mpPTF$ are linearly related.
\end{proof}

\subsection{Proof of Corollary \ref{cor:HNNneqHNNbar}}

\begin{proof}
It is shown in \cite{hajnal2022nearest} that $\XOR$ has a unique $\HNN$ representation with $2^n$ anchors. Furthermore, it is established in \cite{hansen2015polynomial} that $\XOR \in \mpPTF(\poly(n))$: In particular, $\XOR(\vec{x}) = 1$ if and only if $\min \left\{L_0(\vec{x}),L_2(\vec{x}),\cdots\right\} \leq \min \left\{L_1(\vec{x}),L_3(\vec{x}),\cdots\right\}$ where $L_i(\vec{x}) = i^2 - 2i \cdot (x_1+\cdots+x_n)$.
\end{proof}

\subsection{Proof of Corollary \ref{cor:signrankLB}}

\begin{proof}
It was shown by \cite{hajnal2022nearest} that the $\NN$ complexity of a Boolean function $f$ is bounded below by the \emph{sign-rank} of $f$, and this can be easily extended to $\closNN$ through Theorem \ref{th:mpPTFequalsNN}: The number of terms in an $\mpPTF$ computing $f$ is also bounded below by the sign-rank of $f$, by \cite{hansen2015polynomial}.

\cite{forster2002linear} and \cite{razborov2010sign} respectively establish that the sign rank of $\IP$ is equal to $2^{n/2}$ and the sign rank of $f_n$ is $2^{\Omega(n)}$.
\end{proof}

\subsection{Proof of Lemma \ref{lem:PTF-to-LDL}}

\begin{proof}
    Consider a function $f \in \mpPTF(\poly(n))$ and let $ \ind[\min_{i \leq \ell_1} L_{1i}(\vec{x}) \leq \min_{j \leq \ell_2} L_{2j}(\vec{x})]$ be its representation. We can assume that all possible values of all linear forms are distinct. For this it is enough to multiply all forms by $\ell_1+\ell_2$ and to add to each form it's own unique remainder modulo $\ell_1+\ell_2$. 

    Observe that all linear forms obtain only polynomially many variables (since there output is polynomially bounded in absolute value). Denote possible values of the form $L_{ij}$ by $a_{ij1}, \cdots, a_{ijt}$ for some $t$ polynomially bounded in $n$. Note that, for different linear forms, the number of the values obtained might be not the same. To simplify the notation we assume that we add several equal values to the list to make them all of equal size $t$. 

    Now we are ready to produce the decision list. Let $c_1=1$ and $c_2=0$. We consider each $a_{ijk}$ in increasing order and query if $L_{ij}(\vec{x}) \leq a_{ijk}$. If so, we output $c_i$. If not, we proceed to the next $a_{ijk}$.

    This decision list computes $f$ since we are just looking for the minimal value of a linear form among all possible values of the forms.
\end{proof}

\subsection{Consequences of Theorem \ref{th:kNNmpPTF}}

\begin{corollary}
\label{cor:kNNinNNbar}
Any Boolean function with a $\closkNN$ representation with $m$ anchors has an $\closNN$ representation with $\binom{m}{k}$ anchors. (Similarly, Boolean function with a $\closkHNN$ representation with $m$ anchors has an $\closHNN$ representation with $\binom{m}{k}$ anchors.)
\end{corollary}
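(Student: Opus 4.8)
The plan is to combine Theorem~\ref{th:kNNmpPTF} with Theorem~\ref{th:mpPTFequalsNN} in a way that tracks the number of anchors/terms rather than just membership in the classes. First I would recall that Theorem~\ref{th:kNNmpPTF} says a Boolean function with an $m$-anchor $\kNN$ representation is computed by an $\mpPTF$ with $\binom{m}{k}$ terms, and this $\mpPTF$ has unbounded weights in general but polynomially-bounded weights when the anchors are Boolean (since in that case the distance forms $\Delta(\vec{x},\vec{p})$ have $\pm 1$ coefficients and constants bounded by the dimension). So the $\kNN$ case lands in $\mpPTF(\infty)$ with $\binom{m}{k}$ terms, and the $\kHNN$ case lands in $\mpPTF(\poly(n))$ with $\binom{m}{k}$ terms.

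Next I would apply Lemma~\ref{lem:mpPTFtoNN} (the direction $\mpPTF(\infty) \subseteq \closNN$, $\mpPTF(\poly(n)) \subseteq \closHNN$), which is stated with the quantitative bound that an $\mpPTF$ with $m'$ terms has an $\closNN$ representation with $m'$ anchors (in $n+1$ dimensions), and in the bounded-weight case an $\closHNN$ representation with $m'$ anchors. Plugging $m' = \binom{m}{k}$ gives exactly the claimed $\binom{m}{k}$-anchor $\closNN$ (resp. $\closHNN$) representation. One small thing to verify is that a $\closkNN$ representation (a subfunction of a $\kNN$ function) still yields an $\mpPTF$ of the same term count: since a subfunction is obtained by substituting/duplicating variables, and substitution of variables into the distance linear forms just multiplies coefficients and adds constants (as noted in the proof of Lemma~\ref{lem:NNtompPTF}), the number of terms is preserved, and in the bounded-weight case the weights remain polynomially bounded because duplication of polynomially many variables only inflates coefficients polynomially. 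So the composition goes through cleanly at the level of $\closkNN$ as well.

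I do not anticipate a genuine obstacle here — the corollary is essentially a bookkeeping composition of two already-proved quantitative statements. The only mildly delicate point, which I would state explicitly, is the handling of the closure: making sure that passing from $\kNN$ to $\closkNN$ does not break the term-count bound in Theorem~\ref{th:kNNmpPTF} and does not break the polynomial weight bound in the Hamming case. Once that observation is in place, the proof is the two-line chain $\closkNN \ni f \Rightarrow f \in \mpPTF(\infty)$ with $\binom{m}{k}$ terms $\Rightarrow f \in \closNN$ with $\binom{m}{k}$ anchors, and identically with $\HNN$/$\mpPTF(\poly(n))$ throughout.

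\begin{proof}
This is an immediate consequence of Theorems~\ref{th:kNNmpPTF} and~\ref{th:mpPTFequalsNN}, made quantitative. Let $f$ have a $\closkNN$ representation with $m$ anchors, i.e., $f$ is a subfunction of some $g$ with an $m$-anchor $\kNN$ representation. By (the argument of) Theorem~\ref{th:kNNmpPTF}, $g$ is computed by an $\mpPTF$ with $\binom{m}{k}$ terms; moreover, since a subfunction is obtained by duplicating variables and assigning constants — which, applied to the distance linear forms $\Delta(\vec{x},\vec{p})$, only rescales coefficients and shifts constants (cf.\ the proof of Lemma~\ref{lem:NNtompPTF}) — the function $f$ itself is computed by an $\mpPTF$ with $\binom{m}{k}$ terms. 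Hence $f \in \mpPTF(\infty)$, and by Lemma~\ref{lem:mpPTFtoNN} $f$ has an $\closNN$ representation with $\binom{m}{k}$ anchors. For the Hamming case, the same steps apply: if the anchors of $g$ are Boolean and the representation uses $\widetilde{n} = \poly(n)$ dimensions, then the distance forms have $\pm 1$ coefficients and constants bounded by $\widetilde{n}$, so the resulting $\mpPTF$ has maximum weight $\poly(n)$; duplicating polynomially many variables keeps the weights polynomially bounded, so $f \in \mpPTF(\poly(n))$, and Lemma~\ref{lem:mpPTFtoNN} gives an $\closHNN$ representation with $\binom{m}{k}$ anchors.
\end{proof}
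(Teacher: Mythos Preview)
Your proposal is correct and matches the paper's intended argument: the paper states this corollary without proof, simply as a consequence of Theorem~\ref{th:kNNmpPTF} together with Theorem~\ref{th:mpPTFequalsNN} (specifically Lemma~\ref{lem:mpPTFtoNN}), which is exactly the two-step composition you carry out. Your explicit handling of the closure operation (checking that substitution preserves term count and, in the Hamming case, keeps weights polynomial) is a welcome clarification the paper leaves implicit.
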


\begin{remark} \label{rem:non-boolean}
Theorem~\ref{th:kNNmpPTF} and Corollary~\ref{cor:kNNinNNbar} can be extended to non-Boolean inputs. More precisely, the same statements are true over any finite domain $D \subseteq \mathbb{R}^n$. For this we can express (squared) distances to anchors as quadratic forms, for each subset of distances of size $k$ consider the average of these distances and represent them as a distance to a new anchor. We still need to add an extra dimension to absorb constant terms.
\end{remark}

\begin{remark}
Theorem~\ref{th:kNNmpPTF} and Corollaries~\ref{cor:kNNinNNbar} and~\ref{cor:kNNsignrank} can be extended to the case of weighted $\kNN$. Indeed, in Theorem~\ref{th:kNNmpPTF}, instead of sums of linear forms we will have weighted sums. This will require $\binom{m}{k} \cdot k! = \frac{m!}{(m-k)!}$ terms in the $\mpPTF$ representation. If the weights in the weighted $\kNN$ representation are small and the bit-complexity of anchors is small, this results in a $\closHNN$ representation and if there are no restrictions of weights and bit-complexity, we get $\closNN$ representation. The proof of Corollary~\ref{cor:kNNsignrank} still works despite the increase of the number of anchors to $\frac{m!}{(m-k)!}$.
\end{remark}

\subsection{Proof of Theorem \ref{th:kNNequalskSTAT}}

We first make the following general observation: \cite{patrick1970generalized} show that finding the $k$'th nearest positive anchor and $k$'th nearest negative anchor and classifying based on which is closest is equivalent to computing a $(2k-1)$-nearest neighbors representation. This fact can be generalized, considering the closure of $\kNN$. 

\begin{lemma} \label{lem:kSTAT vs. Majority}
Let $A$ and $B$ be two sets of numbers and let $S$ be the $k$ smallest elements of $A \cup B$. Then, 
\[
|A \cap S| \geq |B \cap S| \iff A_{(t)} < B_{(t)}
\]
where $t=\left\lfloor\frac{k+1}{2} \right\rfloor$. (As in $\kNN$, we assume $S$ exists and is unique).
\end{lemma}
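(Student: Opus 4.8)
\textbf{Proof plan for Lemma~\ref{lem:kSTAT vs. Majority}.}

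The plan is to analyze the relative order of the $k$ smallest elements of $A \cup B$ directly. Write $S$ for this set of $k$ smallest elements, and let $a = |A \cap S|$, $b = |B \cap S|$, so $a + b = k$. The key observation is that $A \cap S$ consists of exactly the $a$ smallest elements of $A$, i.e. $A \cap S = \{A_{(1)}, \dots, A_{(a)}\}$, and similarly $B \cap S = \{B_{(1)}, \dots, B_{(b)}\}$; this is because $S$ is downward-closed under the ordering of $A \cup B$, so if an element of $A$ lies in $S$ then so does every smaller element of $A$. I would establish this as the first step.

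Next, I would characterize membership in $S$ via a ``crossing'' comparison. For an element $A_{(i)}$: it belongs to $S$ if and only if the number of elements of $A \cup B$ strictly below it is less than $k$, which (since exactly $i-1$ elements of $A$ are below it) is equivalent to saying fewer than $k - (i-1)$ elements of $B$ are below it, i.e. $A_{(i)} < B_{(k-i+1)}$ (using that all values are distinct, which we may assume as in the proof of Lemma~\ref{lem:PTF-to-LDL}, or which follows from the uniqueness hypothesis on $S$ after a perturbation argument). So $a = |A \cap S|$ is the largest $i$ with $A_{(i)} < B_{(k-i+1)}$, and symmetrically $b$ is the largest $j$ with $B_{(j)} < A_{(k-j+1)}$.

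With $t = \lfloor (k+1)/2 \rfloor$, I would then argue the equivalence by a case split on the parity of $k$ — or more cleanly, just chase the inequalities. The statement $A_{(t)} < B_{(t)}$; I want to show this holds iff $a \geq b$. If $A_{(t)} < B_{(t)}$, then combined with $B_{(t)} \le B_{(k-t+1)}$ when $t \le k-t+1$ (which holds since $t = \lfloor(k+1)/2\rfloor$), we get $A_{(t)} < B_{(k-t+1)}$, hence $a \geq t$; and since $a + b = k$ this forces $b \le k - t \le t \le a$, giving $a \ge b$. Conversely, if $a \ge b$ then $a \ge \lceil k/2 \rceil \ge t$, so $A_{(t)} \le A_{(a)}$, and $A_{(a)}$ is in $S$ while — I claim — $B_{(t)}$ is not (since $t > b$ when $a > b$; the boundary case $a = b$ needs $k$ even, $t = k/2 = a = b$, handled by noting $A_{(a)} \in S$, $B_{(b+1)} \notin S$ and comparing), so $A_{(t)} < B_{(t)}$. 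The main obstacle I expect is bookkeeping around the off-by-one in the floor when $k$ is even versus odd and the tie case $a = b$; I would dispatch this by writing $\lceil k/2\rceil$ and $\lfloor k/2 \rfloor$ explicitly and checking that ``$a$ is the count of $A$-elements in $S$'' pins down the strict inequality at index $t$ in each parity. Everything else is a short deterministic argument once the downward-closure of $S$ is in hand.
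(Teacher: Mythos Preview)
Your approach is the paper's: reduce $|A\cap S|\ge |B\cap S|$ to $|A\cap S|\ge t$ via $a+b=k$, then link this to the order statistics through downward closure of $S$ (your characterization $A_{(i)}\in S \Leftrightarrow A_{(i)}<B_{(k-i+1)}$ is exactly what underlies the paper's one-line ``this happens iff $A_{(t)}<B_{(t)}$''). The paper simply asserts the two observations; you have unpacked the second one, and your forward direction $A_{(t)}<B_{(t)}\Rightarrow a\ge b$ is correct as written.

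Your converse, however, does not close at the boundary case $a=b$ (which forces $k$ even and $t=k/2=a=b$). From $A_{(a)}\in S$ and $B_{(b+1)}\notin S$ you obtain only $A_{(t)}<B_{(t+1)}$, which is weaker than the desired $A_{(t)}<B_{(t)}$. This gap is not patchable: with $k=2$, $A=\{2\}$, $B=\{1,3\}$ one has $S=\{1,2\}$ and $|A\cap S|=|B\cap S|=1$, yet $A_{(1)}=2>1=B_{(1)}$, so the stated equivalence actually fails for even $k$. The paper's two-line proof glosses over the same point. For odd $k$ (the case that matters, and the one explicitly used in the second application where $k=2k_l-1$) your argument goes through cleanly with no boundary case, since then $a\ge b$ forces $a\ge t>b$ strictly.
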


\begin{proof}
    $A$ contains a majority of the elements in $S$ if and only if $|A \cap S| \geq t$. This happens if and only if the $t$'th smallest element in $A$ is smaller than the $t$'th smallest element in $B$.
\end{proof}

We now proceed with the proof of Theorem \ref{th:kNNequalskSTAT}.

\medskip
\begin{proof}
    For the inclusion $\closkNN \subseteq \kSTAT$, consider any function $f$ in $\closkNN$. It is a subfunction of some function $g$ with a $\kNN$ representation $P \cup N$. As in Lemma \ref{lem:NNtompPTF}, the distances between $\vec{x}$ and each anchor are linear forms $A = \{L_{1}(\vec{x}),\cdots,L_{|P|}(\vec{x})\}$ and $B = \{R_1(\vec{x}),\cdots,R_{|N|}(\vec{x})\}$ which we assume have integer coefficients by the usual finite precision argument. By definition $g(\vec{x}) = 1$ if and only if the set $S$ of $k$-nearest neighbors satisfies $|P \cap S| \geq |N \cap S|$. By Lemma~\ref{lem:kSTAT vs. Majority}, this happens if and only if $A_{(t)} < B_{(t)}$, taking $t=\left\lfloor\frac{k+1}{2} \right\rfloor$. Hence, $g \in \kSTAT$. As $\kSTAT$ is closed under taking subfunctions, $f \in \kSTAT$ as well.

    For the inclusion $\kSTAT \subseteq \closkNN$, assume that $f$ has a $\kSTAT$ representation. By adding dummy linear forms we can have $k_l=k_r$. By Lemma~\ref{lem:kSTAT vs. Majority}, the inequality~\eqref{eq:kSTAT} holds if and only if the $2 k_l - 1$ smallest linear forms consist of more linear forms from the left-hand side than the right. Representing each inequality by an anchor, we obtain a representation of the same function in $\closkNN$.

    The case of $\closkHNN$ and $\kSTATlw$ is analogous.
\end{proof}

\subsection{Proof of Theorem \ref{thm:altern-k-stat}}

\begin{proof}
    Suppose a Boolean function $f$ has a representation $\{L_1, \cdots, L_p\}$ satisfying \eqref{eq:altkSTAT} for some function $\lab$ and integer $k$. We will show that $f \in \kSTAT$. First, we assume that all coefficients in all linear form are integers and ensure that all values of all linear forms are distinct and even. For this, multiply all forms by $2p$ and shift each form by its own even remainder modulo $2p$.
    
    For each $i \leq p$, we add one linear form to each side of \eqref{eq:kSTAT}. If $\lab(i) = 1$, then place the form $L_i(\vec{x})$ on the left-hand side and $L_i(\vec{x})+ 1$ on the right. If $\lab(i)=0$, put the $L_i(\vec{x})$ on right-hand side and $L_i(\vec{x})+1$ on the left. It is easy to see that the $k$'th statistics in the left and right-hand sides of the resulting $\kSTAT$ representation are $L_i(\vec{x})$ and $L_i(\vec{x}) + 1$ (not necessarily in that order), where $L_i(\vec{x})$ is the $k$'th statistic of the original representation. Hence, the inequality in \eqref{eq:kSTAT} holds if and only if $\lab(i)=1$.

    For the other direction, assume we have a function $f \in \kSTAT$ given by \eqref{eq:kSTAT}. We again assume that all coefficients are integers and all values of all linear forms are distinct. Now we construct the required representation of $f$. For each form $L_i$ we add to the representation the forms 
    \[
    L_{ij}(\vec{x}) :=L_i(\vec{x}) + \frac{j}{k_l+k_r}
    \]
    for all $j \in \{0, 1,\cdots, k_l+k_r-1\}$, and for each form $R_i$ we add to the representation the forms 
    \[
    R_{ij}(\vec{x}) := R_i(\vec{x}) + \frac{j}{k_l+k_r+1}
    \]
    for all $j = \{0,1, \cdots, k_l+k_r\}$. (That is, we have $k_l+k_r$ copies of each form $L_i$ and $k_l+k_r+1$ copies of each form $R_i$). To each $L_{ij}$, $R_{ij}$ we assign the label $0$ if $j < k_l$, and $1$ if $j \geq k_l$. Finally, we set $k=(k_l+k_r-1)(k_l+k_r+1) + 1$.

    Now, observe that the inequality~\eqref{eq:kSTAT} holds if and only if, among the $k_l+k_r-1$ smallest forms, there are at least $k_l$ forms $L_i$. Assume that there are precisely $a$ forms $L_{i}$ and $b$ forms $R_{i}$. In particular, $a+b = k_l + k_r -1$. Then, in the new representation, these linear forms give us
    \[
    a(k_l + k_r) + b(k_l + k_r +1) = (a+b) (k_l + k_r +1) - a = (k_l + k_r -1) (k_l + k_r +1) - a
    \]
    smallest forms. By construction, the next smallest forms are either $L_{i0} \leq \cdots \leq L_{i(k_l+k_r)}$ or $R_{i0} \leq \cdots \leq R_{i(k_l+k_r+1)}$ for some $i$. Thus, the $k$'th smallest form is either $L_{ia}$ or $R_{ia}$ and its label is $1$ if and only if $a \geq k_l$ as desired.
\end{proof}

\subsection{Proof of Theorem \ref{th:SYMMAJinkSTAT}}

\begin{proof}
    Suppose we are given a function $f \in \SYM \circ \MAJ$ and a circuit computing it. We are going to construct a $\kSTATlw$ representation of $f$ in the form given by Theorem~\ref{thm:altern-k-stat}.
    
    We can assume that all $\MAJ$ gates in the circuit have the same threshold $t=0$. For this we can just add dummy variables and fix them to constants. Denote the linear forms for $\MAJ$ gates by $L_1, \cdots, L_s$ (all weights are integers) and denote by $g\colon \{0,1\}^s \to \{0,1\}$ the symmetric function at the top of the circuit. Here, $s$ is the size of the circuit. Now, construct a $\kSTATlw$ representation with the following linear forms:
    \begin{equation} \label{eq:forms-for-sym-maj}
    (s+2)L_1(\vec{x}), \cdots, (s+2)L_s(\vec{x}), 1, 2, \cdots, s+1.
    \end{equation}
    That is, we multiply each linear form by $(s+2)$ and add $(s+1)$ constant linear forms with values $1, \cdots, s+1$. We let $k = s+1$.

    It is easy to see that the $k$'th statistic of~\eqref{eq:forms-for-sym-maj} is always one of the constant linear forms. It is the form $i$ if and only if $i-1$ of the linear forms among $L_1, \cdots, L_s$ are positive. We assign label $1$ to the form $i$ if and only if $g(\vec{x})=1$ for inputs of weight $i-1$. As a results we get the desired representation for $f$ and show that $f \in \kSTATlw$.
\end{proof}

\begin{remark}
    The well-known argument that shows $\MAJ \circ \THR = \MAJ \circ \MAJ$ (see~\cite{goldmann1992majority}) can be straightforwardly adapted to show that $\SYM \circ \THR = \SYM \circ \MAJ$. Thus, $\SYM \circ \THR \subseteq \kSTATlw$ follows from Theorem \ref{th:SYMMAJinkSTAT} as well.    
\end{remark}

\subsection{Proof of Theorem \ref{th:SYMANDinkNN}}

\begin{proof}
    First, as a warm-up, we show that $\IP \in \kNN$. Recall that $\IP(\vec{x},\vec{y}) = \bigoplus_{i=1}^n (x_i \wedge y_i)$. Denote by $\vec{a} = (\frac 12, \cdots, \frac 12)$ an $2n$-dimensional vector with $\frac 12$ in each coordinate. Note that $\Delta(\vec{a}, (\vec{x},\vec{y})) = \frac{n}{2}$ for all $(\vec{x},\vec{y}) \in \{0,1\}^{2n}$.

    For each $i = 1, \cdots, n$ and $j=0,1$ introduce an anchor 
    $\vec{p_{ij}} = \vec{a} + (\frac 12 - \frac j4) (\vec{e_i} + \vec{e_{i+n}})$. If for some $(\vec{x},\vec{y})$ we have $x_i=y_i=1$, then
    \[
        \Delta((\vec{x},\vec{y}), \vec{p_{ij}}) \leq \frac{n}{2} - 2 \left(\frac{1}{4} - \frac{1}{16} \right) = \frac{n}{2} - \frac{3}{8}.
    \]
    If, on the other hand, $x_i=0$ or $y_i=0$, then
    \[
        \Delta((\vec{x},\vec{y}), \vec{p_{ij}}) \geq \frac{n}{2} - \left(\frac{1}{4} - \frac{1}{16}\right) - \left(\frac 14 - \frac{9}{16}\right) = \frac{n}{2} + \frac{1}{8} > \frac{n}{2}.
    \]
    For each $i= 1, \cdots, n+1$ and $j = 0, 1$ and $l=0,1$ introduce an anchor $\vec{q_{i,j,l}} = \vec{a} + (-1)^{l} \frac{2i+j}{8n} \vec{e_1}$. For $(\vec{x},\vec{y})$ with $x_1=1$ it is not hard to see that 
    \begin{align*}
    \frac{n}{2} - \frac{3}{8} <& \Delta((\vec{x},\vec{y}), \vec{q_{n+1,1,0}}) < 
    \Delta((\vec{x},\vec{y}), \vec{q_{n+1,0,0}}) < \cdots <\\ <& 
    \Delta((\vec{x},\vec{y}), \vec{q_{1,1, 0}}) <
    \Delta((\vec{x},\vec{y}), \vec{q_{1,0, 0}}) <
    \frac{n}{2}
    \end{align*}
    and $\Delta((\vec{x},\vec{y}), \vec{q_{i,j,1}}) > \frac n2$ for all $i,j$.
    The situation is symmetric for $x_1=0$.
    We assign label $j$ to the anchor $\vec{p_{ij}}$. 
    We assign label $1$ to the anchor $\vec{q_{ijl}}$ iff $i + j$ is odd. 
    We let $k = 2n+1$.

    It is easy to see that for a given $(\vec{x},\vec{y})$ among the $k$ closest anchors we have all pairs of anchors $\vec{p_{i0}}, \vec{p_{i1}}$ for all $i$ such that $x_i=y_i=1$. Denote the number of such $i$ by $t$. Also among the $k$ closest anchors we will have pairs of anchors $\vec{q_{i,0,l}}, \vec{q_{i,1,l}}$ for an appropriate $l$ and for $i = n+1, \ldots, t+2$. In each of these pairs the labels of anchors are opposite and they cancel out when we compute the majority. Finally, one last anchor we will have among the $k$ closest anchors is $\vec{q_{t+1,1,l}}$. The label of this anchor determines the majority among the $k$ closest anchors and it is 1 iff $t$ is odd.
    As a result, we get the desired representation for $\IP$ with $6n+4$ anchors.

    Now we extend this argument to any function in $\SYM \circ \AND$. Consider a function $f(\vec{x}) = g(f_1(\vec{x}),\cdots, f_s(\vec{x}))$, where each $f_i$ has the form
    \[
    f_{i}(\vec{x}) = \left( \bigwedge_{i \in S_i} x_i \right) \wedge \left( \bigwedge_{i \in T_i} \neg x_i \right)
    \]
    for some disjoint $S_i, T_i \subseteq [n]$.

    For each $f_i$ we let $\epsilon_{i1} > \epsilon_{i0} >0$ be a couple of parameters to be fixed later. We introduce a pair of anchors $\vec{p_{i1}}, \vec{p_{i0}}$ in the following way: we let the $j$th coordinate of $\vec{p_{ij}}$ to $\frac 12$ if $i \notin S_i \cup T_i$, to $\frac 32 - \epsilon_{ij}$ for $i \in S_i$ and to $ - \frac 12 + \epsilon_{ij}$ for $i \in T_i$. It is easy to see that for $\vec{x}$ such that $f_i(\vec{x})=1$ we have $\Delta(\vec{p_{ij}},\vec{x}) = \frac{n}{4} - |S \cup T|(\epsilon_{ij} - \epsilon_{ij}^2)$ and for $\vec{x}$ such that $f_i(\vec{x})=0$ we have $\Delta(\vec{p_{ij}},\vec{x}) \geq \frac{n}{4} - \left(|S \cup T|-1\right)(\epsilon_{ij} - \epsilon_{ij}^2) + 1$ for $\epsilon_{ij} < \frac 12$. We fix $\epsilon_{ij}$ in such a way that $\frac{n}{4} - |S \cup T|(\epsilon_{ij} - \epsilon_{ij}^2) < \frac n4 - \frac 12$ and $\frac{n}{4} - \left(|S \cup T|-1\right)(\epsilon_{ij} - \epsilon_{ij}^2) + 1 > \frac n4 + \frac 12$. We set $\lab(\vec{p_{ij}}) = j$.

    We construct anchors $\vec{q_{ijl}}$ for $i=1, \cdots, s+1$ and $j=0,1$ the same way as above and assign $\lab(\vec{q_{i1l}})$ to be equal to $g(\vec{y})$ for $\vec{y}$ of weight $i-1$ and $\lab(\vec{q_{i0l}})$ to be the opposite. We let $k=2s+1$. The same argument as for $\IP$ shows that we get the desired representation of $f$ with $6s+4$ anchors.
\end{proof}

\subsection{Proof of Theorem \ref{thm:eldl-in-kstat}}
\begin{proof}
    Consider a function $f \in \ELDL$ and suppose the linear forms in its representation are $L_1, \cdots, L_s$. Here $L_i$ corresponds to the $i$'th query. As in the proof of Theorem \ref{th:SYMMAJinkSTAT}, we can assume that all thresholds in all linear forms are 0.

    We are going to construct a representation for $f$ of the form provided by Theorem~\ref{thm:altern-k-stat}.
    We add to this representation the following linear forms:
    \[
        (s+1)L_1, -(s+1)L_1, (s+1)L_2 + 1, -(s+1)L_2 - 1, \cdots, (s+1)L_s + s-1, - (s+1)L_s - s + 1.
    \]
    That is, for each form $L_i$ in $\ELDL$ representation, we add the two forms $(s+1)L_i + (i-1)$ and $-(s+1)L_i - (i-1)$.
    We set $k=s$.

    Assume that for some $x$ we have $L_i(\vec{x})=0$ and all previous linear forms are non-zero. We than have that $(s+1)L_i(\vec{x}) + i-1 = i-1$. 
    It is not hard to see that for $j<i$ we have that among forms $(s+1)L_j(\vec{x}) + j-1$ and $-(s+1)L_j(\vec{x}) - j+ 1$  exactly one is greater than $i-1$: it is the first one if $L_j(\vec{x})>0$ and the second one if $L_j(\vec{x})<0$. For $j>i$ in a similar way we can see that among the forms $(s+1)L_j(\vec{x}) + j-1$ and $-(s+1)L_j(\vec{x}) - j+ 1$ exactly one is greater than $i-1$: it is the first one if $L_j(\vec{x})\geq0$ and the second one if $L_j(\vec{x})<0$. As a result there are exactly $s-1$ forms that are greater than $(s+1)L_i(\vec{x}) + i-1$. We assign to this form the same label $L_i(\vec{x})$ has in $\ELDL$. From this it follows that the constructed representation computes the same function.

    Clearly, the coefficients in the constructed form are polynomially related to the coefficients in the original forms. Thus, the same proof gives $\ELDLlw \subseteq \kSTATlw$.
\end{proof}

\begin{remark}
    Note that decision lists are computable in $\AC^0$ and thus can be computed by quasi-polynomial-size $\SYM \circ \AND$ circuits. As a result, $\ELDL$ can be computed by quasi-polynomial-size circuit in $\SYM \circ \AND \circ \ETHR = \SYM \circ \ETHR = \SYM \circ \MAJ$, where the second equality follows since $\ETHR$ is closed under $\AND$ operation. Still, Theorem~\ref{thm:eldl-in-kstat} gives a polynomial reduction that translates to the case of small coefficients. 
\end{remark}

\subsection{Proof of Theorem \ref{th:CNFnotinHNN}}

Such constructions are likely known; we outline a simple one for completeness.

\begin{lemma}
\label{lem:CNFconstruction}
For any even integer $k>0$, there exists a CNF with $n$ variables and $n2^{k-o_k(k)}/k$ clauses with $2^{(1-o_k(1))n}$ components.  
\end{lemma}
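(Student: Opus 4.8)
The plan is to build the CNF from a good binary code. Fix an even integer $k > 0$ and let $\mathcal{C} \subseteq \{0,1\}^k$ be a code of length $k$, minimum distance $k/2$ roughly, and rate $1 - o_k(1)$; the Gilbert--Varshamov bound guarantees $|\mathcal{C}| \geq 2^{(1-o_k(1))k}$ such codewords, so the number of \emph{non-codewords} in a block of length $k$ is at most $2^k - 2^{(1-o_k(1))k} = 2^{k - o_k(k)}$ (this is where the clause count $n 2^{k - o_k(k)}/k$ comes from, once we tile $[n]$ by $n/k$ disjoint blocks). For each block $B_i$ of $k$ coordinates and each non-codeword pattern $\vec{z} \in \{0,1\}^{B_i} \setminus \mathcal{C}$, introduce one clause that forbids the block restriction $\vec{x}|_{B_i}$ from equalling $\vec{z}$ (i.e.\ the $\OR$ of the literals $\lnot x_j$ for $j$ with $z_j = 1$ and $x_j$ for $j$ with $z_j = 0$). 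Call the resulting CNF $\Phi$. Then $\vec{x}$ satisfies $\Phi$ iff $\vec{x}|_{B_i} \in \mathcal{C}$ for every block $i$, so $\Phi^{-1}(1) = \mathcal{C}^{n/k}$ (the product code), which has size $|\mathcal{C}|^{n/k} \geq 2^{(1-o_k(1))n}$.

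The core step is the component count. I claim that $\mathcal{C}^{n/k}$, viewed as a subset of the Hamming cube on $n$ vertices, consists entirely of \emph{isolated} vertices --- each satisfying assignment is its own connected component --- whenever the minimum distance of $\mathcal{C}$ is at least $2$. Indeed, flipping a single coordinate of $\vec{x} \in \mathcal{C}^{n/k}$ changes exactly one block, say $B_i$, turning $\vec{x}|_{B_i}$ into a vector at Hamming distance $1$ from a codeword; since $\mathcal{C}$ has minimum distance $\geq 2$, this new block pattern is not in $\mathcal{C}$, so the flipped vector is not in $\Phi^{-1}(1)$. Hence the number of components of $\Phi$ equals $|\Phi^{-1}(1)| = |\mathcal{C}|^{n/k} \geq 2^{(1-o_k(1))n}$, and by Lemma~\ref{lem:HNNcomponents} any $\HNN$ representation needs that many anchors, giving Theorem~\ref{th:CNFnotinHNN}. (For the lemma statement's bookkeeping: $|\mathcal{C}| = 2^{(1-o_k(1))k}$ forces $\log_2 |\mathcal{C}| = k - o_k(k)$, so the per-block non-codeword count is $2^k - 2^{k - o_k(k)} \le 2^{k - \Omega_k(k)} = 2^{k - o_k(k)}$; summing over $n/k$ blocks yields the stated $n 2^{k-o_k(k)}/k$ clauses, and $|\mathcal{C}|^{n/k} = 2^{(1-o_k(1))n}$ components.)

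The main obstacle is purely quantitative matching: one has to pick the distance parameter of the Gilbert--Varshamov code correctly so that, \emph{simultaneously}, (i) the distance is at least $2$ (so the isolation argument runs) and (ii) both the clause bound $2^{k - o_k(k)}$ per block and the component bound $2^{(1-o_k(1))n}$ come out with the claimed $o_k$ terms. Taking distance exactly $2$ makes the isolation argument trivially fine but wastes nothing, since even distance-$2$ codes have rate $1 - o_k(1)$ (the Hamming/Singleton-type loss is only $O(\log k / k)$); alternatively one can take distance a small constant or even $\Theta(\sqrt{k})$ and still be safe. I would state it with distance $2$ for the cleanest write-up. A secondary routine point is verifying the clauses are genuine $k$-CNF clauses (each forbidden pattern gives a clause on $\le k$ literals, and padding to width exactly $k$ if desired is harmless), and that the blocks partition $[n]$ (assume $k \mid n$, else pad $n$ up to the next multiple of $k$, affecting the bounds only by $o(1)$ in the exponent).
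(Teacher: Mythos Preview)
Your approach coincides with the paper's: partition $[n]$ into $n/k$ blocks, fix in each block a set $\mathcal{C}\subseteq\{0,1\}^k$ with pairwise Hamming distance $\ge 2$ and $|\mathcal{C}|=2^{(1-o_k(1))k}$, and forbid each non-codeword by one width-$k$ clause; satisfying assignments are then isolated, so components $=|\mathcal{C}|^{n/k}$. The paper simply chooses $\mathcal{C}$ to be the middle slice $\{\vec{z}:\Delta(\vec{z})=k/2\}$ (which already has distance $2$ and size $\binom{k}{k/2}=2^{(1-o_k(1))k}$) rather than invoking a code.

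Two quantitative slips to fix before this is a proof. First, your opening asks Gilbert--Varshamov for minimum distance ``$k/2$ roughly'' \emph{and} rate $1-o_k(1)$; that combination is impossible (Singleton already forces rate $\le 1-d/k+o(1)$, and Plotkin is worse near $d=k/2$). Only distance $\ge 2$ is needed, as you correctly say in the last paragraph, so drop the $k/2$ claim. Second, the bookkeeping line ``$2^k-2^{k-o_k(k)}\le 2^{k-\Omega_k(k)}$'' is false: with, say, the parity code $|\mathcal{C}|=2^{k-1}$ the left side is $2^{k-1}$, which is not $2^{k-\Omega(k)}$. All you actually need (and all the lemma asserts) is the trivial bound $2^k-|\mathcal{C}|\le 2^k=2^{k-o_k(k)}$, since any nonnegative $o(k)$ term---including $0$ or $1$---satisfies the stated form.
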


\begin{proof}
Assume $k$ divides $n$. Divide the set of variables to $n/k$ disjoint sets $S_1,\cdots,S_{n/k}$ of size $k$. For each set $S_i$, define a CNF $C_i$ which evaluates to $1$ if and only if exactly half of the variables in $S$ are equal to $1$. This can be achieved with $\binom{k}{k/2}=2^{k-o_k(k)}$ clauses. 

Then, the CNF $C=C_1 \wedge \cdots \wedge C_{n/k}$ has exactly $\binom{k}{k/2}^{n/k}=2^{(1-o_k(1))n}$ satisfying assignments, and the Hamming distance between any two of such assignments is at least $2$. Thus, each of them constitutes a component.
\end{proof}

Hence, Theorem \ref{th:CNFnotinHNN} follows from Lemmas \ref{lem:HNNcomponents} and \ref{lem:CNFconstruction} by taking $k$ to be a constant independent of $n$. It is easy to extend the construction above to odd $k$. We omit the simple details.

\section{Circuits computing nearest neighbors}
\label{sec:circuitUB}

In this section we describe a straightforward construction of a \emph{depth-three} circuit computing $\HNN$ and then compress it to \emph{depth-two} at the cost of exponential weights. The folklore result of \cite{murphy1992nearest} is that any $\NN$ representation with $m$ anchors can be computed by a depth three threshold circuit with size $O(m^2)$. A short proof can be found in \cite{kilic2023information}.

\begin{theorem}[\cite{murphy1992nearest}]
\label{th:canonicalHNN}
\
\begin{itemize}
   \item $\NN \subseteq \OR \circ \AND \circ \THR, \ \ \AND \circ \OR \circ \THR$
   \item $\HNN \subseteq \OR \circ \AND \circ \MAJ, \ \ \AND \circ \OR \circ \MAJ$
\end{itemize}
Namely, every $\NN$ ($\HNN$) representation is computed by a depth-three $\AC^0 \circ \THR \ (\MAJ)$ circuit with size $|P||N| + \min\{|P|,|N|\} + 1$.
\end{theorem}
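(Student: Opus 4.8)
The plan is to give a direct ``canonical'' construction turning an $m$-anchor representation into a depth-three circuit, then obtain both the $\OR \circ \AND \circ \THR$ and $\AND \circ \OR \circ \THR$ forms by a straightforward dualization. Recall from Lemma~\ref{lem:NNtompPTF} that for any anchor $\vec{p}$, the squared distance $\Delta(\vec{x},\vec{p}) = \langle \vec{1}-2\vec{p},\vec{x}\rangle + \|\vec{p}\|_2^2$ is a linear form in $\vec{x}$ with integer (indeed, in the $\HNN$ case, $\pm 1$) coefficients after clearing denominators. Hence, for a fixed positive anchor $\vec{p}\in P$ and negative anchor $\vec{q}\in N$, the predicate $\Delta(\vec{x},\vec{p}) < \Delta(\vec{x},\vec{q})$ is equivalent to a single (strict, but strictness is free over a finite domain by shifting the threshold by $1$) threshold gate $T_{\vec p,\vec q}(\vec x)$; in the $\HNN$ case its weights are polynomially bounded, so it is a $\MAJ$ gate.

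First I would build the bottom layer out of all $|P|\cdot|N|$ such gates $T_{\vec p,\vec q}$. For the $\OR\circ\AND\circ\THR$ form, note that $f(\vec x)=1$ iff there exists some $\vec p\in P$ that is strictly closer to $\vec x$ than \emph{every} $\vec q\in N$; that is,
\[
f(\vec x) = \bigvee_{\vec p\in P}\ \bigwedge_{\vec q\in N} T_{\vec p,\vec q}(\vec x).
\]
One must check this is correct on \emph{all} of $\{0,1\}^n$, not just where the nearest anchor is unique: if some $\vec p$ beats all $\vec q$ strictly then by Definition~\ref{def:HNN} $f(\vec x)=1$; conversely if $f(\vec x)=1$ then such a $\vec p$ exists, so the disjunction fires. (Inputs on which no anchor is a strict winner are outside the domain where $f$ is constrained, so any consistent value is fine — but in fact the formula always picks a definite value, which is harmless.) This circuit has $|P||N|$ gates on the bottom, $|P|$ $\AND$-gates in the middle, and one $\OR$ on top. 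The $\AND\circ\OR\circ\THR$ form is obtained dually: $f(\vec x)=0$ iff some $\vec q\in N$ beats all $\vec p\in P$, i.e. $\neg f(\vec x) = \bigvee_{\vec q\in N}\bigwedge_{\vec p\in P} T_{\vec q,\vec p}(\vec x)$, and negating (using that $\THR$/$\MAJ$ gates are closed under negation, again by the finite-domain shift) gives $f = \bigwedge_{\vec q\in N}\bigvee_{\vec p\in P}\neg T_{\vec q,\vec p}(\vec x)$, an $\AND\circ\OR\circ\THR$ circuit.

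For the size bound $|P||N| + \min\{|P|,|N|\} + 1$: the bottom layer reuses the same $|P||N|$ threshold gates in both orientations (since $\neg T_{\vec p,\vec q}$ is again a threshold gate of the same size, we need only $|P||N|$ of them), the middle layer has $\min\{|P|,|N|\}$ gates (choose the orientation — disjunction over $P$ or over $N$ — that minimizes the middle fan-out), and the top layer has one gate. Counting gates as the circuit size (number of non-input gates) gives exactly $|P||N| + \min\{|P|,|N|\} + 1$. In the Boolean case all bottom gates are $\MAJ$ because the $\pm1$ coefficients and the $\|\vec p\|_2^2 \le n$ constants are polynomially bounded, yielding $\HNN \subseteq \OR\circ\AND\circ\MAJ$ and $\AND\circ\OR\circ\MAJ$.

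\textbf{Main obstacle.} The only genuinely delicate point is handling ties and the strict-versus-weak inequality in Definition~\ref{def:HNN} carefully: a naive reading might produce a circuit that errs on inputs where two anchors are equidistant. The resolution is that such inputs lie outside the promise region of $f$, so an arbitrary consistent extension is acceptable; and over the finite domain $\{0,1\}^n$ we can always convert a strict integer inequality into a weak one (shift the threshold), so every comparison is legitimately a single threshold/majority gate. Verifying that the chosen disjunctive/conjunctive form evaluates correctly on every genuinely-classified input — both the $f=1$ and $f=0$ cases of Definition~\ref{def:HNN} — is the part that needs to be written out, but it is routine once the reformulation $\{\exists \vec p\ \forall \vec q:\ \vec p \text{ beats } \vec q\}$ is in place.
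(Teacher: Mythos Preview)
Your proposal is correct and is precisely the canonical construction the paper has in mind: the paper does not spell out a proof of this theorem at all, merely attributing it to \cite{murphy1992nearest} with a pointer to a short proof in \cite{kilic2023information}. Your argument---pairwise comparison gates $T_{\vec p,\vec q}$ on the bottom, then $\bigvee_{\vec p}\bigwedge_{\vec q}$ (or the dual) on top, with the orientation chosen to make the middle layer have $\min\{|P|,|N|\}$ gates---is exactly that standard construction, and your handling of ties and strict-versus-weak inequalities is the right way to make it precise.
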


Note that the only difference between the circuits for $\HNN$ and $\NN$ is that the first-level threshold gates are guaranteed to have polynomial weights (in the case of $\HNN$). It turns out that the size of the $\HNN$ circuit can be improved (when $n \ll |P| + |N|$).

\begin{lemma}
\label{lem:HNNslice}
\[
\HNN \subseteq \OR \circ \AND \circ \MAJ
\]
In particular, every $\HNN$ representation with $m$ anchors is computed by an $\OR \circ \AND \circ \THR$ circuit with size $(n+1)m + (n+1)|P| + 1$.
\end{lemma}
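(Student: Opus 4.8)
The plan is to refine the canonical depth-three construction of Theorem \ref{th:canonicalHNN} by exploiting the fact that the \emph{query point} $\vec{x}$ ranges over $\{0,1\}^n$, so every anchor distance $\Delta(\vec{x},\vec{p})$ is a linear form whose value on the hypercube lies in $\{0,1,\dots,n\}$. The canonical circuit compares every positive anchor against every negative anchor using one threshold gate per pair, giving $|P||N|$ bottom gates; the improvement comes from decoupling these comparisons through the Hamming weight of the distance. First, for each $\vec{p} \in P$ and each threshold value $t \in \{0,1,\dots,n\}$, build a bottom-level gate $g_{\vec{p},t}(\vec{x}) := \ind[\Delta(\vec{x},\vec{p}) \le t]$, which is an exact/threshold gate with $\pm 1$ weights on the $x_i$'s and an integer constant (hence $\MAJ$-admissible, i.e.\ polynomial weights); similarly for each $\vec{q}\in N$ build $h_{\vec{q},t}(\vec{x}) := \ind[\Delta(\vec{x},\vec{q}) \ge t]$. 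This is $(n+1)(|P|+|N|)$ gates, but one can do slightly better: for a fixed $\vec{p}$ the value $\Delta(\vec{x},\vec{p})$ is determined by $n+1$ threshold gates, and for the negatives we only need the ``$\ge$'' direction, so the count stated in the lemma, $(n+1)m + (n+1)|P| + 1$, is obtained by using $n+1$ gates per anchor for the ``$\le$'' tests and $n+1$ gates per positive anchor for the auxiliary comparisons — I would match the bookkeeping to exactly reproduce the claimed size once the top structure is fixed.

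The middle and top layers encode ``there is a positive anchor $\vec{p}$ that is strictly closer to $\vec{x}$ than every negative anchor.'' Concretely, $f(\vec{x}) = 1$ iff $\bigvee_{\vec{p}\in P}\bigvee_{t=0}^{n}\Big[\Delta(\vec{x},\vec{p}) = t \ \wedge\ \bigwedge_{\vec{q}\in N}\Delta(\vec{x},\vec{q}) > t\Big]$. The event $\Delta(\vec{x},\vec{p}) = t$ is the $\AND$ of $g_{\vec{p},t}$ and the negation of $g_{\vec{p},t-1}$ (or, since negations of threshold gates are threshold gates, just two bottom gates); the event $\bigwedge_{\vec{q}} \Delta(\vec{x},\vec{q}) > t$ is an $\AND$ over the $h_{\vec{q},t+1}$-type gates. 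So the whole expression is an $\OR$ (over the $(n+1)|P|$ pairs $(\vec{p},t)$) of $\AND$s (each of fan-in at most $|N|+2$) of bottom $\MAJ$ gates — a genuine $\OR\circ\AND\circ\MAJ$ circuit. Counting: the bottom layer needs at most $(n+1)m$ gates to realize all the ``$\le t$''/``$\ge t$'' tests across all $m$ anchors, plus an extra $(n+1)|P|$ gates if one prefers to precompute the ``$=t$'' conjuncts as single gates (or to handle the strict/non-strict boundary cleanly), plus the single output $\OR$, giving the stated $(n+1)m + (n+1)|P| + 1$; the correctness is immediate from the tie-breaking convention in Definition \ref{def:HNN} (strict inequality for $\vec{p}$ vs.\ $\vec{q}$), which is exactly why we use ``$>t$'' and not ``$\ge t$'' on the negative side.

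The only subtlety — and the step I would be most careful about — is the handling of ties and the strict-vs-nonstrict boundary between the two layers of threshold tests: we must ensure that when $\Delta(\vec{x},\vec{p}) = t$ we demand $\Delta(\vec{x},\vec{q}) \ge t+1$ for all $\vec{q}$, and that this is faithfully expressible with the chosen bottom gates without an off-by-one error, and moreover that the convention agrees with Definition \ref{def:HNN} on \emph{every} input (including inputs where the ``nearest anchor'' could a priori be ambiguous — here it is not, because a valid $\HNN$ representation is assumed to classify every $\vec{x}$, so for each $\vec{x}$ either some positive anchor strictly beats all negatives or some negative strictly beats all positives). I expect no real obstacle beyond this bookkeeping; the conceptual content is just ``distances take $n+1$ values, so slice by value,'' and the weight bound is automatic because Boolean-anchor distances are linear forms with $\pm 1$ coefficients and integer constants of magnitude at most $n$.
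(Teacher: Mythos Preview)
Your approach is correct and essentially identical to the paper's: slice by the $n+1$ possible values of the Hamming distance and take an $\OR$ over pairs $(\vec{p},i)$. The paper streamlines your construction by dropping the equality test and using the non-strict pair $\ind[\Delta(\vec{x},\vec{p})\le i]\wedge\bigwedge_{\vec{q}\in N}\ind[\Delta(\vec{x},\vec{q})\ge i]$ (same $i$ on both sides) --- this is sound precisely because a valid $\HNN$ representation never has a positive--negative tie --- and that is what gives the exact count you were trying to reconstruct: $(n+1)m$ bottom $\MAJ$ gates, $(n+1)|P|$ middle $\AND$s, and one top $\OR$.
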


\begin{proof}
Note that $\ind[\Delta(\vec{x}, \vec{p}) \leq i]$ is computed by a threshold gate $f^{\vec{p}}_{\leq i}(\vec{x})$ defined by $\vec{w} = \vec{p} - \overline{\vec{p}}$ and $\theta = \Delta(\vec{p}) - i$. (And similarly $\ind[\Delta(\vec{x}, \vec{p}) \geq i]$.) Suppose $f$ has an $\HNN$ representation $P\cup N$. Then,
\[
f(\vec{x}) = \bigvee_{\begin{smallmatrix} i \leq n \\ \vec{p} \in P \end{smallmatrix}}\left( f^{\vec{p}}_{\leq i}(\vec{x}) \wedge \bigwedge_{\vec{q} \in N} f^{\vec{q}}_{\geq i}(\vec{x})\right) 
\]
\end{proof}

Note that the threshold circuits from Theorem \ref{th:canonicalHNN} and Lemma \ref{lem:HNNslice} have size $O(m^2)$ and $O(mn)$ respectively. In fact, the latter circuit can be compressed to a depth-two threshold circuit with exponential weights. 

\begin{theorem}
\label{th:HNNdepth2}
\[
\HNN \subseteq \THR \circ \MAJ.
\]
Namely, every $\HNN$ representation with $m$ anchors is computed by a threshold of $2 n m$ majority gates.
\end{theorem}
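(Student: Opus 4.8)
\textbf{Proof proposal for Theorem~\ref{th:HNNdepth2}.}

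The plan is to take the depth-three circuit $\bigvee_{i \le n, \vec p \in P}\bigl(f^{\vec p}_{\le i}(\vec x)\wedge\bigwedge_{\vec q\in N}f^{\vec q}_{\ge i}(\vec x)\bigr)$ from Lemma~\ref{lem:HNNslice} and collapse the top $\OR\circ\AND$ into a single threshold gate by weighting the bottom majority gates with widely separated scales. Concretely, I would fix, for each pair $(i,\vec p)$ with $i\le n$ and $\vec p\in P$, the ``block'' of majority gates $f^{\vec p}_{\le i}$ together with the $|N|$ gates $f^{\vec q}_{\ge i}$ for $\vec q\in N$; the conjunction of this block should fire exactly one term of the $\OR$. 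The standard trick is to assign a huge weight $M_{i,\vec p}$ (say a power of some base much larger than $n|P|$) to the gate $f^{\vec p}_{\le i}$ and to counter it with a moderate negative contribution from the $\AND$-checking gates so that the block contributes a large positive amount to the top threshold if and only if $f^{\vec p}_{\le i}=1$ and all $f^{\vec q}_{\ge i}=1$.

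The key steps, in order: (1) recall from Lemma~\ref{lem:HNNslice} that $\ind[\Delta(\vec x,\vec p)\le i]$ and $\ind[\Delta(\vec x,\vec p)\ge i]$ are each computed by a single $\MAJ$ gate (polynomial weights since anchors are Boolean), giving $2nm$ majority gates in total (an $f^{\vec p}_{\le i}$ for every $\vec p$ and $i\le n$, and an $f^{\vec q}_{\ge i}$ for every $\vec q$ and $i\le n$; after merging duplicates across the $\OR$ this is $\le 2nm$). (2) For the top gate, take weights of the form: $f^{\vec p}_{\le i}$ gets weight $+2|N|\cdot B^{(i,\vec p)}$ and each $f^{\vec q}_{\ge i}$ (reused across the blocks that need it) gets weight $-2\cdot B^{(i,\vec p)}$, where $B^{(i,\vec p)}$ ranges over distinct powers of a base $b > 2n|P|(|N|+1)$ indexed by $(i,\vec p)$; set the threshold appropriately (e.g.\ $1$, after a constant shift). (3) Verify correctness: if some block $(i,\vec p)$ is ``good'' ($f^{\vec p}_{\le i}=1$, all $f^{\vec q}_{\ge i}=1$), its net contribution is $+2|N|B^{(i,\vec p)} - 2|N|B^{(i,\vec p)} = 0$ in the balanced accounting---so instead one uses an \emph{asymmetric} accounting: weight $f^{\vec p}_{\le i}$ by $+|N|\cdot 2^{c(i,\vec p)}$ and $f^{\vec q}_{\ge i}$ by $-2^{c(i,\vec p)}$ with a distinct scale $2^{c(i,\vec p)}$ per $(i,\vec p)$ chosen so that the scales are separated enough that contributions from different blocks cannot interfere, and the threshold is $0$; a good block contributes $\ge 2^{c(i,\vec p)}$ while a bad block at scale $2^{c(i,\vec p)}$ contributes at most $(|N|-1-|N|)\cdot2^{c(i,\vec p)} < 0$ or is dominated by a lower scale. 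One then checks $f(\vec x)=1$ iff the weighted sum is positive.

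The main obstacle I anticipate is the bookkeeping for step~(3): making the scale separation precise so that the ``noise'' from all bad blocks at scales below a given good block cannot overwhelm the signal, while simultaneously ensuring a good block never gets cancelled by the negative weights of its own $f^{\vec q}_{\ge i}$ gates. This is the familiar geometric-series argument ($\sum_{j<t} 2^j < 2^t$) used to simulate $\AC^0\circ\MAJ$ by $\THR\circ\MAJ$, but one must be careful that the same physical gate $f^{\vec q}_{\ge i}$ is shared among all blocks with the same $i$, so its total weight is a \emph{sum} of the per-block weights $-\sum_{\vec p} 2^{c(i,\vec p)}$; this is fine as long as the scales are chosen so that this aggregated weight still behaves correctly in the threshold sum. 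Once the weights are laid out, the verification reduces to the definition of $\HNN$ via Lemma~\ref{lem:HNNslice} and a routine inequality, and the gate count $2nm$ follows by construction.
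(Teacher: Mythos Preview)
Your weighting scheme is broken at the arithmetic level. With $f^{\vec p}_{\le i}$ weighted $+|N|\cdot 2^{c(i,\vec p)}$ and each $f^{\vec q}_{\ge i}$ weighted $-2^{c(i,\vec p)}$, the contribution of block $(i,\vec p)$ is $2^{c(i,\vec p)}\bigl(|N|\cdot a_{i,p}-b_i\bigr)$ where $a_{i,p}=f^{\vec p}_{\le i}$ and $b_i=\#\{\vec q:f^{\vec q}_{\ge i}=1\}$. A \emph{good} block ($a_{i,p}=1$, $b_i=|N|$) contributes $0$, not $\ge 2^{c(i,\vec p)}$; and a \emph{bad} block with $a_{i,p}=1$ but $b_i<|N|$ contributes a strictly \emph{positive} amount, the opposite of what you claim. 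Even after fixing signs (say, weighting every gate in a block positively and subtracting $|N|\cdot 2^{c(i,\vec p)}$ so the block value is $2^{c(i,\vec p)}(a_{i,p}+b_i-|N|)\in\{-|N|\cdot 2^c,\ldots,2^c\}$), the geometric-scale trick still fails: you have no guarantee that the good block sits at the top scale, and a bad block at a larger scale can contribute $-|N|\cdot 2^{c}$, swamping the $+2^{c'}$ from the good block below it. Collapsing a generic $\OR\circ\AND$ into a single $\THR$ over the same bottom gates is not a routine geometric-series argument; it requires structural information you have not used.

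The paper's proof avoids this entirely by a different device. It introduces, for \emph{every} anchor $\vec p\in P\cup N$ and every $i\le n$, \emph{both} gates $f^{\vec p}_{\le i}$ and $f^{\vec p}_{\ge i}$ (this is where the factor $2$ in $2nm$ comes from), and forms the linear combination $g^{\vec p}_i:=f^{\vec p}_{\le i}+f^{\vec p}_{\ge i}-1=\ind[\Delta(\vec x,\vec p)=i]$, an \emph{exact} distance indicator. For each anchor exactly one $g^{\vec p}_i$ fires, so the top gate can simply be
\[
\ind\!\Bigl[\ \sum_{\vec p\in P,\,i}m^{3(n-i)+1}g^{\vec p}_i(\vec x)\ -\ \sum_{\vec q\in N,\,i}m^{3(n-i)}g^{\vec q}_i(\vec x)\ \ge 0\ \Bigr],
\]
with weights graded by distance so that the nearest anchor's term dominates all others combined. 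The correctness check is then a one-line comparison of powers of $m$, with no block-sharing bookkeeping at all. The missing idea in your proposal is precisely this use of exact-equality indicators; once you have them, there is no $\OR\circ\AND$ to collapse.
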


\begin{proof}
The first level will consist of $2mn$ gates $f^{\vec{p}}_{\leq i}$, $f^{\vec{p}}_{\geq i}$ which output $1$ if and only if $\Delta(\vec{x},\vec{p}) \leq i$ and $\Delta(\vec{x},\vec{p}) \geq i$, respectively, for $1 \leq i \leq n$. Define the sum
\[
g_{i}^p(\vec{x}) := f^{\vec{p}}_{\leq i}(\vec{x}) + f^{\vec{p}}_{\geq i}(\vec{x}) - 1
\]
and note that $g_{i}^p(\vec{x}) =   \ind[\Delta(x,p) = i]$. We can then write the output gate as
\[
h(\vec{x}) = \ind \left( \sum_{p \in P, i \leq n} m^{3(n-i)+1} g_{i}^p(\vec{x}) - \sum_{q \in N, i \leq n} m^{3(n-i)} g_{i}^q(\vec{x}) \geq 0 \right).
\]

If some positive anchor is at distance at most $j$ and all negative anchors are at distance at least $j$ to $\vec{x}$, then 
\[
\sum_{p \in P, i \leq n} m^{3(n-i)+1} g_{i}^p(\vec{x}) \geq m^{3(n-j)+1} \geq \sum_{q \in N, i \leq n} m^{3(n-i)} g_{i}^q(\vec{x}).
\]
Conversely, if some negative anchor is at distance at most $j$ and all positive anchors are at distance at least $j+1$, then 
\[
\sum_{p \in P, i \leq n} m^{3(n-i)+1} g_{i}^p(\vec{x}) \leq m^{3(n-j)-1} < m^{3(n-j)} \leq \sum_{q \in N, i \leq n} m^{3(n-i)} g_{i}^q(\vec{x}).
\]
\end{proof}

\begin{remark}
Theorem \ref{th:HNNdepth2} can be obtained through Theorem \ref{th:mpPTFequalsNN}, as a consequence of the following result derived from \cite{hansen2015polynomial}. We include the direct construction to avoid the slight increase in circuit size.
\end{remark}

\begin{lemma}
\label{lem:mpPTFdepthtwo}
\[
\mpPTF(\poly(n)) \subseteq \THR \circ \MAJ
\]
Every $\mpPTF$ with $\ell$ terms and maximum weight $W$ is computed by a linear threshold of at most $4 \cdot W \ell \log \ell$ majority gates.
\end{lemma}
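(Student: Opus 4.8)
The goal is to show that any $\mpPTF$ with $\ell$ terms and maximum weight $W$ can be computed by a $\THR \circ \MAJ$ circuit with at most $4 W \ell \log \ell$ majority gates on the bottom level. Recall the $\mpPTF$ computes $\ind[\min_{i \le \ell_1} L_i(\vec{x}) \le \min_{j \le \ell_2} R_j(\vec{x})]$, where each linear form has integer coefficients of absolute value at most $W$, hence takes values in a range of size $O(nW)$; but more usefully, each form takes at most $2W n + 1$ distinct integer values, and (after the standard trick of scaling and perturbing) we may assume all the values taken by all the forms are distinct. The plan is to express the event ``$L_i(\vec{x}) = v$'' and ``$R_j(\vec{x}) \le v$'' using majority (polynomial-weight threshold) gates, and then combine these into a single weighted threshold at the top, using a place-value / weighting scheme exactly as in the direct construction of Theorem~\ref{th:HNNdepth2}.

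\textbf{Key steps.} First, I would introduce for each form $L_i$ and each threshold value $t$ a majority gate $\ind[L_i(\vec{x}) \ge t]$ — since the coefficients are bounded by $W$, these are genuine $\MAJ$ gates; one does not need every integer $t$, only the $O(Wn)$ many achievable values, but to get the clean bound one restricts attention to the relevant range and uses that there are $O(W\ell)$ forms overall, each contributing $O(\log \ell)$ ``bits'' worth of comparison gates once we think of encoding the rank of a form's value among a set of size $\le \ell$. Concretely: the $\mpPTF$ fires iff the globally minimal value among all $\ell_1 + \ell_2$ forms is attained by a left-hand form. Sort the (distinct) achievable values and, for each form and each candidate value $v$, build $\ind[L_i(\vec{x}) \le v]$ and $\ind[R_j(\vec{x}) \le v]$ as majority gates. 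Second, I would form indicator gates $g_{i,v} = \ind[L_i(\vec{x}) = v]$ as a difference of two such gates (as in Theorem~\ref{th:HNNdepth2}, $\ind[L_i \le v] + \ind[L_i \ge v] - 1$). Third, the top threshold gate assigns to the gate corresponding to ``$L_i$ attains value $v$ and is the left-minimum'' a weight of the form $+ W^{c(v)}$ and to ``$R_j$ attains $v$ as right-minimum'' a weight $-W^{c(v)}$ with slightly offset exponent, where $c(v)$ is decreasing in $v$, so that the smallest-value contribution dominates; the top gate has exponentially large weights (this is allowed — it is a $\THR$, not a $\MAJ$), and one checks the arithmetic exactly as in Theorem~\ref{th:HNNdepth2}: whichever form is globally smallest swamps all larger-valued contributions. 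The gate count: one counts $O(W)$ relevant threshold levels per form, $\ell$ forms, and a $\log \ell$ factor coming from a binary-search-style encoding of which value is the minimum (rather than a unary sweep over all $O(Wn)$ values), giving $4 W \ell \log \ell$.

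\textbf{Main obstacle.} The delicate point — and where I would spend the most care — is getting the gate count down to $O(W \ell \log \ell)$ rather than the naive $O(W \ell n)$ or $O(W^2 \ell n)$ that a direct unary construction gives. The naive approach has each form compared against all $\Theta(Wn)$ of its achievable values; the $\log \ell$ improvement must come from observing that we only need to identify which form is minimal and what its value is, and the value of the minimal form is determined once we know, for $\log \ell$ suitably chosen threshold levels, on which side each form sits — i.e., a divide-and-conquer over the forms rather than over the value range. Making this encoding precise while keeping the top gate's weighting scheme consistent (each ``decision bit'' must still have a coherent place-value) is the real work. A secondary subtlety is the standard preprocessing to make all form-values distinct and to absorb the $\le$ vs. $<$ distinction (ties won by the left side), which I would handle by multiplying all forms by $\ell_1 + \ell_2$ and adding distinct small offsets, exactly as done in the proof of Lemma~\ref{lem:PTF-to-LDL}; since this blows up weights only by a factor $O(\ell)$, one must check it does not damage the stated bound — it is absorbed into constants after re-examining the counting. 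I expect the whole argument to be a modest elaboration of the Theorem~\ref{th:HNNdepth2} construction combined with the $\LDL$-style value-discretization of Lemma~\ref{lem:PTF-to-LDL}, with the binary-search encoding being the one genuinely new ingredient needed for the $\log \ell$ rather than $n$ dependence.
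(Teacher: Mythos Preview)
Your proposal takes a genuinely different route from the paper, and it has a real gap.

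The paper's proof is a two-line citation: it invokes two results from \cite{hansen2015polynomial}. First, any $\mpPTF$ with $\ell$ terms and maximum weight $W$ can be rewritten as a $\PTF_{1,2}$ (the sign of a polynomial over inputs in $\{1,2\}$) with $\ell$ terms and degree at most $2W\log\ell$. Second, any $\PTF_{1,2}$ with $\ell$ terms and degree $d$ is computed by a linear threshold of at most $2\ell d$ majority gates. Chaining these gives $4W\ell\log\ell$. The entire content --- in particular, the disappearance of $n$ and the appearance of $\log\ell$ --- lives inside Hansen's $\PTF_{1,2}$ machinery: over $\{1,2\}$, a monomial $\prod_i x_i^{a_i}$ evaluates to $2^{\langle \vec{a},\vec{y}\rangle}$ for $\vec{y}\in\{0,1\}^n$, so min-plus expressions become ordinary polynomial sign conditions, and the degree bound (not the value range) controls the circuit size.

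Your direct construction, patterned on Theorem~\ref{th:HNNdepth2}, naturally produces $\Theta(Wn\ell)$ gates, as you correctly note: each of the $\ell$ forms ranges over $\Theta(Wn)$ integer values, and the indicator $\ind[L_i(\vec{x})=v]$ costs two gates per value. The place where your argument breaks down is the claimed reduction to $O(W\ell\log\ell)$. You assert ``$O(W)$ relevant threshold levels per form'' and a ``$\log\ell$ factor coming from a binary-search-style encoding,'' but neither is justified. There is no evident reason only $O(W)$ thresholds suffice when the value range is $\Theta(Wn)$, and a tournament over the $\ell$ forms is not a depth-two object --- pairwise comparisons $\ind[L_i < L_j]$ are $\MAJ$ gates, but aggregating them into ``which form is globally minimal'' requires more than a single threshold on top unless you already know how to linearize the tournament, which is exactly the problem. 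The $\log\ell$ in the target bound is \emph{not} coming from any search over forms; it comes from the algebraic fact that raising to a $\Theta(\log\ell)$ power suffices to separate $\ell$ exponentials, which is what the $\PTF_{1,2}$ representation exploits. Your framework of value-indicator gates does not access that mechanism, so as written it cannot reach the stated bound.
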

\begin{proof}
Let $\PTF_{1,2}$ refer to Boolean functions (over $\{1,2\}$) equal to the sign of an $n$-variate polynomial. \cite{hansen2015polynomial} prove that any $\PTF_{1,2}$ with $\ell$ terms and degree at most $d$ is computed by a linear threshold (with exponential weights) of at most $2 \ell d$ majority gates (replacing $\{1,2\}$ with $\{0,1\}$), and any $\mpPTF$ with $\ell$ terms and maximum weight $W$ can be represented by a $\PTF_{1,2}$ with $\ell$ terms and degree at most $2 W \log \ell$.

\end{proof}

\begin{remark}
It is not hard to see that the circuits constructed in this section are \emph{polynomial-time uniform}; they can be generated by a Turing machine given the set of anchors in polynomial time.
\end{remark}

\end{document}